\newcommand{\ex}[2]{{\ifx&#1& \mathbb{E} \else \underset{#1}{\mathbb{E}} \fi \left[#2\right]}}
\newcommand{\exc}[3]{{\ifx&#1& \mathbb{E} \else \underset{#1}{\mathbb{E}} \fi \left[{#2}\middle|{#3}\right]}}
\newcommand{\pr}[2]{{\ifx&#1& \mathbb{P} \else \underset{#1}{\mathbb{P}} \fi \left[#2\right]}}
\newcommand{\prc}[3]{{\ifx&#1& \mathbb{P} \else \underset{#1}{\mathbb{P}} \fi \left[{#2}\middle|{#3}\right]}}
\newcommand{\var}[2]{{\ifx&#1& \mathbb{V} \else \underset{#1}{\mathbb{V}} \fi \left[#2\right]}}
\newcommand{\dr}[3]{\mathrm{D}_{#1}\left(#2\middle\|#3\right)}
\newcommand{\nope}[1]{}
\newtheorem{theorem}{Theorem}
\newtheorem{lemma}[theorem]{Lemma}
\newtheorem{definition}[theorem]{Definition}
\newtheorem{corollary}[theorem]{Corollary}
\newtheorem{proposition}[theorem]{Proposition}
\newtheorem{remark}[theorem]{Remark}
\newcommand*{\citet}[1]{\AtNextCite{\AtEachCitekey{\defcounter{maxnames}{999}}}\textcite{#1}}
\newcommand*{\citep}[1]{\cite{#1}}
\title{Composition of Differential Privacy \&\\ Privacy Amplification by Subsampling}
\author{\href{http://www.thomas-steinke.net/}{Thomas Steinke}\thanks{Google Research~\dotfill~\texttt{steinke@google.com}}}
\begin{document}
\maketitle

\begin{abstract}
    This chapter is meant to be part of the book ``Differential Privacy for Artificial Intelligence Applications.''
    We give an introduction to the most important property of differential privacy -- composition: running multiple independent analyses on the data of a set of people will still be differentially private as long as each of the analyses is private on its own -- as well as the related topic of privacy amplification by subsampling.
    This chapter introduces the basic concepts and gives proofs of the key results needed to apply these tools in practice.
\end{abstract}

\newpage

\tableofcontents

\newpage

\section{Introduction}
    Our data is subject to many different uses. Many entities will have access to our data, including government agencies, healthcare providers, employers, technology companies, and financial institutions. Those entities will perform many different analyses that involve our data and those analyses will be updated repeatedly over our lifetimes. The greatest risk to privacy is that an attacker will combine multiple pieces of information from the same or different sources and that the combination of these will reveal sensitive details about us.
    Thus we cannot study privacy leakage in a vacuum; it is important that we can reason about the accumulated privacy leakage over multiple independent analyses. 
    
    As a concrete example to keep in mind, consider the following simple differencing attack: Suppose your employer provides healthcare benefits. The employer pays for these benefits and thus may have access to summary statistics like how many employees are currently receiving pre-natal care or currently are being treated for cancer. Your pregnancy or cancer status is highly sensitive information, but intuitively the aggregated count is not sensitive as it is not specific to you. However, this count may be updated on a regular basis and your employer may notice that the count increased on the day you were hired or on the day you took off for a medical appointment. This example shows how multiple pieces of information -- the date of your hire or medical appointment, the count before that date, and the count afterwards -- can be combined to reveal sensitive information about you, despite each piece of information seeming innocuous on its own. Attacks could combine many different statistics from multiple sources and hence we need to be careful to guard against such attacks, which leads us to differential privacy.
    
    Differential privacy has strong composition properties -- if multiple independent analyses are run on our data and each analysis is differentially private on its own, then the combination of these analyses is also differentially private. This property is key to the success of differential privacy. Composition enables building complex differentially private systems out of simple differentially private subroutines.  Composition allows the re-use data over time without fear of a catastrophic privacy failure. And, when multiple entities use the data of the same individuals, they do not need to coordinate to prevent an attacker from learning private details of individuals by combining the information released by those entities.
    To prevent the above differencing attack, we could independently perturb each count to make it differentially private; then taking the difference of two counts would be sufficiently noisy to obscure your pregnancy or cancer status.
    
    Composition is quantitative. The differential privacy guarantee of the overall system will depend on the number of analyses and the privacy parameters that they each satisfy. The exact relationship between these quantities can be complex. There are various composition theorems that give bounds on the overall parameters in terms of the parameters of the parts of the system.
    In this chapter, we will study several composition theorems (including the relevant proofs) and we will also look at some examples that demonstrate how to apply the composition theorems and why we need them.
    
    Composition theorems provide privacy bounds for a given system. A system designer must use composition theorems to design systems that simultaneously give good privacy and good utility (i.e., good statistical accuracy). This process often called ``privacy budgeting'' or ``privacy accounting.'' Intuitively, the system designer has some privacy constraint (i.e., the overall system must satisfy some final privacy guarantee) which can be viewed as analogous to a monetary budget that must be divided amongst the various parts of the system. Composition theorems provide the accounting rules for this budget. Allocating more of the budget to some part of the system makes that part more accurate, but then less budget is available for other parts of the system. Thus the system designer must also make a value judgement about which parts of the system to prioritize. 
    
\section{Basic Composition}\label{sec:basic_composition}
    The simplest composition theorem is what is known as basic composition. This applies to pure $\varepsilon$-DP (although it can be extended to approximate $(\varepsilon,\delta)$-DP). Basic composition says that, if we run $k$ independent $\varepsilon$-DP algorithms, then the composition of these is $k\varepsilon$-DP. More generally, we have the following result.
    
    \begin{theorem}[Basic Composition]\label{thm:basic_composition}
        Let $M_1, M_2, \cdots, M_k : \mathcal{X}^n \to \mathcal{Y}$ be randomized algorithms. Suppose $M_j$ is $\varepsilon_j$-DP for each $j \in [k]$.
        Define $M : \mathcal{X}^n \to \mathcal{Y}^k$ by $M(x)=(M_1(x),M_2(x),\cdots,M_k(x))$, where each algorithm is run independently. Then $M$ is $\varepsilon$-DP for $\varepsilon = \sum_{j=1}^k \varepsilon_j$.
    \end{theorem}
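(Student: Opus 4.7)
The plan is to reduce to the two-algorithm case by induction on $k$: the base case $k=1$ is the hypothesis itself, and the inductive step bundles $M_1,\ldots,M_{k-1}$ into a single $\varepsilon'$-DP mechanism (with $\varepsilon' = \sum_{j<k}\varepsilon_j$) and composes it with $M_k$. So the core of the proof is the two-fold statement: if $A$ is $\alpha$-DP and $B$ is $\beta$-DP and they are run independently on the same input, then $(A,B)$ is $(\alpha+\beta)$-DP.

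For the two-algorithm step, fix neighbours $x \sim x'$ and a measurable event $S$ in the product of the output spaces $\mathcal{Y}_A \times \mathcal{Y}_B$. Writing $S_y = \{z : (y,z) \in S\}$ for each $y \in \mathcal{Y}_A$, independence of $A(x)$ and $B(x)$ gives
\[
\pr{}{(A(x),B(x)) \in S} = \ex{y \sim A(x)}{\pr{}{B(x) \in S_y}}.
\]
The $\beta$-DP guarantee of $B$ applied pointwise in $y$ replaces $\pr{}{B(x) \in S_y}$ by $e^\beta \pr{}{B(x') \in S_y}$. The remaining integrand $f(y) := \pr{}{B(x') \in S_y}$ is then a fixed $[0,1]$-valued measurable function of $y$, so applying $\alpha$-DP of $A$ in its expectation form yields $\ex{y \sim A(x)}{f(y)} \le e^\alpha \ex{y \sim A(x')}{f(y)}$, and the right-hand side equals $e^\alpha \pr{}{(A(x'),B(x')) \in S}$ by independence of $A(x')$ and $B(x')$. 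Multiplying the two factors gives the desired $e^{\alpha+\beta}$ bound.

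The main obstacle is that last step, where I apply DP of $A$ to an expectation of a real-valued function rather than to the probability of a single set. This uses the standard fact that $\varepsilon$-DP is equivalent to $\ex{}{f(M(x))} \le e^\varepsilon \ex{}{f(M(x'))}$ for every measurable $f : \mathcal{Y} \to [0,1]$, which follows from the set-based definition via the layer-cake identity $f(y) = \int_0^1 \mathbf{1}[f(y) > t] \, \mathrm{d}t$ together with Fubini applied to the threshold events $\{y : f(y) > t\}$. In a more expository treatment, the same proof can be written without this detour by assuming discrete outputs (or densities with respect to some dominating measure): one then simply writes $\pr{}{M(x) = y} = \prod_j \pr{}{M_j(x) = y_j}$, bounds each factor pointwise by $e^{\varepsilon_j}$ times the corresponding $x'$ probability, and sums over $y \in S$ to obtain the $e^{\sum_j \varepsilon_j}$ factor in one shot. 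Either way, the essential ingredients are the independence of the runs and the multiplicativity of the per-coordinate density ratios.
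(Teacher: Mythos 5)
Your proof is correct, but it takes a genuinely different route from the paper. The paper works directly with the pointwise (density) form of pure DP: it fixes an output $y \in \mathcal{Y}^k$, factors the likelihood ratio $\frac{\pr{}{M(x)=y}}{\pr{}{M(x')=y}}$ into the product $\prod_{j=1}^k \frac{\pr{}{M_j(x)=y_j}}{\pr{}{M_j(x')=y_j}}$ by independence, and bounds each factor by $e^{\pm\varepsilon_j}$ in one line -- essentially the ``expository'' shortcut you mention at the end. You instead reduce to the two-mechanism case by induction and argue at the level of events: slice $S$ into fibers $S_y$, apply the DP guarantee of the second mechanism conditionally on $y$, and then apply the DP guarantee of the first mechanism in its expectation form for $[0,1]$-valued test functions, which you correctly justify via the layer-cake identity. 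What your approach buys is generality and robustness: it uses only the set-based definition of $\varepsilon$-DP (no densities or discreteness needed), it only requires the one-sided inequality (the other side coming from symmetry of the neighbouring relation, as the paper also notes in a footnote), and the conditional structure of your two-fold step is exactly the argument that extends to \emph{adaptive} composition, where $M_k$ may depend on the earlier outputs -- something the paper only handles later (Proposition \ref{prop:adaptivecomp}) with heavier machinery. What the paper's route buys is brevity and transparency given its pointwise definition of pure DP, and it exhibits the exact factorization of the privacy loss that the rest of the chapter builds on (Theorem \ref{thm:privloss_composition}).
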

    \begin{proof}
        Fix an arbitrary pair of neighbouring datasets $x,x' \in \mathcal{X}^n$ and output $y \in \mathcal{Y}^k$.
        To establish that $M$ is $\varepsilon$-DP, we must show that $e^{-\varepsilon} \le \frac{\pr{}{M(x)=y}}{\pr{}{M(x')=y}} \le e^\varepsilon$. By independence, we have \[\frac{\pr{}{M(x)=y}}{\pr{}{M(x')=y}} = \frac{\prod_{j=1}^k\pr{}{M_j(x)=y_j}}{\prod_{j=1}^k\pr{}{M_j(x')=y_j}} =  \prod_{j=1}^k \frac{\pr{}{M_j(x)=y_j}}{\pr{}{M_j(x')=y_j}} \le \prod_{j=1}^k e^{\varepsilon_j} = e^{\sum_{j=1}^k \varepsilon_j} = e^\varepsilon,\] where the inequality follows from the fact that each $M_j$ is $\varepsilon_j$-DP and, hence, $e^{-\varepsilon_j} \le \frac{\pr{}{M_j(x)=y_j}}{\pr{}{M_j(x')=y_j}} \le e^{\varepsilon_j}$. Similarly, $\prod_{j=1}^k \frac{\pr{}{M_j(x)=y_j}}{\pr{}{M_j(x')=y_j}} \ge \prod_{j=1}^k e^{-\varepsilon_j}$, which completes the proof.
    \end{proof}
    
    Basic composition is already a powerful result, despite its simple proof; it establishes the versatility of differential privacy and allows us to begin reasoning about complex systems in terms of their building blocks. For example, suppose we have $k$ functions $f_1, \cdots, f_k : \mathcal{X}^n \to \mathbb{R}$ each of sensitivity $1$. For each $j \in [k]$, we know that adding $\mathsf{Laplace}(1/\varepsilon)$ noise to the value of $f_j(x)$ satisfies $\varepsilon$-DP. Thus, if we add independent $\mathsf{Laplace}(1/\varepsilon)$ noise to each value $f_j(x)$ for all $j \in [k]$, then basic composition tells us that releasing this vector of $k$ noisy values satisfies $k\varepsilon$-DP. If we want the overall system to be $\varepsilon$-DP, then we should add independent $\mathsf{Laplace}(k/\varepsilon)$ noise to each value $f_j(x)$.
    
    \subsection{Is Basic Composition Optimal?}\label{sec:basic_optimal}
    
    If we want to release $k$ values each of sensitivity $1$ (as above) and have the overall release be $\varepsilon$-DP, then, using basic composition, we can add $\mathsf{Laplace}(k/\varepsilon)$ noise to each value. The variance of the noise for each value is $2k^2/\varepsilon^2$, so the standard deviation is $\sqrt{2} k /\varepsilon$. In other words, the scale of the noise must grow linearly with the number of values $k$ if the overall privacy and each value's sensitivity is fixed. It is natural to wonder whether the scale of the Laplace noise can be reduced by improving the basic composition result. We now show that this is not possible.
    
    For each $j \in [k]$, let $M_j : \mathcal{X}^n \to \mathbb{R}$ be the algorithm that releases $f_j(x)$ with $\mathsf{Laplace}(k/\varepsilon)$ noise added. Let $M : \mathcal{X}^n \to \mathbb{R}^k$ be the composition of these $k$ algorithms. Then $M_j$ is $\varepsilon/k$-DP for each $j \in [k]$ and basic composition tells us that $M$ is $\varepsilon$-DP. The question is whether $M$ satisfies a better DP guarantee than this -- i.e., does $M$ satisfy $\varepsilon_*$-DP for some $\varepsilon_*<\varepsilon$?
    Suppose we have neighbouring datasets $x,x'\in\mathcal{X}^n$ such that $f_j(x) = f_j(x')+1$ for each $j \in [k]$. Let $y=(a,a,\cdots,a) \in \mathbb{R}^k$ for some $a \ge \max_{j=1}^k f_j(x)$.
    Then 
    \begin{align*}
        \frac{\pr{}{M(x)=y}}{\pr{}{M(x')=y}} &= \frac{\prod_{j=1}^k \pr{}{f_j(x)+\mathsf{Laplace}(k/\varepsilon)=y_j}}{\prod_{j=1}^k \pr{}{f_j(x')+\mathsf{Laplace}(k/\varepsilon)=y_j}} \\
         &= \prod_{j=1}^k \frac{\pr{}{\mathsf{Laplace}(k/\varepsilon)=y_j-f_j(x)}}{\pr{}{\mathsf{Laplace}(k/\varepsilon)=y_j-f_j(x')}} \\
         &= \prod_{j=1}^k \frac{\frac{\varepsilon}{2k}\exp\left(-\frac{\varepsilon}{k} |y_j-f_j(x)| \right)}{\frac{\varepsilon}{2k}\exp\left(-\frac{\varepsilon}{k} |y_j-f_j(x')| \right)} \\
         &= \prod_{j=1}^k \frac{\exp\left(-\frac{\varepsilon}{k} (y_j-f_j(x)) \right)}{\exp\left(-\frac{\varepsilon}{k} (y_j-f_j(x')) \right)} \tag{$y_j \ge f_j(x)$ and $y_j \ge f_j(x')$} \\
         &= \prod_{j=1}^k \exp\left(\frac{\varepsilon}{k}\left(f_j(x)-f_j(x')\right)\right) \\
         &= \exp\left( \frac{\varepsilon}{k} \sum_{j=1}^k \left(f_j(x)-f_j(x')\right)\right)= e^\varepsilon.
    \end{align*}
    This shows that basic composition is optimal. For this example, we cannot prove a better guarantee than what is given by basic composition.
    
    Is there some other way to improve upon basic composition that circumvents this example? Note that we assumed that there are neighbouring datasets $x,x'\in\mathcal{X}^n$ such that $f_j(x) = f_j(x')+1$ for each $j \in [k]$. In some settings, no such worst case datasets exist. In that case, instead of scaling the noise linearly with $k$, we can scale the Laplace noise according to the $\ell_1$ sensitivity $\Delta_1 := \sup_{x,x' \in \mathcal{X}^n \atop \text{neighbouring}} \sum_{j=1}^k |f_j(x)-f_j(x')|$. 
    
    Instead of adding assumptions to the problem, we will look more closely at the example above.
    We showed that there exists some output $y \in \mathbb{R}^d$ such that $\frac{\pr{}{M(x)=y}}{\pr{}{M(x')=y}} = e^\varepsilon$.
    However, such outputs $y$ are very rare, as we require $y_j \ge \max\{f_j(x),f_j(x')\}$ for each $j \in [k]$ where $y_j = f_j(x) + \mathsf{Laplace}(k/\varepsilon)$. Thus, in order to observe an output $y$ such that the likelihood ratio is maximal, all of the $k$ Laplace noise samples must be positive, which happens with probability $2^{-k}$. 
    The fact that outputs $y$ with maximal likelihood ratio are exceedingly rare turns out to be a general phenomenon and not specific to the example above. %
    
    Can we improve on basic composition if we only ask for a high probability bound? That is, instead of demanding $\frac{\pr{}{M(x)=y}}{\pr{}{M(x')=y}} \le e^{\varepsilon_*}$ for all $y \in \mathcal{Y}$, we demand $\pr{Y \gets M(x)}{\frac{\pr{}{M(x)=Y}}{\pr{}{M(x')=Y}} \le e^{\varepsilon_*}} \ge 1-\delta$ for some $0 < \delta \ll 1$. Can we prove a better bound $\varepsilon_* < \varepsilon$ in this relaxed setting? The answer turns out to be yes.
    
    The limitation of pure $\varepsilon$-DP is that events with tiny probability -- which are negligible in real-world applications -- can dominate the privacy analysis. This motivates us to move to relaxed notions of differential privacy, such as approximate $(\varepsilon,\delta)$-DP and concentrated DP, which are less sensitive to low probability events. In particular, these relaxed notions of differential privacy allow us to prove quantitatively better composition theorems. The rest of this chapter develops this direction further.
    
\section{Privacy Loss Distributions}
    Qualitatively, an algorithm $M : \mathcal{X}^n \to \mathcal{Y}$ is differentially private if, for all neighbouring datasets $x,x' \in \mathcal{X}^n$, the output distributions $M(x)$ and $M(x')$ are ``indistinguishable'' or ``close.''
    The key question is how do we quantify the closeness or indistinguishability of a pair of distributions?
    
    Pure DP (a.k.a.~pointwise DP) \cite{dwork2006calibrating} uniformly bounds the likelihood ratio -- $\frac{\pr{}{M(x)=y}}{\pr{}{M(x')=y}} \le e^{\varepsilon}$ for all $y \in \mathcal{Y}$.
    As discussed at the end of the section on basic composition (\S\ref{sec:basic_composition}), this can be too strong as the outputs $y$ that maximize this likelihood ratio may be very rare.
    
    \newcommand{\tvd}[2]{\mathrm{d}_{\text{TV}}\left(#1,#2\right)}
    We could also consider the total variation distance (a.k.a.~statistical distance): \[\tvd{M(x)}{M(x')} := \sup_{S \subset \mathcal{Y}} \left( \pr{}{M(x) \in S} - \pr{}{M(x') \in S} \right).\]
    Another option would be the KL divergence (a.k.a.~relative entropy).
    Both TV distance and KL divergence turn out to give poor privacy-utility tradeoffs; that is, to rule out bad algorithms $M$, we must set these parameters very small, but that also rules out all the good algorithms. Intuitively, both TV and KL are not sensitive enough to low-probability bad events (whereas pure DP is too sensitive). We need to introduce a parameter ($\delta$) to determine what level of low probability events we can ignore.
    
    Approximate $(\varepsilon,\delta)$-DP \cite{dwork2006our} is a combination of pure $\varepsilon$-DP and $\delta$ TV distance. Specifically, $M$ is $(\varepsilon,\delta)$-DP if, for all neighbouring datasets $x,x' \in \mathcal{X}^n$ and all measurable $S \subset \mathcal{Y}$, $\pr{}{M(x) \in S} \le e^\varepsilon \cdot \pr{}{M(x') \in S} + \delta$. Intuitively, $(\varepsilon,\delta)$-DP is like $\varepsilon$-DP except we can ignore events with probability $\le \delta$. That is, $\delta$ represents a failure probability, so it should be small (e.g., $\delta \le 10^{-6}$), while $\varepsilon$ can be larger (e.g., $\varepsilon \approx 1$); having two parameters with very different values allows us to circumvent the limitations of either pure DP or TV distance as a similarity measure.
    
    All of these options for quantifying indistinguishability can be viewed from the perspective of the privacy loss distribution. The privacy loss distribution also turns out to be essential to the analysis of composition. Approximate $(\varepsilon,\delta)$-DP bounds are usually proved via the privacy loss distribution.
    
    We now formally define the privacy loss distribution and relate it to the various quantities we have considered. Then (in \S\ref{sec:privloss_gauss}) we will calculate the privacy loss distribution corresponding to the Gaussian mechanism, which is a particularly nice example. In the next subsection (\S\ref{sec:statistical_perspective}), we explain how the privacy loss distribution arises naturally via statistical hypothesis testing. To conclude this section (\S\ref{sec:privloss_adp}), we precisely relate the privacy loss back to approximate $(\varepsilon,\delta)$-DP. In the next section (\S\ref{sec:comp_privloss}), we will use the privacy loss distribution as a tool to analyze composition.
    
    \newcommand{\privloss}[2]{\mathsf{PrivLoss}\left({#1}\middle\|{#2}\right)}
    \newcommand{\llr}[2]{f_{\left.{#1}\middle\|{#2}\right.}}
    \begin{definition}[Privacy Loss Distribution]\label{defn:priv_loss}
        Let $P$ and $Q$ be two probability distributions on $\mathcal{Y}$. Define $\llr{P}{Q} : \mathcal{Y} \to \mathbb{R}$ by $\llr{P}{Q}(y) = \log(P(y)/Q(y))$.\footnote{The function $\llr{P}{Q}$ is called the log likelihood ratio of $P$ with respect to $Q$. Formally, $\llr{P}{Q}$ is the natural logarithm of the Radon-Nikodym derivative of $P$ with respect to $Q$. This function is defined by the property that $P(S) = \ex{Y \gets P}{\mathbb{I}[Y \in S]} = \ex{Y \gets Q}{e^{\llr{P}{Q}(Y)} \cdot \mathbb{I}[Y \in S]}$ for all measurable $S \subset \mathcal{Y}$. For this to exist, we must assume that $P$ and $Q$ have the same sigma-algebra and that $P$ is absolutely continuous with respect to $Q$ and vice versa -- i.e., $\forall S \subset \mathcal{Y} ~~~ Q(S) = 0 \iff P(S) = 0$.}
        The privacy loss random variable is given by $Z = \llr{P}{Q}(Y)$ for $Y \gets P$.
        The distribution of $Z$ is denoted $\privloss{P}{Q}$.
    \end{definition}
    In the context of differential privacy, the distributions $P=M(x)$ and $Q=M(x')$ correspond to the outputs of the algorithm $M$ on neighbouring inputs $x,x'$. Successfully distinguishing these distributions corresponds to learning some fact about an individual person's data.
    The randomness of the privacy loss random variable $Z$ comes from the randomness of the algorithm $M$ (e.g., added noise). 
    Intuitively, the privacy loss tells us which input ($x$ or $x'$) is more likely given the observed output ($Y \gets M(\cdot)$). If $Z>0$, then the hypothesis $Y \gets P = M(x)$ explains the observed output better than the hypothesis $Y \gets Q = M(x')$ and vice versa. The magnitude of the privacy loss $Z$ indicates how strong the evidence for this conclusion is. If $Z=0$, both hypotheses explain the output equally well, but, if $Z \to \infty$, then we can be nearly certain that the output came from $P$, rather than $Q$. A very negative privacy loss $Z \ll 0$ means that the observed output $Y \gets P$ strongly supports the wrong hypothesis (i.e., $Y \gets Q$).
    
    As long as the privacy loss distribution is well-defined,\footnote{The privacy loss distribution is not well-defined if absolute continuity fails to hold. Intuitively, this corresponds to the privacy loss being infinite. We can extend most of these definitions to allow for an infinite privacy loss. For simplicity, we do not delve into these issues.} we can easily express almost all the quantities of interest in terms of it: 
    \begin{itemize}
        \item Pure $\varepsilon$-DP of $M$ is equivalent to demanding that $\pr{Z \gets \privloss{M(x)}{M(x')}}{Z \le \varepsilon} = 1$ for all neighbouring $x,x'$.\footnote{Note that, by the symmetry of the neighbouring relation (i.e., if $x,x'$ are neighbouring datasets then $x',x$ are also neighbours), we also have $\pr{Z \gets \privloss{M(x)}{M(x')}}{Z \ge -\varepsilon} = 1$ as a consequence of $\pr{Z' \gets \privloss{M(x')}{M(x)}}{Z' \le \varepsilon} = 1$.}
        \item The KL divergence is the expectation of the privacy loss: $\dr{1}{P}{Q} := \ex{Z \gets \privloss{P}{Q}}{Z}$.\footnote{The expectation of the privacy loss is always non-negative. Intuitively, this is because we take the expectation of the log likelihood ratio $\llr{P}{Q}(Y)$ with respect to $Y \gets P$ -- i.e., the true answer is $P$, so on average the log likelihood ratio should point towards the correct answer.}
        \item The TV distance is given by \[\tvd{P}{Q} = \ex{Z \gets \privloss{P}{Q}}{\max\{0,1-\exp(-Z)\}} = \frac12 \ex{Z \gets \privloss{P}{Q}}{\left|1-\exp(-Z)\right|}.\]
        \item Approximate $(\varepsilon,\delta)$-DP of $M$ is implied by $\pr{Z \gets \privloss{M(x)}{M(x')}}{Z\le\varepsilon}\ge1-\delta$ for all neighbouring $x,x'$. So we should think of approximate DP as a tail bound on the privacy loss. 
        To be precise, $(\varepsilon,\delta)$-DP of $M$ is equivalent to \[\ex{Z \gets \privloss{M(x)}{M(x')}}{\max\{0,1-\exp(\varepsilon-Z)\}}\le\delta\] for all neighbouring $x,x'$. (See Proposition \ref{prop:privloss_adp}.)
    \end{itemize}
    
    \subsection{Privacy Loss of Gaussian Noise Addition}\label{sec:privloss_gauss}
    
    As an example, we will work out the privacy loss distribution corresponding to the addition of Gaussian noise to a bounded-sensitivity query. This example is particularly clean, as the privacy loss distribution is also a Gaussian, and it will turn out to be central to the story of composition.
    
    \begin{proposition}[Privacy Loss Distribution of Gaussian]\label{prop:gauss_privloss}
        Let $P=\mathcal{N}(\mu,\sigma^2)$ and $Q=\mathcal{N}(\mu',\sigma^2)$. Then $\privloss{P}{Q} = \mathcal{N}(\rho,2\rho)$ for $\rho = \frac{(\mu-\mu')^2}{2\sigma^2}$.
    \end{proposition}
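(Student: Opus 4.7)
My plan is to compute the log density ratio pointwise and then observe that it becomes an affine function of the sample, so that applying the transformation to the Gaussian $Y\sim P$ immediately produces another Gaussian whose mean and variance can be read off.

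First I would write the two Gaussian densities and take the log ratio $\llr{P}{Q}(y) = \log(P(y)/Q(y))$. The normalizing constants $1/\sqrt{2\pi\sigma^2}$ are identical and cancel, so $\llr{P}{Q}(y) = \tfrac{1}{2\sigma^2}\bigl[(y-\mu')^2 - (y-\mu)^2\bigr]$. The key algebraic observation is that the $y^2$ terms cancel: factoring the difference of squares gives $\llr{P}{Q}(y) = \frac{(\mu-\mu')(2y - \mu - \mu')}{2\sigma^2}$, which is affine in $y$.

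Next I would substitute $Y\sim P = \mathcal{N}(\mu,\sigma^2)$, writing $Y = \mu + \sigma W$ for $W\sim\mathcal{N}(0,1)$. Plugging this in and simplifying yields
\[
Z = \llr{P}{Q}(Y) = \frac{(\mu-\mu')^2}{2\sigma^2} + \frac{\mu-\mu'}{\sigma}\,W = \rho + \frac{\mu-\mu'}{\sigma}\,W,
\]
so $Z$ is an affine function of a standard normal and is therefore Gaussian. Its mean is $\rho$ and its variance is $((\mu-\mu')/\sigma)^2 = 2\rho$, giving $\privloss{P}{Q} = \mathcal{N}(\rho, 2\rho)$ as claimed.

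There is no real obstacle here beyond the algebra: the whole proof hinges on the cancellation of the quadratic term in the log ratio, which is what makes Gaussians so convenient for differential privacy. The only thing worth being a little careful about is that $\mu-\mu'$ may be negative, but because $W$ is symmetric about zero this only affects the sign of the scaling and not the resulting distribution, so one only needs the square of the coefficient, which is unambiguously $2\rho$.
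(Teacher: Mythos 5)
Your proposal is correct and follows essentially the same route as the paper's proof: compute the log likelihood ratio, observe that the quadratic terms cancel leaving an affine function of $y$, and conclude that the privacy loss is Gaussian with mean $\rho$ and variance $2\rho$. Writing $Y=\mu+\sigma W$ rather than reading off the mean and variance by linearity of expectation is only a cosmetic difference.
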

    \begin{proof}
        We have $P(y) = \frac{1}{\sqrt{2\pi\sigma^2}} \exp\left(-\frac{(y-\mu)^2}{2\sigma^2}\right)$ and $Q(y) = \frac{1}{\sqrt{2\pi\sigma^2}} \exp\left(-\frac{(y-\mu')^2}{2\sigma^2}\right)$. Thus the log likelihood ratio is
        \begin{align*}
            \llr{P}{Q}(y) &= \log\left(\frac{P(y)}{Q(y)}\right) \\
            &= \log\left(\frac{\frac{1}{\sqrt{2\pi\sigma^2}} \exp\left(-\frac{(y-\mu)^2}{2\sigma^2}\right)}{\frac{1}{\sqrt{2\pi\sigma^2}} \exp\left(-\frac{(y-\mu')^2}{2\sigma^2}\right)}\right) \\
            &= - \frac{(y-\mu)^2}{2\sigma^2} + \frac{(y-\mu')^2}{2\sigma^2} \\
            &= \frac{(y^2-2\mu'y+\mu'^2)-(y^2 - 2\mu y + \mu^2)}{2\sigma^2} \\
            &= \frac{2(\mu-\mu')y - \mu^2 + \mu'^2}{2\sigma^2} \\
            &= \frac{(\mu-\mu')(2y -\mu-\mu')}{2\sigma^2}.
        \end{align*}
        The log likelihood ratio $\llr{P}{Q}$ is an affine linear function. Thus the privacy loss random variable $Z = \llr{P}{Q}(Y)$ for $Y \gets P = \mathcal{N}(\mu,\sigma^2)$ will also follow a Gaussian distribution.
        Specifically, $\ex{}{Y}=\mu$, so \[\ex{}{Z} = \frac{(\mu-\mu')(2\ex{}{Y}-\mu-\mu')}{2\sigma^2} = \frac{(\mu-\mu')^2}{2\sigma^2} = \rho\]
        and, similarly, $\var{}{Y}=\sigma^2$, so \[\var{}{Z} = \frac{((2(\mu-\mu'))^2}{(2\sigma^2)^2}\cdot\var{}{Y} = \frac{(\mu-\mu')^2}{\sigma^2} = 2\rho.\]
    \end{proof}
    
    To relate Proposition \ref{prop:gauss_privloss} to the standard Gaussian mechanism $M : \mathcal{X}^n \to \mathbb{R}$, recall that $M(x) = \mathcal{N}(q(x),\sigma^2)$, where $q$ is a sensitivity-$\Delta$ query -- i.e., $|q(x)-q(x')| \le \Delta$ for all neighbouring datasets $x,x' \in \mathcal{X}^n$. Thus, for neighbouring datasets $x,x'$, we have $\privloss{M(x)}{M(x')} = \mathcal{N}(\rho,2\rho)$ for some $\rho \le \frac{\Delta^2}{2\sigma^2}$.
    
    The privacy loss of the Gaussian mechanism is unbounded; thus it does not satisfy pure $\varepsilon$-DP. However, the Gaussian distribution is highly concentrated, so we can say that with high probability the privacy loss is not too large. This is the basis of the privacy guarantee of the Gaussian mechanism.
    
    \subsection{Statistical Hypothesis Testing Perspective}\label{sec:statistical_perspective}
    
    To formally quantify differential privacy, we must measure the closeness or indistinguishability of the distributions $P=M(x)$ and $Q=M(x')$ corresponding to the outputs of the algorithm $M$ on neighbouring inputs $x,x'$. 
    Distinguishing a pair of distributions is precisely the problem of (simple) hypothesis testing in the field of statistical inference.
    Thus it is natural to look at hypothesis testing tools to quantify the (in)distinguishability of a pair of distributions.
    
    In the language of hypothesis testing, the two distributions $P$ and $Q$ would be the null hypothesis and the alternate hypothesis, which correspond to a positive or negative example. We are given a sample $Y$ drawn from one of the two distributions and our task is to determine which.
    Needless to say, there is, in general, no hypothesis test that perfectly distinguishes the two distributions and, when choosing a hypothesis test, we face a non-trivial tradeoff between false positives and false negatives. There are many different ways to measure how good a given hypothesis test is.
    
    For example, we could measure the accuracy of the hypothesis test evenly averaged over the two distributions. In this case, given the sample $Y$, an optimal test chooses $P$ if $P(Y) \ge Q(Y)$ and otherwise chooses $Q$; the accuracy of this test is \[ \frac12 \pr{Y \gets P}{P(Y) \ge Q(Y)} + \frac12 \pr{Y \gets Q}{P(Y) < Q(Y)}  = \frac12 + \frac12\tvd{P}{Q}.\] This measure of accuracy thus corresponds to TV distance. The greater the TV distance between the distributions, the more accurate this test is. However, as we mentioned earlier, TV distance does not yield good privacy-utility tradeoffs. Intuitively, the problem is that this hypothesis test doesn't care about how confident we are. That is, the test only asks whether $P(Y) \ge Q(Y)$, but not how big the difference or ratio is. Hence we want a more refined measure of accuracy that does not count false positives and false negatives equally.
    
    Regardless of how we measure how good the hypothesis test is, there is an optimal test statistic, namely the log likelihood ratio. This test statistic gives a real number and thresholding that value yields a binary hypothesis test; \emph{any} binary hypothesis test is dominated by some value of the threshold. In other words, the tradeoff between false positives and false negatives reduces to picking a threshold. This remarkable -- yet simple -- fact is established by the Neyman-Pearson lemma:
    \begin{lemma}[Neyman-Pearson Lemma {\cite{neyman1933ix}}]
        Fix distributions $P$ and $Q$ on $\mathcal{Y}$ and define the log-likelihood ratio test statistic $\llr{P}{Q} : \mathcal{Y} \to \mathbb{R}$ by $\llr{P}{Q}(y)=\log\left(\frac{P(y)}{Q(y)}\right)$.
        Let $T : \mathcal{Y} \to \{P,Q\}$ be any (possibly randomized) test.
        Then there exists some $t \in \mathbb{R}$ such that \[\pr{Y \gets P}{T(Y)=P} \le \pr{Y \gets P}{\llr{P}{Q}(Y) \ge t} ~~~~ \text{ and } ~~~~ \pr{Y \gets Q}{T(Y)=Q} \le \pr{Y \gets Q}{\llr{P}{Q}(Y) \le t}.\]
    \end{lemma}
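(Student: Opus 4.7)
The plan is to compare the test $T$, encoded as its acceptance function $\phi(y) := \pr{}{T(y)=P} \in [0,1]$, against the strict likelihood ratio test $\phi_t^{>}(y) := \mathbb{I}[\llr{P}{Q}(y) > t]$ at a carefully chosen threshold $t^*$. The first step is the Neyman--Pearson master inequality
\[
\ex{Y \gets P}{\phi_t^{>}(Y) - \phi(Y)} \;\ge\; e^t \cdot \ex{Y \gets Q}{\phi_t^{>}(Y) - \phi(Y)},
\]
valid for every $t \in \mathbb{R}$. This is obtained by integrating the pointwise identity $(\phi_t^{>}(y) - \phi(y))(P(y) - e^t Q(y)) \ge 0$, which in turn holds by a case split on $\llr{P}{Q}(y)$ versus $t$: where $\llr{P}{Q}(y) > t$ both factors are non-negative (since $\phi_t^{>}(y) = 1 \ge \phi(y)$ and $P(y) > e^t Q(y)$), where $\llr{P}{Q}(y) < t$ both factors are non-positive, and where $\llr{P}{Q}(y) = t$ the second factor vanishes.

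Next, I would select the threshold. Let $\beta := \pr{Y \gets P}{T(Y)=P}$ and $h(s) := \pr{Y \gets P}{\llr{P}{Q}(Y) \ge s}$. The function $h$ is non-increasing in $s$ and left-continuous (by continuity of $P$ applied to the decreasing family $\{\llr{P}{Q} \ge s_n\}$ as $s_n \uparrow s$, whose intersection is $\{\llr{P}{Q} \ge s\}$). Set $t^* := \sup\{s \in \mathbb{R} : h(s) \ge \beta\}$; then $h(t^*) \ge \beta$ (this is already the first required inequality), and $h(s) < \beta$ for every $s > t^*$, so $\pr{Y \gets P}{\llr{P}{Q}(Y) > t^*} = \lim_{s \downarrow t^*} h(s) \le \beta$.

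Feeding this last bound into the master inequality at $t = t^*$ makes its left-hand side non-positive, which forces $\ex{Y \gets Q}{\phi_{t^*}^{>}(Y)} \le \ex{Y \gets Q}{\phi(Y)} = \pr{Y \gets Q}{T(Y)=P}$; taking complements under $Q$ yields the second required inequality $\pr{Y \gets Q}{T(Y)=Q} \le \pr{Y \gets Q}{\llr{P}{Q}(Y) \le t^*}$. The main subtlety, and the reason the argument uses the strict ``$>$'' test rather than ``$\ge$'', is that the distribution of $\llr{P}{Q}(Y)$ under $P$ may have an atom at $t^*$, so no single deterministic threshold test matches $\beta$ exactly; defining $t^*$ as a supremum and combining left-continuity of $h$ on the $P$-side with the strict ``$>$'' test on the $Q$-side bridges this gap at a common threshold.
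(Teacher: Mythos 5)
The paper never proves this lemma -- it is stated with a citation to Neyman and Pearson -- so there is no in-paper argument to compare against; judged on its own, your proof is correct and essentially the classical Neyman--Pearson comparison argument, done carefully. The pointwise inequality $(\phi_t^{>}(y)-\phi(y))(P(y)-e^t Q(y))\ge 0$ is valid wherever $\llr{P}{Q}$ is defined (i.e., almost everywhere under the absolute-continuity convention of Definition \ref{defn:priv_loss}), and integrating it gives your master inequality. The threshold selection is the step most textbook sketches gloss over, and you handle it correctly: monotonicity plus left-continuity of $h(s)=\pr{Y \gets P}{\llr{P}{Q}(Y)\ge s}$ gives $h(t^*)\ge\beta$, which is the first claimed inequality, while the definition of the supremum and continuity from below give $\pr{Y \gets P}{\llr{P}{Q}(Y)> t^*}\le\beta$; plugging that into the master inequality at $t=t^*$ and taking complements under $Q$ yields the second inequality. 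Your use of the strict test $\phi^{>}_{t^*}$ on the $Q$-side paired with the non-strict event on the $P$-side is exactly what absorbs a possible atom of the privacy loss at $t^*$ without resorting to a randomized threshold, so the proof is complete.

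One small caveat, which is an imprecision of the lemma statement rather than a flaw in your argument: if $\beta=\pr{Y \gets P}{T(Y)=P}\in\{0,1\}$ and the log-likelihood ratio is unbounded (e.g., Gaussian $P,Q$ with the constant test $T\equiv P$), then $\{s: h(s)\ge\beta\}$ is empty or unbounded and your $t^*$ degenerates to $\pm\infty$ -- but in that case no finite $t$ satisfies the lemma as stated either, and both the statement and your proof go through verbatim over the extended reals. It would be worth one sentence noting that for $\beta\in(0,1)$ the set $\{s: h(s)\ge\beta\}$ is nonempty and bounded above (since $h(s)\to 1$ as $s\to-\infty$ and $h(s)\to 0$ as $s\to+\infty$), so that $t^*$ is indeed a real number in the non-degenerate cases.
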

    
    How is this related to the privacy loss distribution? The test statistic $Z=\llr{P}{Q}(Y)$ under the hypothesis $Y \gets P$ is precisely the privacy loss random variable $Z \gets \privloss{P}{Q}$. 
    Thus the Neyman-Pearson lemma tells us that the privacy loss distribution $\privloss{P}{Q}$ captures everything we need to know about distinguishing $P$ from $Q$.
    
    Note that the Neyman-Pearson lemma also references the test statistic $\llr{P}{Q}(Y)$ under the hypothesis $Y \gets Q$. This is fundamentally not that different from the privacy loss. There are two ways we can relate this quantity back to the usual privacy loss: 
    First, we can relate it to $\privloss{Q}{P}$ and this distribution is something we should be able to handle due to the symmetry of differential privacy guarantees.
    \begin{remark}\label{rem:dual_privloss}
        Fix distributions $P$ and $Q$ on $\mathcal{Y}$ such that the log likelihood ratio $\llr{P}{Q}(y)=\log\left(\frac{P(y)}{Q(y)}\right)$ is well-defined for all $y \in \mathcal{Y}$.
        Since $\llr{P}{Q}(y)=-\llr{Q}{P}(y)$ for all $y \in \mathcal{Y}$, if $Z \gets \privloss{Q}{P}$, then $-Z$ follows the distribution of $\llr{P}{Q}(Y)$ under the hypothesis $Y \gets Q$.
    \end{remark}
    Second, if we need to compute an expectation of some function $g$ of $\llr{P}{Q}(Y)$ under the hypothesis $Y \gets Q$, then we can still express this in terms of the privacy loss $\privloss{P}{Q}$: 
    \begin{lemma}[Change of Distribution for Privacy Loss]\label{lem:dual_privloss}
        Fix distributions $P$ and $Q$ on $\mathcal{Y}$ such that the log likelihood ratio $\llr{P}{Q}(y)=\log\left(\frac{P(y)}{Q(y)}\right)$ is well-defined for all $y \in \mathcal{Y}$.
        Let $g : \mathbb{R} \to \mathbb{R}$ be measurable.
        Then
        \[\ex{Y \gets Q}{g(\llr{P}{Q}(Y))} = \ex{Z \gets \privloss{P}{Q}}{g(Z) \cdot e^{-Z}}.\]
    \end{lemma}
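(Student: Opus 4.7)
The plan is to perform a straightforward change of measure from $Q$ to $P$ using the Radon--Nikodym derivative, which (as the footnote to Definition~\ref{defn:priv_loss} spells out) is exactly $e^{\llr{P}{Q}}$. The key identity I will rely on is the ``dual'' version of that footnote: for every measurable $S \subset \mathcal{Y}$,
\[
Q(S) \;=\; \ex{Y \gets P}{e^{-\llr{P}{Q}(Y)} \cdot \mathbb{I}[Y \in S]},
\]
which follows by applying the footnote's identity to the pair $(Q,P)$ and using Remark~\ref{rem:dual_privloss} to convert $\llr{Q}{P}$ into $-\llr{P}{Q}$. Equivalently, $\mathrm{d}Q/\mathrm{d}P = e^{-\llr{P}{Q}}$, which is well-defined because absolute continuity is assumed in both directions.

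From there the proof is essentially one line. First I would extend the above identity from indicator functions to arbitrary measurable nonneg\-ative functions by the usual monotone-class / monotone convergence argument, obtaining
\[
\ex{Y \gets Q}{h(Y)} \;=\; \ex{Y \gets P}{h(Y) \cdot e^{-\llr{P}{Q}(Y)}}
\]
for measurable $h : \mathcal{Y} \to \mathbb{R}$ (assuming either side exists, the other does too and they agree). Then I instantiate $h(Y) = g(\llr{P}{Q}(Y))$, so that the left-hand side is the target expectation and the right-hand side becomes
\[
\ex{Y \gets P}{g(\llr{P}{Q}(Y)) \cdot e^{-\llr{P}{Q}(Y)}}.
\]

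Finally, I rewrite this expectation under $P$ in terms of the privacy loss random variable itself: by Definition~\ref{defn:priv_loss}, if $Y \gets P$ then $Z := \llr{P}{Q}(Y)$ has distribution $\privloss{P}{Q}$, so for any measurable $\varphi$ we have $\ex{Y \gets P}{\varphi(\llr{P}{Q}(Y))} = \ex{Z \gets \privloss{P}{Q}}{\varphi(Z)}$. Applying this with $\varphi(z) = g(z) e^{-z}$ yields exactly the right-hand side of the lemma, completing the proof.

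The only potentially subtle step is the change of measure --- justifying that $\mathrm{d}Q/\mathrm{d}P = e^{-\llr{P}{Q}}$ and extending from sets to measurable functions --- but this is purely standard measure theory once absolute continuity is in force, which the hypothesis of the lemma (together with the standing convention in Definition~\ref{defn:priv_loss}) guarantees. No additional facts about differential privacy are needed.
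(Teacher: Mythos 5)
Your proposal is correct and is essentially the paper's argument: both proofs are a single change of measure between $P$ and $Q$ via the Radon--Nikodym derivative $e^{\llr{P}{Q}}$, followed by identifying the law of $\llr{P}{Q}(Y)$ under $Y \gets P$ with $\privloss{P}{Q}$. The only cosmetic difference is that the paper substitutes $h(y)=g(\llr{P}{Q}(y))e^{-\llr{P}{Q}(y)}$ into the identity $\ex{Y \gets P}{h(Y)}=\ex{Y \gets Q}{h(Y)e^{\llr{P}{Q}(Y)}}$, while you use the equivalent reciprocal identity $\ex{Y \gets Q}{h(Y)}=\ex{Y \gets P}{h(Y)e^{-\llr{P}{Q}(Y)}}$ with $h=g\circ\llr{P}{Q}$.
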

    \begin{proof}
        By the definition of the log likelihood ratio (see Definition \ref{defn:priv_loss}), we have $\ex{Y \gets P}{h(Y)} = \ex{Y \gets Q}{h(Y) \cdot e^{\llr{P}{Q}(Y)}}$ for all measurable functions $h$. Setting $h(y) = g(\llr{P}{Q}(y)) \cdot e^{-\llr{P}{Q}(y)}$ yields $\ex{Z \gets \privloss{P}{Q}}{g(Z) \cdot e^{-Z}} = \ex{Y \gets P}{h(Y)} = \ex{Y \gets Q}{h(Y) \cdot e^{\llr{P}{Q}(Y)}} = \ex{Y \gets Q}{g(\llr{P}{Q}(Y))}$, as required. We can also write these expressions out as an integral to obtain a more intuitive proof:
        \begin{align*}
            \ex{Y \gets Q}{g(\llr{P}{Q}(Y))} &= \int_\mathcal{Y} g(\llr{P}{Q}(y)) \cdot Q(y) \mathrm{d}y \\
            &= \int_\mathcal{Y} g(\llr{P}{Q}(y)) \cdot \frac{Q(y)}{P(y)} \cdot P(y) \mathrm{d}y \\
            &= \int_\mathcal{Y} g(\llr{P}{Q}(y)) \cdot e^{-\log(P(y)/Q(y))} \cdot P(y) \mathrm{d}y \\
            &= \int_\mathcal{Y} g(\llr{P}{Q}(y)) \cdot e^{-\llr{P}{Q}(y)} \cdot P(y) \mathrm{d}y \\
            &= \ex{Y \gets P}{g(\llr{P}{Q}(Y)) \cdot e^{-\llr{P}{Q}(Y)}} \\
            &= \ex{Z \gets \privloss{P}{Q}}{g(Z) \cdot e^{-Z}}.
        \end{align*}
    \end{proof}

    \subsection{Approximate DP \& the Privacy Loss Distribution}\label{sec:privloss_adp}
    
    So far, in this section, we have defined the privacy loss distribution, given an example, and illustrated that it is a natural quantity to consider that captures essentially everything we need to know about the (in)distinguishability of two distributions.
    To wrap up this section, we will relate the privacy loss distribution back to the definition of approximate $(\varepsilon,\delta)$-DP:
    
    \begin{proposition}[Conversion from Privacy Loss Distribution to Approximate Differential Privacy]\label{prop:privloss_adp}
        Let $P$ and $Q$ be two probability distributions on $\mathcal{Y}$ such that the privacy loss distribution $\privloss{P}{Q}$ is well-defined.
        Fix $\varepsilon \ge 0$ and define \[\delta := \sup_{S \subset \mathcal{Y}} P(S) - e^\varepsilon \cdot Q(S).\]
        Then 
        \begin{align*}
            \delta 
            &= \pr{Z \gets \privloss{P}{Q}}{Z>\varepsilon} - e^\varepsilon \cdot \pr{Z' \gets \privloss{Q}{P}}{-Z'>\varepsilon} \\
            &= \ex{Z \gets \privloss{P}{Q}}{\max\{0,1-\exp(\varepsilon-Z)\}} \\
            &= \int_\varepsilon^\infty e^{\varepsilon - Z} \cdot \pr{Z \gets \privloss{P}{Q}}{Z>z} \mathrm{d}z \\
            &\le \pr{Z \gets \privloss{P}{Q}}{Z > \varepsilon}.
        \end{align*}
    \end{proposition}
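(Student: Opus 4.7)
The plan is to identify the optimal set $S$, then rewrite each expression in the chain of equalities as a statement about the privacy loss random variable. Since the integrand $P(y) - e^{\varepsilon} Q(y)$ is positive precisely when $\llr{P}{Q}(y) > \varepsilon$, the supremum defining $\delta$ is attained by $S^* = \{y \in \mathcal{Y} : \llr{P}{Q}(y) > \varepsilon\}$. Hence
\[
\delta = P(S^*) - e^\varepsilon Q(S^*) = \pr{Y \gets P}{\llr{P}{Q}(Y) > \varepsilon} - e^\varepsilon \cdot \pr{Y \gets Q}{\llr{P}{Q}(Y) > \varepsilon}.
\]
The first probability is exactly $\pr{Z \gets \privloss{P}{Q}}{Z > \varepsilon}$ by the definition of the privacy loss random variable. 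For the second, I apply Remark \ref{rem:dual_privloss}: under $Y \gets Q$, the variable $\llr{P}{Q}(Y)$ has the same distribution as $-Z'$ for $Z' \gets \privloss{Q}{P}$, which yields the first displayed equality.

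Next I would convert the second probability differently, using Lemma \ref{lem:dual_privloss} with $g(z)=\mathbb{I}[z>\varepsilon]$, to obtain $\pr{Y \gets Q}{\llr{P}{Q}(Y) > \varepsilon} = \ex{Z \gets \privloss{P}{Q}}{\mathbb{I}[Z>\varepsilon] \cdot e^{-Z}}$. Combining with the first term and factoring gives
\[
\delta = \ex{Z \gets \privloss{P}{Q}}{\mathbb{I}[Z>\varepsilon]\bigl(1 - e^{\varepsilon - Z}\bigr)}.
\]
Since $1 - e^{\varepsilon - Z}$ is positive precisely when $Z > \varepsilon$ and negative otherwise, the indicator can be absorbed into a $\max\{0,\cdot\}$, producing the second claimed equality.

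For the third equality I would use the identity $\max\{0, 1 - e^{\varepsilon - Z}\} = \int_\varepsilon^\infty e^{\varepsilon - z}\, \mathbb{I}[Z > z]\, \mathrm{d}z$ (verified by evaluating the integral on the event $Z > \varepsilon$ and noting both sides vanish otherwise), then swap expectation and integral by Fubini-Tonelli, noting the integrand is nonnegative. Finally, the inequality $\delta \le \pr{}{Z > \varepsilon}$ is immediate from the second formula, since $\max\{0, 1 - e^{\varepsilon - Z}\} \le \mathbb{I}[Z > \varepsilon]$ pointwise.

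The main obstacle is mostly bookkeeping rather than a deep step: one must carefully identify the maximizing set and handle the two different ways of re-expressing probabilities under $Q$ in terms of $\privloss{P}{Q}$. The lemma proved just above (Lemma \ref{lem:dual_privloss}) does the heavy lifting, so the proof amounts to assembling these ingredients and verifying a single elementary integral identity.
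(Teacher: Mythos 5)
Your proposal is correct, and for the first two equalities and the final inequality it follows the same route as the paper: identify the maximizing set $S^*=\{y:\log(P(y)/Q(y))>\varepsilon\}$, express $P(S^*)$ directly as $\pr{Z}{Z>\varepsilon}$, handle $Q(S^*)$ once via the symmetry remark and once via the change-of-distribution lemma with $g(z)=\mathbb{I}[z>\varepsilon]$, and bound $\max\{0,1-e^{\varepsilon-Z}\}\le\mathbb{I}[Z>\varepsilon]$. Where you genuinely diverge is the third equality: the paper obtains $\int_\varepsilon^\infty e^{\varepsilon-z}\,\pr{Z}{Z>z}\,\mathrm{d}z$ by integration by parts against the complementary CDF $F(z)=\pr{Z}{Z>z}$, which formally requires differentiating $F$ and is accompanied by a footnote conceding that $F$ may not be differentiable; you instead verify the pointwise identity $\max\{0,1-e^{\varepsilon-Z}\}=\int_\varepsilon^\infty e^{\varepsilon-z}\,\mathbb{I}[Z>z]\,\mathrm{d}z$ and swap expectation and integral by Tonelli, using nonnegativity of the integrand. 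Your version is arguably the cleaner one: it proves the same identity with no regularity assumptions on the law of $Z$ (no density, no continuity), so the caveat in the paper's footnote becomes unnecessary, and the boundary term $\lim_{z\to\infty}\pr{Z}{Z>z}=0$ that the paper must discard never appears. The only thing the paper's route buys is that it makes the connection to the distribution function explicit, which some readers may find more familiar; mathematically, both are sound.
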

    \begin{proof}
        For any measurable $S \subset \mathcal{Y}$, we have \[P(S) - e^\varepsilon \cdot Q(S) = \int_\mathcal{Y} \mathbb{I}[y \in S] \cdot \left( P(y) - e^\varepsilon \cdot Q(y) \right) \mathrm{d}y,\] where $\mathbb{I}$ denotes the indicator function -- it takes the value $1$ if the condition is true and $0$ otherwise.
        To maximize this expression, we want $y \in S$ whenever $P(y) - e^\varepsilon \cdot Q(y) >0$ and we want $y \notin S$ when this is negative. Thus $\delta = P(S_*)-e^\varepsilon\cdot Q(S_*)$ for \[S_* := \left\{ y \in \mathcal{Y} : P(y) - e^\varepsilon \cdot Q(y) > 0 \right\} = \left\{y \in \mathcal{Y} : \llr{P}{Q}(y) > \varepsilon \right\}.\]
        
        Now \[P(S_*) = \pr{Y \gets P}{\llr{P}{Q}(Y) > \varepsilon} = \pr{Z \gets \privloss{P}{Q}}{Z > \varepsilon}\]
        and, by Remark \ref{rem:dual_privloss}, \[Q(S_*) = \pr{Y \gets Q}{\llr{P}{Q}(Y) > \varepsilon} = \pr{Z' \gets \privloss{Q}{P}}{-Z' > \varepsilon}.\]
        This gives the first expression in the result: \[\delta = P(S_*)-e^\varepsilon\cdot Q(S_*) = \pr{Z \gets \privloss{P}{Q}}{Z > \varepsilon} - e^\varepsilon \cdot \pr{Z' \gets \privloss{Q}{P}}{-Z' > \varepsilon}. \]
        
        Alternatively, $P(S_*) = \ex{Z \gets \privloss{P}{Q}}{\mathbb{I}[Z>\varepsilon]}$ and, by Lemma \ref{lem:dual_privloss}, \[Q(S_*) = \ex{Y \gets Q}{\mathbb{I}[\llr{P}{Q}(Y) > \varepsilon]} = \ex{Z \gets \privloss{P}{Q}}{\mathbb{I}[Z > \varepsilon] \cdot e^{-Z}},\]
        which yields \[\delta = P(S_*)-e^\varepsilon\cdot Q(S_*) = \ex{Z \gets \privloss{P}{Q}}{(1-e^\varepsilon \cdot e^{-Z}) \cdot \mathbb{I}[Z>\varepsilon]}.\]
        Note that $(1-e^\varepsilon \cdot e^{-z}) \cdot \mathbb{I}[z>\varepsilon] = \max\{0,1-e^{\varepsilon-z}\}$ for all $z \in \mathbb{R}$. This produces the second expression in our result.
        
        To obtain the third expression in the result, we apply integration by parts to the second expression:
        Let $F(z) := \pr{Z \gets \privloss{P}{Q}}{Z>z}$ be the complement of the cumulative distribution function of the privacy loss distribution. Then the probability density function of $Z$ evaluated at $z$ is given by the negative derivative, $-F'(z)$.\footnote{In general, the privacy loss may not be continuous -- i.e., $F$ may not be differentiable. Nevertheless, the final result still holds in this case.}
        Then
        \begin{align*}
            \delta &= \ex{Z \gets \privloss{P}{Q}}{\max\{0,1-e^{\varepsilon-Z}\}} \\
            &= \int_\mathbb{R} \max\{0,1-e^{\varepsilon-z}\} \cdot (-F'(z)) \mathrm{d} z \\
            &= \int_\varepsilon^\infty (1-e^{\varepsilon-z}) \cdot (-F'(z)) \mathrm{d} z \\
            &= \int_\varepsilon^\infty \left( \frac{\mathrm{d}}{\mathrm{d}z} (1-e^{\varepsilon-z}) \cdot (-F(z)) \right) - (0-e^{\varepsilon-z}\cdot(-1)) \cdot (-F(z)) \mathrm{d}z \tag{product rule} \\
            &= \lim_{z \to \infty} (1-e^{\varepsilon-z}) \cdot (-F(z)) - (1-e^{\varepsilon-\varepsilon}) \cdot (-F(\varepsilon)) -  \int_\varepsilon^\infty e^{\varepsilon-z} \cdot (-F(z)) \mathrm{d}z \tag{fundamental theorem of calculus} \\
            &= -\lim_{z \to \infty} \pr{Z \gets \privloss{P}{Q}}{Z>z} + \int_\varepsilon^\infty e^{\varepsilon - z} \cdot \pr{Z \gets \privloss{P}{Q}}{Z>z} \mathrm{d}z.
        \end{align*}
        If the privacy loss is well-defined, then $\lim_{z \to \infty} \pr{Z \gets \privloss{P}{Q}}{Z>z}  = 0$.

        The final expression (an upper bound, rather than a tight characterization) is easily obtained from any of the other three expressions. In particular, dropping the second term $- e^\varepsilon \cdot \pr{Z' \gets \privloss{Q}{P}}{-Z'>\varepsilon} \le 0$ from the first expression yields the upper bound.
    \end{proof}

    The expression \(\delta = \sup_{S \subset \mathcal{Y}} P(S) - e^\varepsilon \cdot Q(S)\) in Proposition \ref{prop:privloss_adp} is known as the ``hockey stick divergence'' and it determines the smallest $\delta$ for a given $\varepsilon$ such that $P(S) \le e^\varepsilon Q(S) + \delta$ for all $S \subset \mathcal{Y}$. If $P=M(x)$ and $Q=M(x')$ for arbitrary neighbouring datasets $x,x'$, then this expression gives the best approximate $(\varepsilon,\delta)$-DP guarantee.
    
    Proposition \ref{prop:privloss_adp} gives us three equivalent ways to calculate $\delta$, each of which will be useful in different circumstances.
    To illustrate how to use Proposition \ref{prop:privloss_adp}, we combine it with Proposition \ref{prop:gauss_privloss} to prove a tight approximate differential privacy guarantee for Gaussian noise addition:
    
    \begin{corollary}[Tight Approximate Differential Privacy for Univariate Gaussian]\label{cor:gauss_adp_exact}
        Let $q : \mathcal{X}^n \to \mathbb{R}$ be a deterministic function and let $\Delta : = \sup_{x,x'\in\mathcal{X}^n \atop \text{neighbouring}} |q(x)-q(x')|$ be its sensitivity.
        Define a randomized algorithm $M : \mathcal{X}^n \to \mathbb{R}$ by $M(x) = \mathcal{N}(q(x),\sigma^2)$ for some $\sigma^2>0$.
        Then, for any $\varepsilon \ge 0$, $M$ satisfies $(\varepsilon,\delta)$-DP with \[\delta = \overline\Phi\left(\frac{\varepsilon-\rho_*}{\sqrt{2\rho_*}}\right) - e^\varepsilon \cdot \overline\Phi\left(\frac{\varepsilon+\rho_*}{\sqrt{2\rho_*}}\right),\] where $\rho_* := \Delta^2/2\sigma^2$ and $\overline\Phi(z) := \pr{G \gets \mathcal{N}(0,1)}{G>z} = \frac{1}{\sqrt{2\pi}} \int_z^\infty \exp(-t^2/2) \mathrm{d}t$.
        
        Furthermore, this guarantee is optimal -- for every $\varepsilon \ge 0$, there is no $\delta'<\delta$ such that $M$ is $(\varepsilon,\delta')$-DP for general $q$.
    \end{corollary}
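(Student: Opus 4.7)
The plan is to combine Proposition~\ref{prop:gauss_privloss} with Proposition~\ref{prop:privloss_adp}. For arbitrary neighbouring $x, x' \in \mathcal{X}^n$, set $\rho := (q(x) - q(x'))^2/(2\sigma^2) \le \rho_*$. Since $(\mu-\mu')^2/(2\sigma^2)$ is symmetric in $\mu,\mu'$, Proposition~\ref{prop:gauss_privloss} gives $\privloss{M(x)}{M(x')} = \privloss{M(x')}{M(x)} = \mathcal{N}(\rho, 2\rho)$. Substituting into the first expression of Proposition~\ref{prop:privloss_adp}, standardizing via $Z = \rho + \sqrt{2\rho}\,G$ with $G \gets \mathcal{N}(0,1)$, and using $\pr{G}{G < -a} = \overline\Phi(a)$ yields
\[\delta(\rho) \;=\; \overline\Phi\!\left(\frac{\varepsilon-\rho}{\sqrt{2\rho}}\right) - e^\varepsilon \cdot \overline\Phi\!\left(\frac{\varepsilon+\rho}{\sqrt{2\rho}}\right),\]
which matches the claimed formula at $\rho=\rho_*$.

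Next I would show that $\rho \mapsto \delta(\rho)$ is non-decreasing on $(0,\infty)$, reducing the worst case over neighbours to $\rho=\rho_*$ and thereby establishing the $(\varepsilon,\delta)$-DP bound. Changing variables to $s := \sqrt{2\rho}$, so that $(\varepsilon \pm \rho)/\sqrt{2\rho} = \varepsilon/s \pm s/2$, differentiating $f(s) := \overline\Phi(\varepsilon/s - s/2) - e^\varepsilon \overline\Phi(\varepsilon/s + s/2)$ gives
\[f'(s) \;=\; \phi(\varepsilon/s - s/2)\,(\varepsilon/s^2 + 1/2) \;-\; e^\varepsilon\,\phi(\varepsilon/s + s/2)\,(\varepsilon/s^2 - 1/2),\]
where $\phi$ is the standard normal density. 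The key identity (obtained by expanding the squares in the exponentials) is $e^\varepsilon\,\phi(\varepsilon/s + s/2) = \phi(\varepsilon/s - s/2)$, which makes the $\varepsilon/s^2$ contributions cancel and leaves $f'(s) = \phi(\varepsilon/s - s/2) > 0$. Hence $\delta(\rho) \le \delta(\rho_*)$ for every admissible $\rho$.

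For tightness, pick neighbouring $x,x'$ with $|q(x)-q(x')|$ arbitrarily close to (or equal to, whenever the supremum defining $\Delta$ is attained) $\Delta$, so that the associated $\rho$ is arbitrarily close to $\rho_*$. For such a pair, the hockey-stick expression $\sup_S P(S) - e^\varepsilon Q(S)$ of Proposition~\ref{prop:privloss_adp} evaluates to exactly $\delta(\rho)$, which is the smallest $\delta$ compatible with $(\varepsilon,\delta)$-DP for that single pair; hence no $\delta'<\delta(\rho_*)$ can work for general $q$. The main obstacle is the monotonicity step: without the cancellation produced by the identity $e^\varepsilon \phi(\varepsilon/s + s/2) = \phi(\varepsilon/s - s/2)$, one would be left comparing two Gaussian densities weighted by $\varepsilon/s^2 \pm 1/2$, and the sign of $f'$ would not be transparent. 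Everything else is routine substitution.
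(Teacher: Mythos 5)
Your proposal is correct and follows essentially the same route as the paper: combine Proposition \ref{prop:gauss_privloss} with the first expression in Proposition \ref{prop:privloss_adp} to get $\delta(\rho)=\overline\Phi\bigl((\varepsilon-\rho)/\sqrt{2\rho}\bigr)-e^\varepsilon\overline\Phi\bigl((\varepsilon+\rho)/\sqrt{2\rho}\bigr)$, then pass to the worst case $\rho=\rho_*$ by monotonicity in $\rho$, with tightness coming from the fact that both propositions are exact characterizations. The one point where you go beyond the paper is that you actually prove the monotonicity step (which the paper merely asserts), and your argument is sound: with $s=\sqrt{2\rho}$ the identity $e^\varepsilon\,\phi(\varepsilon/s+s/2)=\phi(\varepsilon/s-s/2)$ does hold, giving $f'(s)=\phi(\varepsilon/s-s/2)>0$.
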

    \begin{proof}
        Fix arbitrary neighbouring datasets $x,x'\in\mathcal{X}^n$ and $S \subset \mathcal{Y}$. Let $\mu=q(x)$ and $\mu'=q(x')$. Let $P=M(x)=\mathcal{N}(\mu,\sigma^2)$ and $Q=M(x')=\mathcal{N}(\mu',\sigma^2)$. We must show $P(S) \le e^\varepsilon \cdot Q(S) + \delta$ for arbitrary $\varepsilon \ge 0$ and the value $\delta$ given in the result.
        
        By Proposition \ref{prop:gauss_privloss}, $\privloss{P}{Q} = \privloss{Q}{P} = \mathcal{N}(\rho,2\rho)$, where $\rho = \frac{(\mu-\mu')^2}{2\sigma^2} \le \rho_* = \frac{\Delta^2}{2\sigma^2}$. 
        
        By Proposition \ref{prop:privloss_adp}, we have $P(S) \le e^\varepsilon \cdot Q(S) + \delta$, where 
        \begin{align*}
            \delta &= \pr{Z \gets \privloss{P}{Q}}{Z>\varepsilon} - e^\varepsilon \cdot \pr{Z' \gets \privloss{Q}{P}}{-Z'>\varepsilon} \\
            &= \pr{Z \gets \mathcal{N}(\rho,2\rho)}{Z>\varepsilon} - e^\varepsilon \cdot \pr{Z' \gets \mathcal{N}(\rho,2\rho)}{-Z'>\varepsilon} \\
            &= \pr{G \gets \mathcal{N}(0,1)}{\rho + \sqrt{2\rho}\cdot G >\varepsilon} - e^\varepsilon \cdot \pr{G \gets \mathcal{N}(0,1)}{-\rho + \sqrt{2\rho}\cdot G >\varepsilon} \\
            &= \pr{G \gets \mathcal{N}(0,1)}{G > \frac{\varepsilon-\rho}{\sqrt{2\rho}}} - e^\varepsilon \cdot \pr{G \gets \mathcal{N}(0,1)}{G >\frac{\varepsilon+\rho}{\sqrt{2\rho}}} \\
            &= \overline\Phi\left(\frac{\varepsilon-\rho}{\sqrt{2\rho}}\right) - e^\varepsilon \cdot \overline\Phi\left(\frac{\varepsilon+\rho}{\sqrt{2\rho}}\right).
        \end{align*}
        Since $\rho \le \rho_*$ and the above expression is increasing in $\rho$, we can substitute in $\rho_*$ as an upper bound.
        
        Optimality follows from the fact that both Propositions \ref{prop:gauss_privloss} and \ref{prop:privloss_adp} give exact characterizations. Note that we must assume that there exist neighbouring $x,x'$ such that $\rho = \rho_*$.
    \end{proof}

    The guarantee of Corollary \ref{cor:gauss_adp_exact} is exact, but it is somewhat hard to interpret. We can easily obtain a more interpretable upper bound: 
    \begin{align*}
        \delta &= \overline\Phi\left(\frac{\varepsilon-\rho_*}{\sqrt{2\rho_*}}\right) - e^\varepsilon \cdot \overline\Phi\left(\frac{\varepsilon+\rho_*}{\sqrt{2\rho_*}}\right) \\
        &\le \overline\Phi\left(\frac{\varepsilon-\rho_*}{\sqrt{2\rho_*}}\right) = \pr{G \gets \mathcal{N}(0,1)}{G > \frac{\varepsilon-\rho_*}{\sqrt{2\rho_*}}} \\
        &\le \frac{\exp\left(-\frac{(\varepsilon-\rho_*)^2}{4\rho_*}\right)}{\max\left\{2, \sqrt{\frac{\pi}{\rho_*}} \cdot (\varepsilon-\rho_*)\right\}}. \tag{assuming $\varepsilon \ge \rho_*$} %
    \end{align*}
    
\section{Composition via the Privacy Loss Distribution}\label{sec:comp_privloss}

    The privacy loss distribution captures essentially everything about the (in)distinguishability of a pair of distributions. 
    It is also the key to understanding composition. 
    Suppose we run multiple differentially private algorithms on the same dataset and each has a well-defined privacy loss distribution.
    The composition of these algorithms corresponds to the convolution of the privacy loss distributions.
    That is, the privacy loss random variable corresponding to running all of the algorithms independently is equal to the sum of the independent privacy loss random variables of each of the algorithms:
    
    \begin{theorem}[Composition is Convolution of Privacy Loss Distributions]\label{thm:privloss_composition}
        For each $j \in [k]$, let $P_j$ and $Q_j$ be distributions on $\mathcal{Y}_j$ and assume $\privloss{P_j}{Q_j}$ is well defined.
        Let $P = P_1 \times P_2 \times \cdots \times P_k$ denote the product distribution on $\mathcal{Y} = \mathcal{Y}_1 \times \mathcal{Y}_2 \times \cdots \times \mathcal{Y}_k$ obtained by sampling independently from each $P_j$.
        Similarly, let $Q = Q_1 \times Q_2 \times \cdots \times Q_k$ denote the product distribution on $\mathcal{Y}$ obtained by sampling independently from each $Q_j$.
        Then $\privloss{P}{Q}$ is the convolution of the distributions $\privloss{P_j}{Q_j}$ for all $j \in [k]$.
        That is, sampling $Z \gets \privloss{P}{Q}$ is equivalent to $Z=\sum_{j=1}^k Z_j$ when $Z_j \gets \privloss{P_j}{Q_j}$ independently for each $j \in [k]$.
    \end{theorem}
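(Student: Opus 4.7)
The plan is to reduce everything to the observation that on product spaces, the log-likelihood ratio of product distributions decomposes as a sum of the per-coordinate log-likelihood ratios. From there, the claim follows immediately from the definition of the privacy loss random variable and independence.

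First I would write down the densities (or Radon-Nikodym derivatives) of the product measures: $P(y) = \prod_{j=1}^k P_j(y_j)$ and $Q(y) = \prod_{j=1}^k Q_j(y_j)$ for $y = (y_1,\ldots,y_k) \in \mathcal{Y}$. Taking the logarithm of the ratio then gives the key identity
\[
\llr{P}{Q}(y) \;=\; \log\frac{P(y)}{Q(y)} \;=\; \sum_{j=1}^k \log\frac{P_j(y_j)}{Q_j(y_j)} \;=\; \sum_{j=1}^k \llr{P_j}{Q_j}(y_j),
\]
so the log-likelihood ratio on the product space is additive across coordinates. The well-definedness of $\privloss{P}{Q}$ (i.e., mutual absolute continuity of $P$ and $Q$) follows from the assumed well-definedness of each $\privloss{P_j}{Q_j}$ since products of absolutely continuous measures are absolutely continuous with respect to each other.

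Next I would take $Y \gets P$, which by definition of a product distribution means $Y = (Y_1, \ldots, Y_k)$ with $Y_1, \ldots, Y_k$ independent and $Y_j \gets P_j$. Applying the identity above yields
\[
Z \;=\; \llr{P}{Q}(Y) \;=\; \sum_{j=1}^k \llr{P_j}{Q_j}(Y_j) \;=\; \sum_{j=1}^k Z_j,
\]
where $Z_j := \llr{P_j}{Q_j}(Y_j)$. By the definition of the privacy loss distribution, each $Z_j$ is distributed as $\privloss{P_j}{Q_j}$, and since the $Y_j$ are independent, the $Z_j$ are independent as well. Hence $Z$ is distributed as the sum of independent draws from $\privloss{P_1}{Q_1}, \ldots, \privloss{P_k}{Q_k}$, which is exactly the convolution of these distributions.

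Honestly, there is no real obstacle here; the entire content of the theorem is the factorization of the likelihood ratio for product measures, and the rest is bookkeeping. The only thing worth flagging is the measure-theoretic subtlety that one should justify using product densities (or Radon-Nikodym derivatives of product measures with respect to each other), but this is standard and the footnote in Definition \ref{defn:priv_loss} already handles the absolute continuity prerequisites.
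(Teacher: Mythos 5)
Your proof is correct and follows exactly the same route as the paper: factor the log-likelihood ratio of the product measures into the sum of per-coordinate log-likelihood ratios, then use independence of the coordinates under $P$ to conclude the privacy loss is a sum of independent per-coordinate privacy losses. Your added remark on mutual absolute continuity of the products is a fine (and harmless) extra.
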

    \begin{proof}
        For all $y \in \mathcal{Y}$, the log likelihood ratio (Definition \ref{defn:priv_loss}) satisfies
        \begin{align*}
            \llr{P}{Q}(y) &= \log\left(\frac{P(y)}{Q(y)}\right) \\
            &= \log\left(\frac{P_1(y_1) \cdot P_2(y_2) \cdot \cdots \cdot P_k(y_k)}{Q_1(y_1) \cdot Q_2(y_2) \cdot \cdots \cdot Q_k(y_k)}\right) \\
            &= \log\left(\frac{P_1(y_1)}{Q_1(y_1)}\right) + \log\left(\frac{P_2(y_2)}{Q_2(y_2)}\right) + \cdots + \log\left(\frac{P_k(y_k)}{Q_k(y_k)}\right) \\
            &= \llr{P_1}{Q_1}(y_1) + \llr{P_2}{Q_2}(y_2) + \cdots + \llr{P_k}{Q_k}(y_k).
        \end{align*}
        
        Since $P$ is a product distribution, sampling $Y \gets P$ is equivalent to sampling $Y_1 \gets P_1$, $Y_2 \gets P_2$, $\cdots$, $Y_k \gets P_k$ independently.
        
        A sample from the privacy loss distribution $Z \gets \privloss{P}{Q}$ is given by $Z = \llr{P}{Q}(Y)$ for $Y \gets P$.
        By the above two facts, this is equivalent to $Z = \llr{P_1}{Q_1}(Y_1) + \llr{P_2}{Q_2}(Y_2) + \cdots + \llr{P_k}{Q_k}(Y_k)$ for $Y_1 \gets P_1$, $Y_2 \gets P_2$, $\cdots$, $Y_k \gets P_k$ independently.
        For each $j \in [k]$, sampling $Z_j \gets \privloss{P_j}{Q_j}$ is given by $Z_j = \llr{P_j}{Q_j}(Y_j)$ for $Y_j \gets P_j$. 
        Thus sampling $Z \gets \privloss{P}{Q}$ is equivalent to $Z = Z_1 + Z_2 + \cdots + Z_k$ where $Z_1 \gets \privloss{P_1}{Q_1}$, $Z_2 \gets \privloss{P_2}{Q_2}$, $\cdots$, $Z_k \gets \privloss{P_k}{Q_k}$ are independent.
    \end{proof}
    
    Theorem \ref{thm:privloss_composition} is the key to understanding composition of differential privacy.
    More concretely, we should think of a pair of neighbouring inputs $x,x'$ and $k$ algorithms $M_1, \cdots, M_k$. Suppose $M$ is the composition of $M_1, \cdots, M_k$. Then the the differential privacy of $M$ can be expressed in terms of the privacy loss distribution $\privloss{M(x)}{M(x')}$. Theorem \ref{thm:privloss_composition} allows us to decompose this privacy loss as the sum/convolution of the privacy losses of the constituent algorithms $\privloss{M_j(x)}{M_j(x')}$ for $j \in [k]$. Thus if we have differential privacy guarantees for each $M_j$, this allows us to prove differential privacy guarantees for $M$.
    
    \paragraph{Basic Composition, Revisited:}
    We can revisit basic composition (Theorem \ref{thm:basic_composition}, \S\ref{sec:basic_composition}) with the perspective of privacy loss distributions.
    Suppose $M_1, M_2, \cdots, M_k : \mathcal{X}^n \to \mathcal{Y}$ are each $\varepsilon$-DP. Fix neighbouring datasets $x,x' \in \mathcal{X}^n$.
    This means that $\pr{Z_j \gets \privloss{M_j(x)}{M_j(x')}}{Z_j \le \varepsilon} = 1$ for each $j \in [k]$.
    Now let $M : \mathcal{X}^n \to \mathcal{Y}^k$ be the composition of these algorithms.
    We can express the privacy loss $Z \gets \privloss{M(x)}{M(x')}$ as $Z = Z_1 + Z_2 + \cdots + Z_k$ where $Z_j \gets \privloss{M_j(x)}{M_j(x')}$ for each $j \in [k]$.
    Basic composition simply adds up the upper bounds: \[Z = Z_1 + Z_2 + \cdots + Z_k \le \varepsilon + \varepsilon + \cdots + \varepsilon = k\varepsilon.\]
    This bound is tight if each $Z_j$ is a point mass (i.e., $\pr{}{Z_j=\varepsilon}=1$).
    However, this is not the case. (It is possible to prove, in general, that $\pr{}{Z_j=\varepsilon}\le \frac{1}{1+e^{-\varepsilon}}$.)
    The way we will prove better composition bounds is by applying concentration of measure bounds to this sum of independent random variables.
    That way we can prove that the privacy loss is small with high probability, which yields a better differential privacy guarantee.
    
    Intuitively, we will apply the central limit theorem. The privacy loss random variable of the composed algorithm $M$ can be expressed as the sum of independent bounded random variables. That means the privacy loss distribution $\privloss{M(x)}{M(x')}$ is well-approximated by a Gaussian, which is the information we need to prove a composition theorem. What is left to do is to obtain bounds on the mean and variance of the summands and make this Gaussian approximation precise.
    
    \paragraph{Gaussian Composition:}
    It is instructive to look at composition when each constituent algorithm $M_j$ is the Gaussian noise addition mechanism. In this case the privacy loss distribution is exactly Gaussian and convolutions of Gaussians are also Gaussian. This is the ideal case and our general composition theorem will be an approximation to this ideal.
    
    Specifically, we can prove a multivariate analog of Corollary \ref{cor:gauss_adp_exact}:
    
    \begin{corollary}[Tight Approximate Differential Privacy for Multivariate Gaussian]\label{cor:gauss_adp_exact_multi}
        Let $q : \mathcal{X}^n \to \mathbb{R}^d$ be a deterministic function and let $\Delta : = \sup_{x,x'\in\mathcal{X}^n \atop \text{neighbouring}} \|q(x)-q(x')\|_2$ be its sensitivity in the $2$-norm.
        Define a randomized algorithm $M : \mathcal{X}^n \to \mathbb{R}^d$ by $M(x) = \mathcal{N}(q(x),\sigma^2I)$ for some $\sigma^2>0$, where $I$ is the identity matrix.
        Then, for any $\varepsilon \ge 0$, $M$ satisfies $(\varepsilon,\delta)$-DP with \[\delta = \overline\Phi\left(\frac{\varepsilon-\rho_*}{\sqrt{2\rho_*}}\right) - e^\varepsilon \cdot \overline\Phi\left(\frac{\varepsilon+\rho_*}{\sqrt{2\rho_*}}\right),\] where $\rho_* := \Delta^2/2\sigma^2$ and $\overline\Phi(z) := \pr{G \gets \mathcal{N}(0,1)}{G>z} = \frac{1}{\sqrt{2\pi}} \int_z^\infty \exp(-t^2/2) \mathrm{d}t$.
        
        Furthermore, this guarantee is optimal -- for every $\varepsilon \ge 0$, there is no $\delta'<\delta$ such that $M$ is $(\varepsilon,\delta')$-DP for general $q$.
    \end{corollary}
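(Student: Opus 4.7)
The plan is to reduce the multivariate case to the univariate Corollary \ref{cor:gauss_adp_exact} by exploiting the rotational symmetry of the spherical Gaussian, with Theorem \ref{thm:privloss_composition} doing the bookkeeping. Concretely, I will show that the privacy loss distribution is again $\mathcal{N}(\rho, 2\rho)$ with $\rho = \|q(x)-q(x')\|_2^2/(2\sigma^2)$, and then apply Proposition \ref{prop:privloss_adp} verbatim as in the proof of Corollary \ref{cor:gauss_adp_exact}.

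First I would fix arbitrary neighbouring $x, x' \in \mathcal{X}^n$, set $\mu = q(x)$, $\mu' = q(x')$, $v = \mu - \mu'$, and pick an orthogonal matrix $R$ such that $R v = \|v\|_2 \cdot e_1$. Since $\sigma^2 I$ is rotationally invariant, applying $R$ (a deterministic, invertible map) leaves the privacy loss distribution unchanged, so $\privloss{\mathcal{N}(\mu,\sigma^2 I)}{\mathcal{N}(\mu',\sigma^2 I)} = \privloss{\mathcal{N}(R\mu,\sigma^2 I)}{\mathcal{N}(R\mu',\sigma^2 I)}$. In the rotated coordinates, both distributions are products of $d$ univariate Gaussians; the means agree in coordinates $2,\dots,d$ and differ by $\|v\|_2$ in coordinate $1$.

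Next I would invoke Theorem \ref{thm:privloss_composition}: the privacy loss decomposes as a sum over coordinates. Identical distributions contribute a point mass at $0$, while Proposition \ref{prop:gauss_privloss} gives $\mathcal{N}(\rho, 2\rho)$ for the first coordinate with $\rho = \|v\|_2^2/(2\sigma^2)$. The convolution of $\mathcal{N}(\rho,2\rho)$ with $k-1$ point masses at $0$ is just $\mathcal{N}(\rho,2\rho)$, so the full privacy loss is Gaussian with the claimed parameters, and $\rho \le \rho_* = \Delta^2/(2\sigma^2)$.

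Finally, I would repeat the calculation from Corollary \ref{cor:gauss_adp_exact} unchanged: plug $\privloss{P}{Q} = \mathcal{N}(\rho, 2\rho)$ into the first expression of Proposition \ref{prop:privloss_adp}, rewrite the two tail probabilities using the standard normal, and note that the resulting expression is increasing in $\rho$ to substitute $\rho_*$. Optimality follows from the fact that both underlying propositions are tight, applied to any $x, x'$ achieving $\|q(x)-q(x')\|_2 = \Delta$ (so $\rho = \rho_*$). The main obstacle is purely conceptual rather than computational: justifying that the privacy loss is invariant under the rotation $R$ and that the $d-1$ "redundant" coordinates contribute nothing. The cleanest way to handle this is the composition-of-products argument above; alternatively, one can directly expand $\log(P(y)/Q(y)) = (\|y-\mu'\|_2^2 - \|y-\mu\|_2^2)/(2\sigma^2)$, observe that it is an affine linear function of $y$ with gradient $v/\sigma^2$, and compute the mean and variance under $Y \sim \mathcal{N}(\mu, \sigma^2 I)$ exactly as in Proposition \ref{prop:gauss_privloss}.
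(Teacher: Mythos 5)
Your proposal is correct and follows essentially the same route as the paper: exploit the product structure via Theorem \ref{thm:privloss_composition}, apply Proposition \ref{prop:gauss_privloss} per coordinate, and finish with Proposition \ref{prop:privloss_adp} exactly as in Corollary \ref{cor:gauss_adp_exact}. The only difference is cosmetic: the paper skips the orthogonal rotation and simply sums the per-coordinate losses $\mathcal{N}(\rho_j,2\rho_j)$ into $\mathcal{N}(\rho,2\rho)$ with $\rho=\|\mu-\mu'\|_2^2/2\sigma^2$, so your preliminary change of variables is valid but unnecessary.
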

    \begin{proof}
        Fix arbitrary neighbouring datasets $x,x'\in\mathcal{X}^n$ and $S \subset \mathcal{Y}$. Let $\mu=q(x), \mu'=q(x') \in \mathbb{R}^d$. Let $P=M(x)=\mathcal{N}(\mu,\sigma^2I)$ and $Q=M(x')=\mathcal{N}(\mu',\sigma^2I)$. We must show $P(S) \le e^\varepsilon \cdot Q(S) + \delta$ for arbitrary $\varepsilon \ge 0$ and the value $\delta$ given in the result.
        
        Now both $P$ and $Q$ are product distributions: For $j \in [d]$, let $P_j=\mathcal{N}(\mu_j,\sigma^2)$ and $Q_j=\mathcal{N}(\mu'_j,\sigma^2)$. Then $P=P_1 \times P_2 \times \cdots P_d$ and $Q = Q_1 \times Q_2 \times \cdots \times Q_d$.
        
        By Theorem \ref{thm:privloss_composition}, $\privloss{P}{Q} = \sum_{j=1}^d \privloss{P_j}{Q_j}$ and $\privloss{Q}{P} = \sum_{j=1}^d \privloss{Q_j}{P_j}$.
        
        By Proposition \ref{prop:gauss_privloss}, $\privloss{P_j}{Q_j} = \privloss{Q_j}{P_j} = \mathcal{N}(\rho_j,2\rho_j)$, where $\rho_j = \frac{(\mu_j-\mu'_j)^2}{2\sigma^2}$ for all $j \in [d]$.
        
        Thus $\privloss{P}{Q} = \privloss{Q}{P} = \sum_{j=1}^d \mathcal{N}(\rho_j,2\rho_j) = \mathcal{N}(\rho,2\rho)$, where 
        $\rho = \sum_{j=1}^d \rho_j = \frac{\|\mu-\mu'\|_2^2}{2\sigma^2} \le \rho_* = \frac{\Delta^2}{2\sigma^2}$. 
        
        By Proposition \ref{prop:privloss_adp}, we have $P(S) \le e^\varepsilon \cdot Q(S) + \delta$, where 
        \begin{align*}
            \delta &= \pr{Z \gets \privloss{P}{Q}}{Z>\varepsilon} - e^\varepsilon \cdot \pr{Z' \gets \privloss{Q}{P}}{-Z'>\varepsilon} \\
            &= \pr{Z \gets \mathcal{N}(\rho,2\rho)}{Z>\varepsilon} - e^\varepsilon \cdot \pr{Z' \gets \mathcal{N}(\rho,2\rho)}{-Z'>\varepsilon} \\
            &= \pr{G \gets \mathcal{N}(0,1)}{\rho + \sqrt{2\rho}\cdot G >\varepsilon} - e^\varepsilon \cdot \pr{G \gets \mathcal{N}(0,1)}{-\rho + \sqrt{2\rho}\cdot G >\varepsilon} \\
            &= \pr{G \gets \mathcal{N}(0,1)}{G > \frac{\varepsilon-\rho}{\sqrt{2\rho}}} - e^\varepsilon \cdot \pr{G \gets \mathcal{N}(0,1)}{G >\frac{\varepsilon+\rho}{\sqrt{2\rho}}} \\
            &= \overline\Phi\left(\frac{\varepsilon-\rho}{\sqrt{2\rho}}\right) - e^\varepsilon \cdot \overline\Phi\left(\frac{\varepsilon+\rho}{\sqrt{2\rho}}\right).
        \end{align*}
        Since $\rho \le \rho_*$ and the above expression is increasing in $\rho$, we can substitute in $\rho_*$ as an upper bound.
        
        Optimality follows from the fact that Propositions \ref{prop:gauss_privloss} and \ref{prop:privloss_adp} and Theorem \ref{thm:privloss_composition} give exact characterizations. Note that we must assume that there exist neighbouring $x,x'$ such that $\rho = \rho_*$.
    \end{proof}
    
    The key to the analysis of Gaussian composition in the proof of Corollary \ref{cor:gauss_adp_exact_multi} is that sums of Gaussians are Gaussian.
    In general, the privacy loss of each component is not Gaussian, but the sum still behaves much like a Gaussian and this observation is the basis for improving the composition analysis. 

    \paragraph{Composition via Gaussian Approximation:}    
    After analyzing Gaussian composition, our next step is to analyze the composition of $k$ independent $\varepsilon$-DP algorithms.
    We will use the same tools as we did for Gaussian composition and we will develop a new tool, which is called concentrated differential privacy.
    
    Let $M_1, \cdots, M_k : \mathcal{X}^n \to \mathcal{Y}$ each be $\varepsilon$-DP and let $M : \mathcal{X}^n \to \mathcal{Y}^k$ be the composition of these algorithms. Let $x,x'\in\mathcal{X}^n$ be neighbouring datasets. For notational convenience, let $P_j = M_j(x)$ and $Q_j=M_j(x')$ for all $j \in [k]$ and let $P=M(x) = P_1 \times P_2 \times \cdots \times P_k$ and $Q=M(x') = Q_1 \times Q_2 \times \cdots \times Q_k$.
    
    For each $j \in [k]$, the algorithm $M_j$ satisfies $\varepsilon$-DP, which ensures that the privacy loss random variable $Z_j \gets \privloss{P_j}{Q_j} = \privloss{M_j(x)}{M_j(x')}$ is supported on the interval $[-\varepsilon,\varepsilon]$. The privacy loss being bounded immediately implies a bound on the variance: $\var{}{Z_j} \le \ex{}{Z_j^2} \le \varepsilon^2$.
    We also can prove a bound on the expectation: $\ex{}{Z_j} \le \frac12 \varepsilon^2$. We will prove this bound formally later (in Proposition \ref{prop:pdp2cdp}). For now, we give some intuition: Clearly $\ex{}{Z_j} \le \varepsilon$ and the only way this can be tight is if $Z_j = \varepsilon$ with probability $1$. But $Z_j=\log(P_j(Y_j)/Q_j(Y_j))$ for $Y_j \gets P_j$. Thus $\ex{}{Z_j}=\varepsilon$ implies $P_j(Y_j) = e^\varepsilon \cdot Q_j(Y_j)$ with probability $1$. This yields a contradiction: $1 = \sum_y P_j(y) = \sum_y e^\varepsilon \cdot Q_j(y) = e^\varepsilon \cdot 1$. Thus we conclude $\ex{}{Z_j} < \varepsilon$ and, with a bit more work, we can obtain the bound $\ex{}{Z_j} \le \frac12 \varepsilon^2$ from the fact that $|Z_j|\le\varepsilon$ and $\sum_y P_j(y) = \sum_y Q_j(y) = 1$. 
    
    Our goal is to understand the privacy loss $Z \gets \privloss{P}{Q}=\privloss{M(x)}{M(x')}$ of the composed algorithm. Theorem \ref{thm:privloss_composition} tells us that this is the convolution of the constituent privacy losses. That is, we can write $Z = \sum_{j=1}^k Z_j$ where $Z_j \gets \privloss{P_j}{Q_j} = \privloss{M_j(x)}{M_j(x')}$ independently for each $j \in [k]$. 
    
    By independence, we have \[\ex{}{Z} = \sum_{j=1}^k \ex{}{Z_j} \le \frac12 \varepsilon^2 \cdot k ~~~\text{ and }~~~ \var{}{Z} = \sum_{j=1}^k \var{}{Z_j} \le \varepsilon^2 \cdot k.\] Since $Z$ can be written as the sum of independent bounded random variables, the central limit theorem tells us that it is well approximated by a Gaussian -- i.e., \[\privloss{P}{Q} = \privloss{M(x)}{M(x')} \approx \mathcal{N}(\ex{}{Z},\var{}{Z}).\]
    
    Are we done? Can we substitute this approximation into Proposition \ref{prop:privloss_adp} to complete the proof of a better composition theorem?
    We must make this approximation precise. Unfortunately, the approximation guarantee of the quantitative central limit theorem (a.k.a., the Berry-Esseen Theorem) is not quite strong enough. To be precise, converting the guarantee to approximate $(\varepsilon,\delta)$-DP would incur an error of $\delta \ge \Omega(1/\sqrt{k})$, which is larger than we want.
    
    Our approach is to look at the moment generating function -- i.e., the expectation of an exponential function -- of the privacy loss distribution. To be precise, we will show that, for all $t \ge 0$,
    \begin{align*}
        \ex{Z \gets \privloss{P}{Q}}{\exp(tZ)} &= \prod_{j=1}^k \ex{Z_j \gets \privloss{P_j}{Q_j}}{\exp(tZ_j)} \\
        &\le \exp\left(\frac12\varepsilon^2 t (t+1) \cdot k \right) \\
        &= \ex{\tilde{Z} \gets \mathcal{N}(\frac12 \varepsilon^2 k , \varepsilon^2 k)}{\exp(t\tilde{Z})}.
    \end{align*}
    In other words, rather than attempting to prove a Gaussian approximation, we prove a one-sided bound. Informally, this says that $\privloss{P}{Q} \le  \mathcal{N}(\frac12 \varepsilon^2 k , \varepsilon^2 k)$. The expectation of an exponential function turns out to be a nice way to formalize this inequality, because, if $X$ and $Y$ are independent, then $\ex{}{\exp(X+Y)}=\ex{}{\exp(X)}\cdot\ex{}{\exp(Y)}$.
    
    To formalize this approach, we next introduce concentrated differential privacy.
    
    \subsection{Concentrated Differential Privacy}
    
    Concentrated differential privacy \cite{dwork2016concentrated,bun2016concentrated} is a variant of differential privacy (like pure DP and approximate DP). The main advantage of concentrated DP is that it composes well. Thus we will use it as a tool to prove better composition results.

    \begin{definition}[Concentrated Differential Privacy]\label{defn:cdp}
        Let $M : \mathcal{X}^n \to \mathcal{Y}$ be a randomized algorithm. We say that $M$ satisfies $\rho$-concentrated differential privacy ($\rho$-zCDP) if, for all neighbouring inputs $x,x'\in\mathcal{X}^n$, the privacy loss distribution $\privloss{M(x)}{M(x')}$ is well-defined (see Definition \ref{defn:priv_loss}) and \[\forall t \ge 0 ~~~~~ \ex{Z \gets \privloss{M(x)}{M(x')}}{\exp(tZ)} \le \exp(t(t+1)\cdot\rho).\] 
    \end{definition}
    
    To contextualize this definition, we begin by showing that the Gaussian mechanism satisfies it.
    
    \begin{lemma}[Gaussian Mechanism is Concentrated DP]\label{lem:gauss_cdp}
        Let $q : \mathcal{X}^n \to \mathbb{R}^d$ have sensitivity $\Delta$ -- that is, $\|q(x)-q(x')\|_2 \le \Delta$ for all neighbouring $x,x'\in\mathcal{X}^n$. Let $\sigma>0$.
        Define a randomized algorithm $M : \mathcal{X}^n \to \mathbb{R}^d$ by $M(x) = \mathcal{N}(q(x),\sigma^2 I_d)$. Then $M$ is $\rho$-zCDP for $\rho = \frac{\Delta^2}{2\sigma^2}$.
    \end{lemma}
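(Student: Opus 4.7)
The plan is to reduce the multivariate Gaussian case to a one-dimensional Gaussian privacy loss calculation, using the composition-as-convolution identity (Theorem~\ref{thm:privloss_composition}), and then directly compute the moment generating function of a univariate Gaussian to verify the zCDP bound. Fix neighbouring $x,x' \in \mathcal{X}^n$, and write $\mu = q(x)$, $\mu' = q(x')$, so that $M(x) = \mathcal{N}(\mu,\sigma^2 I_d)$ and $M(x') = \mathcal{N}(\mu',\sigma^2 I_d)$ are product distributions across coordinates. Exactly as in the proof of Corollary~\ref{cor:gauss_adp_exact_multi}, Theorem~\ref{thm:privloss_composition} lets me decompose the privacy loss distribution coordinatewise, and Proposition~\ref{prop:gauss_privloss} identifies each coordinatewise privacy loss as $\mathcal{N}(\rho_j, 2\rho_j)$ with $\rho_j = (\mu_j - \mu'_j)^2/(2\sigma^2)$. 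Since a sum of independent Gaussians is Gaussian, the total privacy loss satisfies $\privloss{M(x)}{M(x')} = \mathcal{N}(\tilde\rho, 2\tilde\rho)$ where $\tilde\rho = \sum_j \rho_j = \|\mu - \mu'\|_2^2/(2\sigma^2)$.

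Next I would compute the moment generating function. For $Z \sim \mathcal{N}(a,b)$, a standard calculation (completing the square in the Gaussian integral) gives $\ex{}{e^{tZ}} = \exp(at + bt^2/2)$. Plugging in $a = \tilde\rho$ and $b = 2\tilde\rho$ yields
\[
\ex{Z \gets \privloss{M(x)}{M(x')}}{\exp(tZ)} = \exp\!\left(\tilde\rho\, t + \tilde\rho\, t^2\right) = \exp\!\left(\tilde\rho\, t(t+1)\right).
\]
Because $\|\mu - \mu'\|_2 = \|q(x) - q(x')\|_2 \le \Delta$, we have $\tilde\rho \le \Delta^2/(2\sigma^2) = \rho$, and for $t \ge 0$ the factor $t(t+1)$ is non-negative, so the bound $\exp(\tilde\rho\, t(t+1)) \le \exp(\rho\, t(t+1))$ holds, matching Definition~\ref{defn:cdp}.

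There is not really a hard step here — the work was already done in Proposition~\ref{prop:gauss_privloss} (identifying the Gaussian privacy loss) and Theorem~\ref{thm:privloss_composition} (the convolution identity), and the MGF of a Gaussian is a textbook computation. The only point requiring a moment of care is that $t \ge 0$ is needed to conclude $\tilde\rho\, t(t+1) \le \rho\, t(t+1)$ from $\tilde\rho \le \rho$; for $t \in (-1, 0)$ one would get the reverse inequality, but the zCDP definition only requires $t \ge 0$, so this causes no trouble. I would also remark in passing that the resulting privacy loss, being exactly Gaussian, shows that Gaussian noise addition is in some sense the ``canonical'' example of zCDP — the bound in Definition~\ref{defn:cdp} is tight for this mechanism at $t = 0$ and close to tight for all $t \ge 0$.
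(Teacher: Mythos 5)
Your proposal is correct and follows essentially the same route as the paper's proof: decompose coordinatewise via Theorem \ref{thm:privloss_composition}, identify each coordinate's privacy loss as $\mathcal{N}(\rho_j,2\rho_j)$ by Proposition \ref{prop:gauss_privloss}, sum the Gaussians, and bound the moment generating function using $\tilde\rho \le \rho$ and $t \ge 0$. The only difference is that you spell out the Gaussian MGF computation explicitly, which the paper leaves implicit.
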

    \begin{proof}
        Fix neighbouring inputs $x,x'\in\mathcal{X}^n$ and $t \ge 0$.
        By Proposition \ref{prop:gauss_privloss}, for each $j \in [d]$,\\$\privloss{M(x)_j}{M(x')_j} = \mathcal{N}(\hat\rho_j,2\hat\rho_j)$ for $\hat\rho_j = \frac{(q(x)_j-q(x')_j)^2}{2\sigma^2}$.
        By Theorem \ref{thm:privloss_composition},\\$\privloss{M(x)}{M(x')} = \sum_{j=1}^d \mathcal{N}(\hat\rho_j,2\hat\rho_j) = \mathcal{N}(\hat\rho,2\hat\rho)$ for $\hat\rho = \sum_{j=1}^d \hat\rho_j = \frac{\|q(x)-q(x')\|_2^2}{2\sigma^2} \le \rho$. Thus $\ex{Z \gets \privloss{M(x)}{M(x')}}{\exp(tZ)} = \exp(t(t+1)\hat\rho) \le \exp(t(t+1)\rho)$, as required.
    \end{proof}
    
    To analyze the composition of $k$ independent $\varepsilon$-DP algorithms, we will prove three results: (i) Pure $\varepsilon$-DP implies $\frac12\varepsilon^2$-zCDP. (ii) The composition of $k$ independent $\frac12\varepsilon^2$-zCDP algorithms satisfies $\frac12\varepsilon^2k$-zCDP. (iii) $\frac12\varepsilon^2k$-zCDP implies approximate $(\varepsilon',\delta)$-DP with $\delta \in (0,1)$ arbitrary and $\varepsilon' =\varepsilon \cdot \sqrt{2k\log(1/\delta)} + \frac12\varepsilon^2 k$.
    We begin with composition, as this is the raison d'\^etre for concentrated DP:
    
    \begin{theorem}[Composition for Concentrated Differential Privacy]\label{thm:cdp_composition}
        Let $M_1, M_2, \cdots, M_k : \mathcal{X}^n \to \mathcal{Y}$ be randomized algorithms. Suppose $M_j$ is $\rho_j$-zCDP for each $j \in [k]$.
        Define $M : \mathcal{X}^n \to \mathcal{Y}^k$ by $M(x)=(M_1(x),M_2(x),\cdots,M_k(x))$, where each algorithm is run independently. Then $M$ is $\rho$-zCDP for $\rho = \sum_{j=1}^k \rho_j$.  
    \end{theorem}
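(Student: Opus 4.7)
The plan is to reduce the composition claim directly to Theorem \ref{thm:privloss_composition} together with the defining MGF inequality of zCDP. Fix arbitrary neighbouring datasets $x,x'\in\mathcal{X}^n$. My goal is to show that for every $t\ge 0$,
\[ \ex{Z \gets \privloss{M(x)}{M(x')}}{\exp(tZ)} \le \exp(t(t+1)\rho), \]
with $\rho=\sum_{j=1}^k \rho_j$, since this is exactly what Definition \ref{defn:cdp} requires.

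First, I would set $P_j = M_j(x)$ and $Q_j = M_j(x')$, so that $M(x) = P_1\times\cdots\times P_k$ and $M(x') = Q_1\times\cdots\times Q_k$ are product distributions by the independence of the algorithms' internal randomness. Each $\privloss{P_j}{Q_j}$ is well-defined since $M_j$ is $\rho_j$-zCDP, so Theorem \ref{thm:privloss_composition} applies and tells me that the overall privacy loss decomposes as $Z = Z_1+Z_2+\cdots+Z_k$, where $Z_j \gets \privloss{P_j}{Q_j}$ are independent.

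Next I would use independence of the $Z_j$ to factor the moment generating function:
\[ \ex{}{\exp(tZ)} = \ex{}{\exp\!\left(t\sum_{j=1}^k Z_j\right)} = \prod_{j=1}^k \ex{Z_j \gets \privloss{P_j}{Q_j}}{\exp(tZ_j)}. \]
Applying the zCDP hypothesis on each factor gives $\ex{}{\exp(tZ_j)}\le \exp(t(t+1)\rho_j)$, and the product telescopes into $\exp\bigl(t(t+1)\sum_{j=1}^k\rho_j\bigr) = \exp(t(t+1)\rho)$, which is exactly the required bound.

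Honestly, there is no serious obstacle here: the work was already done in Theorem \ref{thm:privloss_composition}, which rewrote the privacy loss of a product as a sum of independent privacy losses, and the MGF form of zCDP in Definition \ref{defn:cdp} is tailor-made so that this sum factors. The only mild subtlety to flag is the well-definedness of $\privloss{M(x)}{M(x')}$, which follows from the mutual absolute continuity of each $P_j$ with respect to $Q_j$ (guaranteed by each $M_j$ being $\rho_j$-zCDP) together with the fact that absolute continuity is preserved under finite products.
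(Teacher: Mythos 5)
Your proposal is correct and matches the paper's proof essentially step for step: fix neighbouring inputs, invoke Theorem \ref{thm:privloss_composition} to write the composed privacy loss as an independent sum, factor the moment generating function, and bound each factor by the $\rho_j$-zCDP hypothesis. The remark about well-definedness of the privacy loss for the product distribution is a fine extra touch but not a departure from the paper's argument.
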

    \begin{proof}
        Fix neighbouring inputs $x,x'\in\mathcal{X}^n$.
        By our assumption that each algorithm $M_j$ is $\rho_j$-zCDP, \[\forall t \ge 0 ~~~~~ \ex{Z_j \gets \privloss{M_j(x)}{M_j(x')}}{\exp(t Z_j)} \le \exp(t(t+1)\cdot\rho_j).\]
        By Theorem \ref{thm:privloss_composition}, $Z \gets \privloss{M(x)}{M(x')}$ can be written as $Z=\sum_{j=1}^k Z_j$, where $Z_j \gets \privloss{M_j(x)}{M_j(x')}$ independently for each $j \in [k]$.
        
        Thus, for any $t \ge 0$, we have
        \begin{align*}
            \ex{Z \gets \privloss{M(x)}{M(x')}}{\exp(t Z)} &= \ex{\forall j \in [k] ~~ Z_j \gets \privloss{M_j(x)}{M_j(x')} \atop \text{independent}}{\exp\left(t \sum_{j=1}^k Z_j \right)} \\
            &= \prod_{j=1}^k \ex{Z_j \gets \privloss{M_j(x)}{M_j(x')}}{\exp(t Z_j)} \\
            &\le \prod_{j=1}^k \exp(t(t+1) \cdot \rho_j) \\
            &= \exp\left(t(t+1) \cdot \sum_{j=1}^k \rho_j\right) \\
            &= \exp(t(t+1) \cdot \rho).
        \end{align*}
        Since $x$ and $x'$ were arbitrary, this proves that $M$ satisfies $\rho$-zCDP, as required.
    \end{proof}
    
    Next we show how to convert from concentrated DP to approximate DP, which applies the tools we developed earlier. (This conversion is fairly tight, but not completely optimal; Asoodeh, Liao, Calmon, Kosut, and Sankar \cite{asoodeh2020better} give an optimal conversion.)
    
    \begin{proposition}[Conversion from Concentrated DP to Approximate DP]\label{prop:cdp2adp}
        For any $M : \mathcal{X}^n \to \mathcal{Y}$ and any $\varepsilon,t \ge 0$, $M$ satisfies $(\varepsilon,\delta)$-DP with
        \begin{align*}
            \delta &= \sup_{x,x'\in\mathcal{X}^n \atop \text{neighbouring}} \ex{Z \gets \privloss{M(x)}{M(x')}}{\exp(tZ)} \cdot \frac{\exp(-\varepsilon t)}{t+1} \cdot \left( 1 - \frac{1}{t+1} \right)^t\\
            &\le \sup_{x,x'\in\mathcal{X}^n \atop \text{neighbouring}} \ex{Z \gets \privloss{M(x)}{M(x')}}{\exp(t(Z-\varepsilon))}.
        \end{align*}
        In particular, if $M$ satisfies $\rho$-zCDP, then $M$ satisfies $(\varepsilon,\delta)$-DP for any $\varepsilon \ge \rho$ with
        \begin{align*}
            \delta &= \inf_{t > 0} ~\exp(t(t+1)\rho-\varepsilon t) \cdot \frac{1}{t+1} \cdot \left( 1 - \frac{1}{t+1} \right)^t \\
            &\le \exp(-(\varepsilon-\rho)^2/4\rho).
        \end{align*} 
    \end{proposition}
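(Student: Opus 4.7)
The plan is to start from the second characterization in Proposition~\ref{prop:privloss_adp}, namely $\delta = \ex{Z \gets \privloss{P}{Q}}{\max\{0, 1 - \exp(\varepsilon-Z)\}}$, and bound the inner quantity pointwise by a scaled exponential $e^{tz}$ whose expectation is controlled by the moment generating function of the privacy loss distribution. Concretely, I would establish the pointwise inequality
\[
\max\{0, 1 - e^{\varepsilon - z}\} \le \frac{e^{-\varepsilon t}}{t+1} \cdot \left(1 - \frac{1}{t+1}\right)^t \cdot e^{tz}
\]
for all $z \in \mathbb{R}$ and $t \ge 0$. For $z \le \varepsilon$ the left side is zero so the inequality is trivial; for $z > \varepsilon$ the inequality is equivalent to showing that $h(z) := (1 - e^{\varepsilon-z}) e^{-tz}$ is bounded above by the stated constant. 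A single derivative computation gives $h'(z) = e^{-tz}\bigl((1+t)e^{\varepsilon-z} - t\bigr)$, so $h$ is maximized at $z^* = \varepsilon + \log((t+1)/t)$, and plugging this back in yields exactly the claimed constant $\tfrac{1}{t+1}\cdot(1 - 1/(t+1))^t \cdot e^{-\varepsilon t}$. Taking expectations against $Z \gets \privloss{M(x)}{M(x')}$ and then the supremum over neighbouring pairs gives the first displayed bound.

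For the weaker (but cleaner) upper bound $\delta \le \ex{}{\exp(t(Z-\varepsilon))}$, I would observe that $(1 - 1/(t+1))^t \le 1$ and $1/(t+1) \le 1$, so the constant in the first bound is at most $e^{-\varepsilon t}$; alternatively one can verify directly that $\max\{0, 1 - e^{\varepsilon - z}\} \le e^{t(z-\varepsilon)}$ by checking the two cases $z \le \varepsilon$ and $z > \varepsilon$ (in the latter, the LHS is at most $1$ and the RHS is at least $1$).

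The zCDP specialization is then a direct substitution: by Definition~\ref{defn:cdp}, if $M$ is $\rho$-zCDP then $\ex{}{\exp(tZ)} \le \exp(t(t+1)\rho)$ for every $t \ge 0$, which immediately yields the claimed identity after plugging in and taking the infimum over $t > 0$. For the final closed-form bound $\delta \le \exp(-(\varepsilon-\rho)^2/4\rho)$, I would use the simpler bound $\delta \le \exp(t(t+1)\rho - \varepsilon t)$ and minimize over $t \ge 0$: setting the derivative $(2t+1)\rho - \varepsilon$ to zero gives $t^* = (\varepsilon - \rho)/(2\rho)$, which is nonnegative exactly when $\varepsilon \ge \rho$, and substituting back gives exponent $-(\varepsilon-\rho)^2/(4\rho)$ after a brief algebraic simplification.

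The main obstacle is the pointwise optimization in the first step: one needs to guess (or derive) the right tilting constant $c(t) = \frac{e^{-\varepsilon t}}{t+1}(1 - 1/(t+1))^t$ so that the bound is tight at the worst-case $z$ rather than merely valid, since without this sharp form one loses the tightness that lets the eventual zCDP-to-approximate-DP conversion approach the known optimal conversion. Once the maximizer $z^*$ is identified by calculus, the remaining steps are purely mechanical — expectation, supremum over neighbours, substitution of the zCDP moment bound, and a one-variable minimization.
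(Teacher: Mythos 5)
Your proposal is correct and follows essentially the same route as the paper: the pointwise bound $\max\{0,1-e^{\varepsilon-z}\}\le c\cdot e^{tz}$ with the tilting constant optimized at $z^*=\varepsilon+\log(1+1/t)$, the cruder bound via $\mathbb{I}[z>\varepsilon]\le e^{t(z-\varepsilon)}$, and the choice $t=(\varepsilon-\rho)/2\rho$ for the closed-form zCDP conversion are all exactly the paper's steps.
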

    \begin{proof}
        Fix arbitrary neighbouring inputs $x,x'$. 
        Fix $\varepsilon, t \ge 0$. 
        We must show that for all $S$ we have $\pr{}{M(x) \in S} \le e^\varepsilon \cdot \pr{}{M(x')\in S} + \delta$ for the value of $\delta$ given in the statement above.
        
        Let $Z \gets \privloss{M(x)}{M(x')}$. 
        By Proposition \ref{prop:privloss_adp}, it suffices to show \[\ex{}{\max\{0,1-\exp(\varepsilon-Z)\}} \le \delta\] for the value of $\delta$ given in the statement above.
        
        Let $c>0$ be a constant such that, with probability 1, \[\max\{0,1-\exp(\varepsilon-Z)\} \le c \cdot \exp(t Z).\]
        Taking expectations of both sides we have $\ex{}{\max\{0,1-\exp(\varepsilon-Z)\}} \le c \cdot \ex{}{\exp(tZ)}$, which is the kind of bound we need. It only remains to identify the appropriate value of $c$ to obtain the desired bound.
        
        We trivially have $0 \le c \cdot \exp(t Z)$ as long as $c > 0$. Thus we only need to ensure $1-\exp(\varepsilon-Z) \le c \cdot \exp(t Z)$. That is, for any value of $t>0$, we can set
        \begin{align*}
            c &= \sup_{z \in \mathbb{R}} \frac{1-\exp(\varepsilon-z)}{\exp(tz)} \\
            &= \sup_{z \in \mathbb{R}} \exp(-tz) - \exp(\varepsilon-(t+1)z) \\
            &= \frac{\exp(-\varepsilon t)}{t+1} \cdot \left( 1- \frac{1}{t+1} \right)^t,
        \end{align*}
        where the final equality follows from using calculus to determine that $z = \varepsilon + \log(1+1/t)$ is the optimal value of $z$.
        Thus $\ex{}{\max\{0,1-\exp(\varepsilon-Z)\}} \le \ex{}{\exp(tZ)} \cdot \frac{\exp(-\varepsilon t)}{t+1} \cdot \left( 1- \frac{1}{t+1} \right)^t$, which proves the first part of the statement.
        
        Now assume $M$ is $\rho$-zCDP. Thus \[\forall t \ge 0 ~~~~~ \ex{}{\exp(t Z)} \le \exp(t (t+1) \cdot \rho),\] which immediately yields the equality in the second part of the statement. 
        
        To obtain the inequality in the second part of the statement, we observe that \[\max\{0,1-\exp(\varepsilon-Z)\} \le \mathbb{I}[Z>\varepsilon] \le \exp(t (Z-\varepsilon)),\] whence $c \le \exp(-\varepsilon t)$. Substituting in this upper bound on $c$ and setting $t=(\varepsilon-\rho)/2\rho$ completes the proof 
    \end{proof}
    
    \begin{remark}\label{rem:rho}
        Proposition \ref{prop:cdp2adp} shows that $\rho$-zCDP implies $(\varepsilon, \delta=\exp(-(\varepsilon-\rho)^2/4\rho))$-DP for all $\varepsilon \ge \rho$. Equivalently, $\rho$-zCDP implies $(\varepsilon = \rho + 2\sqrt{\rho \cdot \log(1/\delta)}, \delta)$-DP for all $\delta >0$. Also, to obtain a given a target $(\varepsilon,\delta)$-DP guarantee, it suffices to have $\rho$-zCDP with \[ \frac{\varepsilon^2}{4\log(1/\delta) + 4\varepsilon} \le \rho = \left( \sqrt{\log(1/\delta) + \varepsilon} - \sqrt{\log(1/\delta)} \right)^2 \le \frac{\varepsilon^2}{4\log(1/\delta)}.\] This gives a sufficient condition; tighter bounds can be obtained from Proposition \ref{prop:cdp2adp}.
        For example, if we add $\mathcal{N}(0,\sigma^2)$ to a query of sensitivity 1, then, by Lemma \ref{lem:gauss_cdp}, to ensure $(\varepsilon,\delta)$-DP it suffices to set $\sigma^2 = \frac{2}{\varepsilon^2}\cdot\left( \log(1/\delta) + \varepsilon \right)$.
    \end{remark}

    The final piece of the puzzle is the conversion from pure DP to concentrated DP.
    
    \begin{proposition}\label{prop:pdp2cdp}
        Suppose $M$ satisfies $\varepsilon$-DP, then $M$ satisfies $\frac12 \varepsilon^2$-zCDP.
    \end{proposition}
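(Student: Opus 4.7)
\medskip

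\noindent\textbf{Proof plan.} Fix neighbouring $x,x'\in\mathcal{X}^n$, and let $P=M(x)$, $Q=M(x')$, and $L(y)=\llr{P}{Q}(y)$. Pure $\varepsilon$-DP, applied in both directions using the symmetry of the neighbouring relation, gives $e^{-\varepsilon}\le P(y)/Q(y)\le e^\varepsilon$, i.e.\ $|L(y)|\le\varepsilon$ almost surely. Since the privacy loss random variable $Z\gets\privloss{P}{Q}$ is by definition $L(Y)$ for $Y\gets P$, the cumulant generating function of $Z$ equals $h(t):=\log\ex{Y\gets P}{\exp(tL(Y))}$. The goal is to show $h(t)\le t(t+1)\cdot\tfrac12\varepsilon^2$ for all $t\ge0$, which is exactly the $\tfrac12\varepsilon^2$-zCDP condition.

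\medskip

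\noindent The plan is to exploit two ``anchor'' values of $h$ together with a bound on its second derivative. First, $h(0)=0$ trivially. Second, by Lemma~\ref{lem:dual_privloss} (or directly), $h(-1)=\log\ex{Y\gets P}{Q(Y)/P(Y)}=\log\int_{\mathcal Y}Q(y)\,\mathrm dy=0$. Next, a standard computation (differentiating under the expectation, valid since $L$ is bounded) gives $h''(t)=\var{Y\gets P_t}{L(Y)}$, where $P_t$ is the exponential tilt of $P$ with density proportional to $e^{tL(y)}P(y)$. Because $L$ is bounded in $[-\varepsilon,\varepsilon]$ under any such tilted distribution, Popoviciu's inequality gives $h''(t)\le\bigl(2\varepsilon/2\bigr)^2=\varepsilon^2$.

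\medskip

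\noindent Now define $g(t):=\tfrac12\varepsilon^2\cdot t(t+1)-h(t)$. Then $g(-1)=g(0)=0$ and $g''(t)=\varepsilon^2-h''(t)\ge0$, so $g$ is convex on $\mathbb{R}$. For any $t\ge0$, write the anchor $0$ as the convex combination $0=\tfrac{t}{t+1}\cdot(-1)+\tfrac{1}{t+1}\cdot t$; convexity of $g$ then gives $0=g(0)\le\tfrac{t}{t+1}g(-1)+\tfrac{1}{t+1}g(t)=\tfrac{1}{t+1}g(t)$, so $g(t)\ge0$, i.e.\ $h(t)\le\tfrac12\varepsilon^2\cdot t(t+1)$. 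Exponentiating yields $\ex{Z\gets\privloss{M(x)}{M(x')}}{\exp(tZ)}\le\exp(t(t+1)\cdot\tfrac12\varepsilon^2)$ for all $t\ge0$, which is Definition~\ref{defn:cdp} with $\rho=\tfrac12\varepsilon^2$.

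\medskip

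\noindent\textbf{Main obstacle.} The only real work is the second-derivative bound $h''(t)\le\varepsilon^2$: one must justify interchanging differentiation and expectation (immediate because $L$ is bounded so $e^{tL}$ has bounded derivatives of all orders), identify $h''(t)$ with the variance of $L$ under the tilted law $P_t$, and then apply the variance-of-bounded-random-variable bound. Everything else --- the two anchor values $h(0)=h(-1)=0$ and the convexity inequality --- is routine once the framing in terms of the tilted cumulant generating function is in place.
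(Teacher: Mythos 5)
Your proof is correct, and it takes a recognizably different route from the paper's, even though both rest on exactly the same two facts: $|Z|\le\varepsilon$ almost surely and $\ex{}{\exp(-Z)}=1$ (your anchor $h(-1)=0$, which is the consequence of Lemma \ref{lem:dual_privloss} the paper also uses). The paper goes through Hoeffding's lemma (Lemma \ref{lem:hoeffding}): it reduces to a two-point distribution by randomized rounding and Jensen, bounds the second derivative of that Bernoulli cumulant generating function by $\varepsilon^2$ via $x(1-x)\le\tfrac14$, evaluates the lemma at $t=-1$ to extract $\ex{}{Z}\le\tfrac12\varepsilon^2$, and then applies it a second time for $t>0$. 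You instead bound the second derivative of the cumulant generating function of $Z$ itself, using the exponential-tilting identity $h''(t)=\var{Y\gets P_t}{L(Y)}$ together with the fact that a random variable supported in $[-\varepsilon,\varepsilon]$ has variance at most $\varepsilon^2$, and you replace both invocations of Hoeffding's lemma by a single convexity interpolation of $g(t)=\tfrac12\varepsilon^2 t(t+1)-h(t)$ between the anchors $g(-1)=g(0)=0$; notably you never need to isolate a bound on $\ex{}{Z}$ at all. The analytic core is the same in both arguments --- a second-derivative bound of $\varepsilon^2$ on a cumulant generating function --- but yours applies it directly to $Z$ rather than to its two-point rounding, which makes the argument shorter and self-contained, at the mild cost of justifying differentiation under the expectation and the tilting identity (immediate here since $Z$ is bounded, as you note). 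Both routes yield the same constant and hence the same conclusion, $\tfrac12\varepsilon^2$-zCDP.
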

    \begin{proof}
        Fix neighbouring inputs $x,x'$.
        Let $Z \gets \privloss{M(x)}{M(x')}$.
        By our $\varepsilon$-DP assumption, $Z$ is supported on the interval $[-\varepsilon,+\varepsilon]$.
        Our task is to prove that $\ex{}{\exp(t Z)} \le \exp(\frac12 \varepsilon^2 t (t+1))$ for all $t>0$.
        
        The key additional fact is the following consequence of Lemma \ref{lem:dual_privloss} %
        \[\ex{Z \gets \privloss{P}{Q}}{e^{-Z}} = \ex{Y \gets P}{e^{-\llr{P}{Q}(Y)}} = \ex{Y \gets Q}{e^{\llr{P}{Q}(Y)} \cdot e^{-\llr{P}{Q}(Y)}} = \ex{Y \gets Q}{1} = 1.\]
        We can write this out as an integral to make it clear:
        \begin{align*}
            \ex{Z \gets \privloss{P}{Q}}{\exp(-Z)} &= \ex{Y \gets P}{\exp(-\llr{P}{Q}(Y))} \\
            &= \ex{Y \gets P}{\exp(-\log(P(Y)/Q(Y)))} \\
            &= \ex{Y \gets P}{\frac{Q(Y)}{P(Y)}} \\
            &= \int_{\mathcal{Y}} \frac{Q(y)}{P(y)} P(y) \mathrm{d}y \\
            &= \int_{\mathcal{Y}} Q(y) \mathrm{d}y \\
            &= 1.
        \end{align*}
        The combination of these two facts -- $Z \in [-\varepsilon,\varepsilon]$ and $\ex{}{\exp(-Z)}=1$ -- is all we need to know about $Z$ to prove the result.
        The technical ingredient is Hoeffding's lemma \cite{hoeffding58probability}:
        
        \begin{lemma}[Hoeffding's lemma]\label{lem:hoeffding}
            Let $Z$ be a random variable supported on the interval $[-\varepsilon,+\varepsilon]$. Then for all $t \in \mathbb{R}$, $\ex{}{\exp(t Z)} \le \exp(t \ex{}{Z} + t^2 \varepsilon^2 / 2)$.
        \end{lemma}
        \begin{proof}
            To simplify things, we can assume without loss of generality that $Z$ is supported on the discrete set $\{-\varepsilon,+\varepsilon\}$.
            To prove this claim, let $\tilde{Z} \in \{-\varepsilon,+\varepsilon\}$ be a randomized rounding of $Z$. That is, $\exc{\tilde{Z}}{\tilde{Z}}{Z=z}=z$ for all $z \in [-\varepsilon,+\varepsilon]$.
            By Jensen's inequality, since $\exp(tz)$ is a convex function of $z \in \mathbb{R}$ for any fixed $t \in \mathbb{R}$, we have \[\ex{Z}{\exp(tZ)} = \ex{Z}{\exp\left(t\exc{\tilde{Z}}{\tilde{Z}}{Z}\right)} \le \ex{Z}{\exc{\tilde{Z}}{\exp(t \tilde{Z})}{Z}} = \ex{\tilde{Z}}{\exp(t \tilde{Z})}.\]
            Note that $\ex{}{\tilde{Z}} = \ex{}{Z}$.
            Thus it suffices to prove $\ex{}{\exp(t \tilde{Z})} \le \exp(t \ex{}{\tilde{Z}} + \frac12 \varepsilon^2 t^2)$ for all $t \in \mathbb{R}$.

            The final step in the proof is some calculus: Let $p:=\pr{}{\tilde{Z}=\varepsilon}=1-\pr{}{\tilde{Z}=-\varepsilon}$. Then $\ex{}{Z} = \ex{}{\tilde{Z}} = \varepsilon p - \varepsilon (1-p)= \varepsilon (2p-1)$.
            Define $f : \mathbb{R} \to \mathbb{R}$ by \[f(t) := \log \ex{}{\exp(t\tilde{Z})} = \log(p\cdot e^{t\varepsilon} + (1-p) \cdot e^{-t\varepsilon}) = \log(1-p+p\cdot e^{2t\varepsilon})-t\varepsilon.\]
            For all $t \in \mathbb{R}$, \[f'(t) = \frac{2\varepsilon p \cdot e^{2t\varepsilon}}{1-p+p\cdot e^{2t\varepsilon}}-\varepsilon\]
            and 
            \begin{align*}
                f''(t) &= \frac{(2\varepsilon)^2 p \cdot e^{2t\varepsilon} \cdot (1-p+p\cdot e^{2t\varepsilon}) - (2\varepsilon p \cdot e^{2t\varepsilon})^2}{(1-p+p\cdot e^{2t\varepsilon})^2} \\
                &= (2\varepsilon)^2 \cdot \frac{p \cdot e^{2t\varepsilon}}{1-p+p\cdot e^{2t\varepsilon}} \cdot \left( 1 - \frac{ p \cdot e^{2t\varepsilon}}{1-p+p\cdot e^{2t\varepsilon}} \right) \\
                &= (2\varepsilon)^2  \cdot x \cdot (1-x) \le (2\varepsilon)^2 \cdot \frac14 =\varepsilon^2.
            \end{align*}
            The final line sets $x=\frac{p \cdot e^{2t\varepsilon}}{1-p+p\cdot e^{2t\varepsilon}}$ and uses the fact that the function $x \cdot (1-x)$ is maximized at $x=\frac12.$
            
            Note that $f(0)=0$ and $f'(0)=2\varepsilon p - \varepsilon = \ex{}{\tilde{Z}} = \ex{}{Z}$.
            By the fundamental theorem of calculus, for all $t \in \mathbb{R}$,
            \[f(t) = f(0) + f'(0) \cdot t + \int_0^t \int_0^s f''(r) \mathrm{d}r \mathrm{d}s \le 0 + \ex{}{Z} \cdot t + \int_0^t \int_0^s \varepsilon^2 \mathrm{d}r \mathrm{d}s = \ex{}{Z} \cdot t + \frac12 \varepsilon^2 t^2.\]
            This proves the lemma, as $\ex{}{\exp(t Z)} \le \ex{}{\exp(t \tilde{Z})} = \exp(f(t)) \le \exp( \ex{}{Z} \cdot t + \frac12 \varepsilon^2 t^2 )$.
        \end{proof}
        
        If we substitute $t=-1$ into Lemma \ref{lem:hoeffding}, we have 
        \[1 = \ex{}{\exp(-Z)} \le \exp(-\ex{}{Z} + \frac12 \varepsilon^2),\]
        which rearranges to $\ex{}{Z} \le \frac12\varepsilon^2$.
        
        Substituting this bound on the expectation back into Lemma \ref{lem:hoeffding} yields the result: For all $t>0$, we have
        \[\ex{}{\exp(t Z)} \le \exp\left( t \cdot \ex{}{Z} + \frac12 \varepsilon^2 t^2 \right) \le \exp\left( \frac12 \varepsilon^2 t (t+1)\right).\]
    \end{proof}
    
    \begin{figure}
        \centering
        \includegraphics[width=0.75\textwidth]{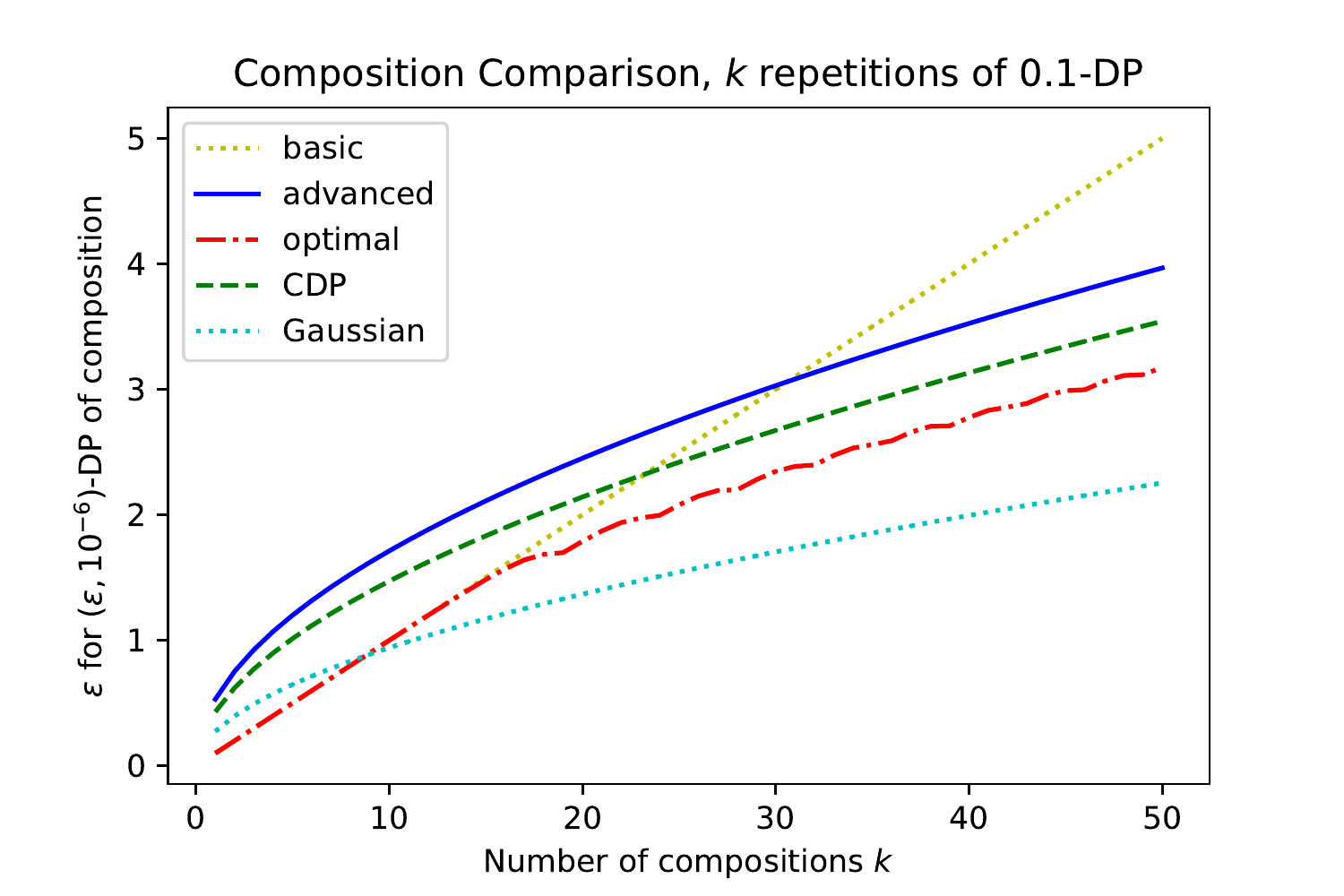}
        \caption{Comparison of different composition bounds. We compose $k$ independent $0.1$-DP algorithms to obtain a $(\varepsilon,10^{-6})$-DP guarantee. Theorem \ref{thm:basic_composition} -- \texttt{basic} composition -- gives $\varepsilon=k\cdot 0.1$. For comparison, we have \texttt{advanced} composition (Theorem \ref{thm:advancedcomposition_pure}), an \texttt{optimal} bound \cite{kairouz2015composition}, and Concentrated DP (\texttt{CDP}) with the improved conversion from Proposition \ref{prop:cdp2adp}. For comparison, we also consider composing the \texttt{Gaussian} mechanism using Corollary \ref{cor:gauss_adp_exact_multi}, where the Gaussian noise is scaled to have the same variance as Laplace noise would have to attain $0.1$-DP.}
        \label{fig:composition_comparison}
    \end{figure}
    
    Combining these three results lets us prove what is known as the advanced composition theorem where we start with each individual algorithm satisfying pure DP \cite{dwork2010boosting}:
    
    \begin{theorem}[Advanced Composition Starting with Pure DP]\label{thm:advancedcomposition_pure}
        Let $M_1, M_2, \cdots, M_k : \mathcal{X}^n \to \mathcal{Y}$ be randomized algorithms. Suppose $M_j$ is $\varepsilon_j$-DP for each $j \in [k]$.
        Define $M : \mathcal{X}^n \to \mathcal{Y}^k$ by $M(x)=(M_1(x),M_2(x),\cdots,M_k(x))$, where each algorithm is run independently. Then $M$ is $(\varepsilon,\delta)$-DP for any $\delta>0$ with \[\varepsilon = \frac12 \sum_{j=1}^k \varepsilon_j^2 + \sqrt{2\log(1/\delta) \sum_{j=1}^k \varepsilon_j^2}.\]
    \end{theorem}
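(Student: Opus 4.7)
The plan is to simply chain together the three results that the paper has carefully set up immediately preceding the theorem: the conversion from pure DP to concentrated DP (Proposition \ref{prop:pdp2cdp}), composition of concentrated DP (Theorem \ref{thm:cdp_composition}), and the conversion from concentrated DP back to approximate DP (Proposition \ref{prop:cdp2adp}, in the form summarized in Remark \ref{rem:rho}). There is no hard step here; everything is in place, and the task is just to do the arithmetic.

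More precisely, I would proceed as follows. First, by Proposition \ref{prop:pdp2cdp}, each $M_j$ being $\varepsilon_j$-DP implies that $M_j$ is $\rho_j$-zCDP for $\rho_j = \tfrac{1}{2}\varepsilon_j^2$. Second, by Theorem \ref{thm:cdp_composition}, the independent composition $M = (M_1, \ldots, M_k)$ is $\rho$-zCDP for
\[
\rho \;=\; \sum_{j=1}^k \rho_j \;=\; \frac{1}{2}\sum_{j=1}^k \varepsilon_j^2.
\]
Third, by Remark \ref{rem:rho} (which is a reformulation of Proposition \ref{prop:cdp2adp}), for any $\delta > 0$ the mechanism $M$ satisfies $(\varepsilon', \delta)$-DP with $\varepsilon' = \rho + 2\sqrt{\rho \log(1/\delta)}$. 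Plugging in $\rho = \tfrac{1}{2}\sum_j \varepsilon_j^2$ gives
\[
\varepsilon' \;=\; \frac{1}{2}\sum_{j=1}^k \varepsilon_j^2 \;+\; 2\sqrt{\tfrac{1}{2}\log(1/\delta) \sum_{j=1}^k \varepsilon_j^2} \;=\; \frac{1}{2}\sum_{j=1}^k \varepsilon_j^2 + \sqrt{2\log(1/\delta) \sum_{j=1}^k \varepsilon_j^2},
\]
which is exactly the $\varepsilon$ claimed in the statement.

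The only thing to double check is the algebraic simplification in the last step (pulling the factor of $\tfrac{1}{2}$ out of the square root to convert $2\sqrt{\tfrac{1}{2} \cdot \bullet}$ into $\sqrt{2 \cdot \bullet}$), which is immediate. Since all three ingredients are already proved earlier in the excerpt, there is no genuine obstacle; the proof is essentially a one-line chain of invocations, and the main ``content'' of advanced composition has been absorbed into the zCDP machinery (in particular, the Hoeffding-lemma argument in Proposition \ref{prop:pdp2cdp} and the Chernoff-bound style argument in Proposition \ref{prop:cdp2adp}).
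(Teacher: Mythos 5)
Your proposal is correct and follows exactly the route the paper itself takes: Proposition \ref{prop:pdp2cdp} to get $\rho_j = \tfrac12\varepsilon_j^2$-zCDP, Theorem \ref{thm:cdp_composition} to compose to $\rho$-zCDP, and Proposition \ref{prop:cdp2adp} (equivalently Remark \ref{rem:rho}) rearranged to $\varepsilon = \rho + \sqrt{4\rho\log(1/\delta)}$, with the same final algebra. No gaps.
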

    \begin{proof}[Proof of Theorem \ref{thm:advancedcomposition_pure}.]
        By Proposition \ref{prop:pdp2cdp}, for each $j \in [k]$, $M_j$ satisfies $\rho_j$-zCDP with $\rho_j = \frac12 \varepsilon_j^2$.
        By composition of concentrated DP (Theorem \ref{thm:cdp_composition}), $M$ satisfies $\rho$-zCDP with $\rho=\sum_{j=1}^k \rho_j$.
        Finally, Proposition \ref{prop:cdp2adp} can convert this concentrated DP guarantee to approximate DP: $M$ satisfies $(\varepsilon,\delta)$-DP for all $\varepsilon \ge \rho$ and $\delta = \exp(-(\varepsilon-\rho)^2/4\rho)$. We can rearrange this so that $\delta>0$ is arbitrary and $\varepsilon = \rho + \sqrt{4\rho \log(1/\delta)}$.
    \end{proof}

    Recall that the basic composition theorem (Theorem \ref{thm:basic_composition}) gives $\delta=0$ and $\varepsilon = \sum_{j=1}^k \varepsilon_j$. That is, basic composition scales with the 1-norm of the vector $(\varepsilon_1, \varepsilon_2, \cdots, \varepsilon_k)$, whereas advanced composition scales with the 2-norm of this vector (and the squared 2-norm).
    Neither bound strictly dominates the other. However, asymptotically (in a sense we will make precise in the next paragraph) advanced composition dominates basic composition.
    
    Suppose we have a fixed $(\varepsilon,\delta)$-DP guarantee for the entire system and we must answer $k$ queries of sensitivity $1$.
    Using basic composition, we can answer each query by adding $\mathsf{Laplace}(k/\varepsilon)$ noise to each answer.
    However, using advanced composition, we can answer each query by adding $\mathsf{Laplace}(\sqrt{k/2\rho})$ noise to each answer, where \[\rho %
    \ge \frac{\varepsilon^2}{4\log(1/\delta)+4\varepsilon}\] (per Remark \ref{rem:rho}).
    If the privacy parameters $\varepsilon,\delta>0$ are fixed (which implies $\rho$ is fixed) and $k \to \infty$, we can see that asymptotically advanced composition gives noise per query scaling as $\Theta(\sqrt{k})$, while basic composition results in noise scaling as $\Theta(k)$.

    \subsection{Adaptive Composition \& Postprocessing}
    
    Thus far we have only considered non-adaptive composition. That is, we assume that the algorithms $M_1,M_2,\cdots,M_k$ being composed are independent. More generally, adaptive composition considers the possibility that $M_j$ can depend on the outputs of $M_1, \cdots, M_{j-1}$. This kind of dependence arises very often, either in an iterative algorithm, or an interactive system where a human chooses analyses to perform sequentially. Fortunately, adaptive composition is easy to deal with.
    
    \begin{proposition}[Adaptive Composition of Concentrated DP]\label{prop:adaptivecomp}
        Let $M_1 : \mathcal{X}^n \to \mathcal{Y}_1$ be $\rho_1$-zCDP. Let $M_2 : \mathcal{X}^n \times \mathcal{Y}_1 \to \mathcal{Y}_2$ be such that, for all $y_1 \in \mathcal{Y}_1$, the algorithm $x \mapsto M(x,y_1)$ is $\rho_2$-zCDP. That is, $M_2$ is $\rho_2$-zCDP in terms of its first argument for any fixed value of the second argument.
        Define $M : \mathcal{X}^n \to \mathcal{Y}_2$ by $M(x) = M_2(x,M_1(x))$. Then $M$ is $(\rho_1+\rho_2)$-zCDP.
    \end{proposition}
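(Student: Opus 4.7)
The plan is to mirror the proof of Theorem \ref{thm:cdp_composition}, but handle the adaptivity using the tower property of conditional expectation. The key observation is that the log likelihood ratio still splits additively: for any output $(y_1,y_2) \in \mathcal{Y}_1 \times \mathcal{Y}_2$, the density of $M(x)$ at $(y_1,y_2)$ factors as $M_1(x)(y_1) \cdot M_2(x,y_1)(y_2)$ (and similarly for $x'$), so
\[ \llr{M(x)}{M(x')}(y_1,y_2) = \llr{M_1(x)}{M_1(x')}(y_1) + \llr{M_2(x,y_1)}{M_2(x',y_1)}(y_2). \]
Hence $Z \gets \privloss{M(x)}{M(x')}$ can be written as $Z = Z_1 + Z_2$, where $Z_1 = \llr{M_1(x)}{M_1(x')}(Y_1)$ for $Y_1 \gets M_1(x)$ and, conditional on $Y_1 = y_1$, the variable $Z_2$ is distributed as $\privloss{M_2(x,y_1)}{M_2(x',y_1)}$ (with $Y_2 \gets M_2(x,y_1)$).

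Next I would fix arbitrary neighbouring $x,x'$ and $t \ge 0$, and bound the MGF using the tower property:
\[ \ex{}{\exp(tZ)} = \ex{Y_1}{\exp(tZ_1) \cdot \exc{}{\exp(tZ_2)}{Y_1}}. \]
By assumption, for every fixed $y_1 \in \mathcal{Y}_1$ the algorithm $x \mapsto M_2(x,y_1)$ is $\rho_2$-zCDP, so the inner conditional expectation is bounded by $\exp(t(t+1)\rho_2)$ uniformly in $y_1$. Factoring this out and then applying the $\rho_1$-zCDP guarantee for $M_1$ yields
\[ \ex{}{\exp(tZ)} \le \exp(t(t+1)\rho_2) \cdot \ex{}{\exp(tZ_1)} \le \exp(t(t+1)(\rho_1+\rho_2)), \]
which is exactly the $(\rho_1+\rho_2)$-zCDP condition from Definition \ref{defn:cdp}.

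The main subtlety — and essentially the only place where the adaptive case differs from Theorem \ref{thm:cdp_composition} — is that the second privacy loss random variable is not independent of $Y_1$; its distribution genuinely depends on $y_1$. This would be a real obstacle if zCDP only held on average over $y_1$, but the hypothesis that $M_2(\cdot,y_1)$ is $\rho_2$-zCDP for \emph{every} $y_1$ gives a pointwise MGF bound that can be pulled outside the expectation over $Y_1$ without loss. A minor technical step is to confirm that $\privloss{M(x)}{M(x')}$ is well-defined, which follows from absolute continuity of $M_1(x)$ w.r.t.\ $M_1(x')$ and of $M_2(x,y_1)$ w.r.t.\ $M_2(x',y_1)$ for each $y_1$, both of which are implicit in the respective zCDP assumptions.
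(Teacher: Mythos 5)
Your main computation---splitting the privacy loss as $Z = Z_1 + Z_2$, bounding the conditional moment generating function of $Z_2$ uniformly in $y_1$, and then applying the $\rho_1$-zCDP bound for $M_1$---is exactly the argument in the paper. But there is a genuine gap at your very first step: the claimed factorization of ``the density of $M(x)$ at $(y_1,y_2)$'' presupposes that $M$ outputs the \emph{pair} $(Y_1,Y_2)$. As defined in the statement, $M(x) = M_2(x, M_1(x))$ outputs only $Y_2 \in \mathcal{Y}_2$; its density at a point $y_2$ is the mixture $\int_{\mathcal{Y}_1} M_1(x)(y_1)\, M_2(x,y_1)(y_2)\, \mathrm{d}y_1$, and the log likelihood ratio of two such mixtures does not in general split additively as you claim (it only does when $y_1$ can be recovered from $y_2$). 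So what you have actually proved is the proposition for the joint mechanism $\hat{M}(x) = \bigl(M_1(x), M_2(x,M_1(x))\bigr)$, not for $M$ itself.

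The paper is explicit about this point: it first carries out your calculation under the ``simplifying technical assumption'' that $y_2$ determines $y_1$ (equivalently, for $\hat{M}$), and then deduces the general case by observing that $M$ is a postprocessing of $\hat{M}$ and proving that postprocessing cannot increase the moment generating function of the privacy loss (Lemma \ref{lem:postprocessing} and Corollary \ref{cor:postprocesing}). That postprocessing step is not a triviality in this framework: its proof uses the conditional-distribution identity $P(y)/Q(y) = \mathbb{E}_{\hat{Y} \gets \hat{Q}_y}\bigl[\hat{P}(\hat{Y})/\hat{Q}(\hat{Y})\bigr]$ together with Jensen's inequality applied to the convex map $v \mapsto v^{t+1}$. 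To close the gap, either state that you prove the claim for the pair-releasing mechanism $\hat{M}$ and then invoke (and prove, if not already available) such a postprocessing lemma for zCDP, or restrict to the case where $y_1$ is determined by $y_2$ and treat the general case separately.
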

    Proposition \ref{prop:adaptivecomp} only considers the composition of two algorithms, but it can be extended to $k$ algorithms by induction.
    \begin{proof}
        Fix neighbouring inputs $x,x' \in \mathcal{X}^n$. Fix $t \ge 0$. 
        Let $Z \gets \privloss{M(x)}{M(x')}$. We must prove $\ex{}{\exp(tZ)} \le \exp(t(t+1)(\rho_1+\rho_2))$.
        
        For non-adaptive composition, we could write $Z = Z_1 + Z_2$ where $Z_1 \gets \privloss{M_1(x)}{M_1(x')}$ and $Z_2 \gets \privloss{M_2(x)}{M_2(x')}$ are independent. However, we cannot do this in the adaptive case -- the two privacy losses are not independent. Instead, we use the fact that, conditioned on the value of the first privacy loss $Z_1$, the privacy loss $Z_2$ still satisfies the bound on the moment generating function. That is, for all $z_1$, we have $\ex{}{\exp(tZ_2) \mid Z_1=z_1} \le \exp(t(t+1)\rho_2)$. To make this argument precise, we must expand out the relevant definitions.

        For now, we make a simplifying technical assumption (which we will justify later): We assume that, given $y_2=M(x,y_1)$, we can determine $y_1$. This means we can decompose $\llr{M(x)}{M(x')}(y_2) = \llr{M_1(x)}{M_1(x')}(y_1) + \llr{M_2(x,y_1)}{M_2(x',y_1)}(y_2)$. Thus
        \begin{align*}
            &\ex{Z \gets \privloss{M(x)}{M(x')}}{\exp(tZ)} \\
            &~= \ex{Y \gets M_2(x,M_1(x))}{\exp\left(t \cdot \llr{M(x)}{M(x')}(Y)\right)} \\
            &~= \ex{Y_1 \gets M_1(x)}{\ex{Y_2 \gets M_2(x,Y_1)}{\exp\left(t \cdot \left( \llr{M_1(x)}{M_1(x')}(Y_1) + \llr{M_2(x,Y_1)}{M_2(x',Y_1)}(Y_2) \right) \right)}}\\
            &~= \ex{Y_1 \gets M_1(x)}{\exp\left(t \cdot \llr{M_1(x)}{M_1(x')}(Y_1)\right) \cdot \ex{Y_2 \gets M_2(x,Y_1)}{\exp\left(t \cdot  \llr{M_2(x,Y_1)}{M_2(x',Y_1)}(Y_2) \right)}}\\
            &~\le \ex{Y_1 \gets M_1(x)}{\exp\left(t \cdot \llr{M_1(x)}{M_1(x')}(Y_1)\right)} \cdot \sup_{y_1} \ex{Y_2 \gets M_2(x,y_1)}{\exp\left(t \cdot  \llr{M_2(x,y_1)}{M_2(x',y_1)}(Y_2) \right)}\\
            &~= \ex{Z_1 \gets \privloss{M_1(x)}{M_1(x')}}{\exp\left(t \cdot Z_1 \right)} \cdot \sup_{y_1} \ex{Z_2 \gets \privloss{M_2(x,y_1)}{M_2(x',y_2)}}{\exp\left(t \cdot Z_2 \right)}\\
            &~\le \exp(t(t+1)\rho_1) \cdot \exp(t(t+1)\rho_2) \\
            &~= \exp(t(t+1)(\rho_1+\rho_2)),
        \end{align*}
        as required. All that remains is to justify our simplifying technical assumption. We can perforce ensure this assumption holds by defining $\hat{M} : \mathcal{X}^n \to \mathcal{Y}_1 \times \mathcal{Y}_2$ by $\hat{M}(x) = (y_1,y_2)$ where $y_1=M_1(x)$ and $y_2=M_2(x,y_1)$ and proving the theorem for $\hat{M}$ in lieu of $M$. Since the output of $\hat{M}$ includes both outputs, rather than just the last output, the above decomposition works. The result holds in general because $M$ is a \emph{postprocessing} of $\hat{M}$. That is, we can obtain $M(x)$ by running $\hat{M}(x)$ and discarding the first part of the output. Intuitively, discarding part of the output cannot hurt privacy. Formally, this is the postprocessing property of concentrated DP, which we prove in Lemma \ref{lem:postprocessing} and Corollary \ref{cor:postprocesing}.
    \end{proof}
    \begin{lemma}[Postprocessing for Concentrated DP]\label{lem:postprocessing}
        Let $\hat{P}$ and $\hat{Q}$ be distributions on $\hat{\mathcal{Y}}$ and let $g : \hat{\mathcal{Y}} \to \mathcal{Y}$ be an arbitrary function. Define $P=g(\hat{P})$ and $Q=g(\hat{Q})$ to be the distributions on $\mathcal{Y}$ obtained by applying $g$ to a function from $\hat{P}$ and $\hat{Q}$ respectively.
        Then, for all $t \ge 0$, \[\ex{Z \gets \privloss{P}{Q}}{\exp(tZ)} \le \ex{\hat{Z} \gets \privloss{\hat{P}}{\hat{Q}}}{\exp(t\hat{Z})}.\]
    \end{lemma}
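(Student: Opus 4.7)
The plan is to rewrite both moment generating functions as expectations of $(\text{likelihood ratio})^{t+1}$ under the ``denominator'' distribution, so that the claim reduces to the data processing inequality for the $f$-divergence with $f(u) = u^{t+1}$, which is convex for $t \ge 0$. First I would apply Lemma \ref{lem:dual_privloss} with $g(z) = \exp((t+1)z)$, using the identity $\exp(tZ) = \exp((t+1)Z) \cdot e^{-Z}$, to obtain
\[\ex{Z \gets \privloss{P}{Q}}{\exp(tZ)} = \ex{Y \gets Q}{\left(P(Y)/Q(Y)\right)^{t+1}},\]
and analogously $\ex{\hat{Z} \gets \privloss{\hat{P}}{\hat{Q}}}{\exp(t\hat{Z})} = \ex{\hat{Y} \gets \hat{Q}}{(\hat{P}(\hat{Y})/\hat{Q}(\hat{Y}))^{t+1}}$. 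Thus it suffices to show $\ex{Y \gets Q}{(P(Y)/Q(Y))^{t+1}} \le \ex{\hat{Y} \gets \hat{Q}}{(\hat{P}(\hat{Y})/\hat{Q}(\hat{Y}))^{t+1}}$.

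The second step is the key identity: the pushforward likelihood ratio is a conditional expectation of the original ratio. Let $\hat{Q}_y$ denote the conditional distribution of $\hat{Y} \gets \hat{Q}$ given $g(\hat{Y}) = y$. Since $P(y) = \hat{P}(g^{-1}(y))$ and $Q(y) = \hat{Q}(g^{-1}(y))$, a short calculation gives
\[\frac{P(y)}{Q(y)} = \frac{\hat{P}(g^{-1}(y))}{\hat{Q}(g^{-1}(y))} = \ex{\hat{Y} \gets \hat{Q}_y}{\hat{P}(\hat{Y})/\hat{Q}(\hat{Y})}.\]
Applying Jensen's inequality to the convex function $u \mapsto u^{t+1}$ yields pointwise
\[\left(P(y)/Q(y)\right)^{t+1} \le \ex{\hat{Y} \gets \hat{Q}_y}{\left(\hat{P}(\hat{Y})/\hat{Q}(\hat{Y})\right)^{t+1}}.\]
Taking expectation over $Y \gets Q$ on both sides and using the tower property -- sampling $Y \gets Q$ followed by $\hat{Y} \gets \hat{Q}_Y$ is equivalent to sampling $\hat{Y} \gets \hat{Q}$ -- converts the right-hand side into $\ex{\hat{Y} \gets \hat{Q}}{(\hat{P}(\hat{Y})/\hat{Q}(\hat{Y}))^{t+1}}$, completing the proof.

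The only real obstacle is the measure-theoretic setup for the conditional distribution $\hat{Q}_y$, which in full generality requires the disintegration theorem; in the discrete case it is just the conditional probability mass function, and with densities it is the usual conditional density, so in all the cases one encounters in practice the verification of the key identity is routine. Once $\hat{Q}_y$ is in hand, the proof is one clean application of Jensen's inequality, and the whole argument is essentially the standard data processing inequality for Rényi divergences (of order $t+1$) specialized to our notation.
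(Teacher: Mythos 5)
Your proposal is correct and follows essentially the same route as the paper's proof: both convert the moment generating functions to $(t+1)$-th moments of the likelihood ratio under the denominator distribution via Lemma \ref{lem:dual_privloss}, establish the key identity $P(y)/Q(y) = \ex{\hat{Y} \gets \hat{Q}_y}{\hat{P}(\hat{Y})/\hat{Q}(\hat{Y})}$ using the conditional distribution $\hat{Q}_y$, and finish with Jensen's inequality for $v \mapsto v^{t+1}$ and the tower property. Your observation that this is the data processing inequality for R\'enyi divergence of order $t+1$ matches the paper's later restatement of this fact in Lemma \ref{lem:rdp_properties}.
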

    \begin{proof}
        To generate a sample from $Y \gets Q$, we sample $\hat{Y} \gets \hat{Q}$ and set $Y = g(\hat{Y})$.
        We consider the reverse process: Given $y \in \mathcal{Y}$, define $\hat{Q}_y$ to be the conditional distribution of $\hat{Y} \gets \hat{Q}$ conditioned on $g(\hat{Y})=y$. That is, $\hat{Q}_y$ is a distribution such that we can generate a sample $\hat{Y} \gets \hat{Q}$ by first sampling $Y \gets Q$ and then sampling $\hat{Y} \gets \hat{Q}_Y$.
        Note that if $g$ is an injective function, then $\hat{Q}_y$ is a point mass.

        We have the following key identity. Formally, this relates the Radon-Nikodym derivative of the postprocessed distributions ($P$ with respect to $Q$) to the Radon-Nikodym derivative of the original distributions ($\hat{P}$ with respect to $\hat{Q}$) via the conditional distribution $\hat{Q}_y$. 
        \[\forall y \in \mathcal{Y} ~~~~~ \frac{P(y)}{Q(y)} = \ex{\hat{Y} \gets \hat{Q}_y}{\frac{\hat{P}(\hat{Y})}{\hat{Q}(\hat{Y})}}.\]
        To see where this identity comes from, write
        \begin{align*}
            \ex{\hat{Y} \gets \hat{Q}_y}{\frac{\hat{P}(\hat{Y})}{\hat{Q}(\hat{Y})}} 
            &= \int_{\{\hat{y} : g(\hat{y})=y\}} \frac{\hat{P}(\hat{y})}{\hat{Q}(\hat{y})} \cdot \hat{Q}_y(\hat{y}) \mathrm{d}\hat{y} \\
            &= \int_{\{\hat{y} : g(\hat{y})=y\}} \frac{\hat{P}(\hat{y})}{\hat{Q}(\hat{y})} \cdot \frac{\hat{Q}(\hat{y})}{\int_{\{\tilde{y} : g(\tilde{y})=y\}} \hat{Q}(\tilde{y}) \mathrm{d}\tilde{y}} \mathrm{d}\hat{y} \\
            &= \frac{\int_{\{\hat{y} : g(\hat{y})=y\}} \hat{P}(\hat{y}) \mathrm{d}\hat{y}}{\int_{\{\tilde{y} : g(\tilde{y})=y\}} \hat{Q}(\tilde{y}) \mathrm{d}\tilde{y}} \\
            &= \frac{P(y)}{Q(y)}.
        \end{align*}
        
        Finally, we have
        \begin{align*}
            \ex{Z \gets \privloss{P}{Q}}{\exp(t Z)} 
            &= \ex{Y \gets P}{\exp(t \cdot \llr{P}{Q}(Y))} \\
            &= \ex{Y \gets Q}{\exp((t+1) \cdot \llr{P}{Q}(Y))} \tag{Lemma \ref{lem:dual_privloss}}\\
            &= \ex{Y \gets Q}{\left(\frac{P(Y)}{Q(Y)}\right)^{t+1}}\\
            &= \ex{Y \gets Q}{\left( \ex{\hat{Y} \gets \hat{Q}_Y}{\frac{\hat{P}(\hat{Y})}{\hat{Q}(\hat{Y})}}\right)^{t+1}}\\
            &\le \ex{Y \gets Q}{\ex{\hat{Y} \gets \hat{Q}_Y}{\left( \frac{\hat{P}(\hat{Y})}{\hat{Q}(\hat{Y})}\right)^{t+1}}} \tag{Jensen}\\
            &= {\ex{\hat{Y} \gets \hat{Q}}{\left( \frac{\hat{P}(\hat{Y})}{\hat{Q}(\hat{Y})}\right)^{t+1}}}\\
            &= \ex{\hat{Y} \gets \hat{Q}}{\exp((t+1) \cdot \llr{\hat{P}}{\hat{Q}}(\hat{Y}))} \\
            &= \ex{\hat{Y} \gets \hat{P}}{\exp(t \cdot \llr{\hat{P}}{\hat{Q}}(\hat{Y}))} \tag{Lemma \ref{lem:dual_privloss}}\\
            &= \ex{\hat{Z} \gets \privloss{\hat{P}}{\hat{Q}}}{\exp(t \hat{Z})},
        \end{align*}
        where the inequality follows from Jensen's inequality and the convexity of the function $v \mapsto v^{t+1}$.
    \end{proof}
    \begin{corollary}\label{cor:postprocesing}
        Let $\hat{M} : \mathcal{X}^n \to \hat{\mathcal{Y}}$ satisfy $\rho$-zCDP. Let $g : \hat{\mathcal{Y}} \to \mathcal{Y}$ be an arbitrary function. Define $M : \mathcal{X}^n \to \mathcal{Y}$ by $M(x) = g(\hat{M}(x))$. Then $M$ is also $\rho$-zCDP.
    \end{corollary}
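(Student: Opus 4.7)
The plan is to derive the corollary as an immediate consequence of Lemma \ref{lem:postprocessing}, which already contains all the technical content. Essentially, the corollary is just Lemma \ref{lem:postprocessing} applied pointwise to every pair of neighbouring inputs, combined with the moment-generating-function form of the zCDP definition (Definition \ref{defn:cdp}).

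Concretely, I would fix an arbitrary pair of neighbouring datasets $x,x' \in \mathcal{X}^n$ and set $\hat{P} = \hat{M}(x)$ and $\hat{Q} = \hat{M}(x')$. By construction of $M$, the output distributions $P := M(x)$ and $Q := M(x')$ are precisely $g(\hat{P})$ and $g(\hat{Q})$, so Lemma \ref{lem:postprocessing} applies and gives, for every $t \ge 0$,
\[
\ex{Z \gets \privloss{M(x)}{M(x')}}{\exp(tZ)} \;\le\; \ex{\hat Z \gets \privloss{\hat{M}(x)}{\hat{M}(x')}}{\exp(t\hat Z)}.
\]
The $\rho$-zCDP hypothesis on $\hat{M}$ bounds the right-hand side by $\exp(t(t+1)\rho)$, and chaining the two inequalities gives exactly the condition in Definition \ref{defn:cdp} for $M$. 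Since $x,x'$ were arbitrary, we conclude that $M$ is $\rho$-zCDP.

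There is really no substantive obstacle here: the analytic work (inserting the conditional distribution $\hat Q_y$ and applying Jensen's inequality to the convex map $v \mapsto v^{t+1}$) has already been done inside Lemma \ref{lem:postprocessing}. The only thing to watch is to make sure the well-definedness assumption on the privacy loss distributions carries over -- but absolute continuity of $\hat{P}$ with respect to $\hat{Q}$ (guaranteed by $\hat{M}$ being $\rho$-zCDP) implies absolute continuity of the pushforwards $P = g(\hat{P})$ and $Q = g(\hat{Q})$, so $\privloss{P}{Q}$ is well-defined too. Thus the proof is a one-line invocation of Lemma \ref{lem:postprocessing} together with the zCDP hypothesis, quantified over neighbouring $x,x'$.
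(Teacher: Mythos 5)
Your proposal is correct and follows essentially the same route as the paper's own proof: fix neighbouring $x,x'$, identify $P=M(x)=g(\hat{M}(x))$ and $Q=M(x')=g(\hat{M}(x'))$, apply Lemma \ref{lem:postprocessing}, and bound the right-hand side by $\exp(t(t+1)\rho)$ using the $\rho$-zCDP hypothesis on $\hat{M}$. Your extra remark that absolute continuity is preserved under the pushforward by $g$ is a fine (and welcome) bit of bookkeeping, but otherwise there is nothing to add.
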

    \begin{proof}
        Fix neighbouring inputs $x,x' \in \mathcal{X}^n$. 
        Let $P=M(x)$, $Q=M(x')$, $\hat{P}=\hat{M}(x)$, and $\hat{Q}=\hat{M}(x')$.
        By Lemma \ref{lem:postprocessing} and the assumption that $\hat{M}$ is $\rho$-zCDP, for all $t \ge 0$, 
        \begin{align*}
            \ex{Z \gets \privloss{M(x)}{M(x')}}{\exp(tZ)} &= \ex{Z \gets \privloss{P}{Q}}{\exp(tZ)} \\ 
            &\le \ex{\hat{Z} \gets \privloss{\hat{P}}{\hat{Q}}}{\exp(t\hat{Z})} \\
            &= \ex{\hat{Z} \gets \privloss{\hat{M}(x)}{\hat{M}(x')}}{\exp(t\hat{Z})} \\
            &\le \exp(t(t+1)\rho),
        \end{align*}
        which implies that $M$ is also $\rho$-zCDP.
    \end{proof}
    
    \subsection{Composition of Approximate $(\varepsilon,\delta)$-DP}
    
    Thus far we have only considered the composition of pure DP mechanisms (Theorems \ref{thm:basic_composition} \& \ref{thm:advancedcomposition_pure}) and the Gaussian mechanism (Corollary \ref{cor:gauss_adp_exact_multi}). What about approximate $(\varepsilon,\delta)$-DP?
    
    We have the following result which extends Theorems \ref{thm:basic_composition} \& \ref{thm:advancedcomposition_pure} to approximate DP and to adaptive composition.
    
    \begin{theorem}[Advanced Composition Starting with Approximate DP]\label{thm:advancedcomposition_approx}
        For $j \in [k]$, let $M_j : \mathcal{X}^n \times \mathcal{Y}_{j-1} \to \mathcal{Y}_j$ be randomized algorithms. Suppose $M_j$ is $(\varepsilon_j,\delta_j)$-DP for each $j \in [k]$.
        For $j \in [k]$, inductively define $M_{1 \cdots j} : \mathcal{X}^n \to \mathcal{Y}_j$ by $M_{1 \cdots j}(x)=M_j(x,M_{1 \cdots (j-1)}(x))$, where each algorithm is run independently and $M_{1 \cdots 0}(x) = y_0$ for some fixed $y_0 \in \mathcal{Y}_0$. Then $M_{1 \cdots k}$ is $(\varepsilon,\delta)$-DP for any $\delta>\sum_{j=1}^k \delta_j$ with \[\varepsilon = \min\left\{ \sum_{j=1}^k \varepsilon_j , \frac12 \sum_{j=1}^k \varepsilon_j^2 + \sqrt{2\log(1/\delta') \sum_{j=1}^k \varepsilon_j^2} \right\},\] where $\delta' = \delta - \sum_{j=1}^k \delta_j$.
    \end{theorem}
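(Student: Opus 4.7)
The plan is to reduce the approximate-DP composition problem to the pure-DP composition problem via a decomposition argument that ``absorbs'' the $\delta_j$'s into the final $\delta$. The key ingredient is a decomposition lemma that turns any $(\varepsilon,\delta)$-DP mechanism $N$ into a pair of coupled samples together with a binary failure flag $\bot$: for every neighbouring pair $x,x'$, one can jointly produce $(Y,Y',\bot)$ with $Y\sim N(x)$, $Y'\sim N(x')$, $\Pr[\bot]\le\delta$ under both marginals, and such that conditional on $\lnot\bot$ the log-likelihood ratio of the realized samples lies in $[-\varepsilon,+\varepsilon]$. I would derive this lemma directly from the hockey-stick characterization in Proposition \ref{prop:privloss_adp} by peeling off the excess mass on the two ``bad'' regions $\{y:P(y)>e^{\varepsilon}Q(y)\}$ and $\{y:Q(y)>e^{\varepsilon}P(y)\}$ and assigning it to $\bot$.

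First I would reduce to the case where the composed mechanism outputs the entire transcript $(y_1,\ldots,y_k)$ rather than only the last value: this is valid because $M_{1\cdots k}$ is a post-processing of the transcript and approximate DP is preserved under post-processing (immediate from the definition). Then I apply the decomposition lemma inductively along the transcript. At step $j$, conditional on the past $y_{<j}$, I couple $M_j(x,y_{<j})$ with $M_j(x',y_{<j})$ together with a flag $\bot_j$ of conditional probability at most $\delta_j$, so that on $\lnot\bot_j$ the conditional privacy loss $Z_j$ satisfies $|Z_j|\le\varepsilon_j$. A union bound gives $\Pr\!\left[\bigcup_{j=1}^k\bot_j\right]\le\sum_{j=1}^k\delta_j$, and on the complementary good event $\mathcal{G}$ the composition behaves as an adaptive composition of $k$ pure $\varepsilon_j$-DP mechanisms.

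On $\mathcal{G}$, two bounds apply in parallel. The basic branch uses the deterministic bound $|Z_1+\cdots+Z_k|\le\sum_j\varepsilon_j$ (Theorem \ref{thm:basic_composition}, extended to the adaptive case by the same product-of-ratios argument). The advanced branch promotes each $Z_j$ to $\tfrac12\varepsilon_j^2$-zCDP via Proposition \ref{prop:pdp2cdp}, composes them via Proposition \ref{prop:adaptivecomp} to get $\tfrac12\sum_j\varepsilon_j^2$-zCDP for the adaptive composition conditional on $\mathcal{G}$, and then applies Proposition \ref{prop:cdp2adp} to convert to $\bigl(\tfrac12\sum_j\varepsilon_j^2+\sqrt{2\log(1/\delta')\sum_j\varepsilon_j^2},\,\delta'\bigr)$-DP. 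Taking the minimum of the two branches and adding the $\sum_j\delta_j$ coupling-failure probability to $\delta'$ yields the stated $(\varepsilon,\delta)$-DP guarantee.

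The main obstacle is implementing the decomposition lemma cleanly in the adaptive setting: the failure flag $\bot_j$ must be definable purely in terms of the history $y_{<j}$ and the step-$j$ output, and the resulting conditional-on-$\lnot\bot_j$ distributions must still admit the MGF bound of Proposition \ref{prop:pdp2cdp} so that the adaptive zCDP argument of Proposition \ref{prop:adaptivecomp} goes through. The cleanest route, which I would follow, is to construct explicit surrogate mechanisms $\widetilde{M}_j$ that are genuinely $\varepsilon_j$-DP (with the $\delta_j$-worth of ``bad'' outputs replaced by $\bot$), argue that $\widetilde{M}_j(\cdot,y_{<j})$ differs from $M_j(\cdot,y_{<j})$ in at most $\delta_j$ total variation distance under both neighbouring inputs uniformly in $y_{<j}$, apply the pure-DP adaptive composition machinery to the $\widetilde{M}_j$'s, and finally pay back the $\sum_j\delta_j$ TV slack via a standard triangle-inequality argument for approximate DP.
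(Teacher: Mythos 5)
Your plan is essentially the paper's own proof: your coupling-with-$\bot$ lemma is exactly the paper's Lemma \ref{lem:decomposition}, applied to the conditional distributions given the history and mixed along the transcript by induction, with the good parts handled by Proposition \ref{prop:pdp2cdp}, the adaptive argument of Proposition \ref{prop:adaptivecomp}, and the conversion of Propositions \ref{prop:privloss_adp}/\ref{prop:cdp2adp} (plus the pure-DP product bound for the other branch of the $\min$), and the failure mass $\sum_j \delta_j$ absorbed into $\delta$ exactly as in the paper. The one detail you gloss over is that naively peeling the excess mass off the two bad regions removes different amounts from the two distributions, so after renormalization the conditional likelihood ratio can slightly exceed $e^{\varepsilon_j}$; the paper's Lemma \ref{lem:decomposition} repairs this by shrinking the thresholds $\varepsilon_1,\varepsilon_2$ so the two removed masses are equal (and your ``genuinely $\varepsilon_j$-DP surrogate'' must be understood per fixed neighbouring pair and per history, as in the paper, not as a single globally pure-DP mechanism, which need not exist).
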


    Intuitively, if you consider the privacy loss $\privloss{M(x)}{M(x')}$ (where $x,x' \in \mathcal{X}^n$ are arbitrary neighbouring inputs), then $M$ being $(\varepsilon,\delta)$-DP is equivalent to the privacy loss being in $[-\varepsilon,+\varepsilon]$ with probability at least $1-\delta$; otherwise the privacy loss can be arbitrary (including possibly infinite).
    Informally, the proof of Theorem \ref{thm:advancedcomposition_approx} uses a union bound to show that with probability at least $1-\sum_{j=1}^k \delta_j$ all of the privacy losses of the $k$ algorithms are bounded by their respective $\varepsilon_j$s. Once we condition on this event, the proof proceeds as before.
    
    Formally, rather than reasoning about possibly infinite privacy losses, we use the following decomposition result.
    
    \begin{lemma}\label{lem:decomposition}
        Let $P$ and $Q$ be probability distributions over $\mathcal{Y}$. Fix $\varepsilon, \delta \ge 0$. Suppose that, for all measurable $S \subset \mathcal{Y}$, we have $P(S) \le e^\varepsilon \cdot Q(S) + \delta$ and $Q(S) \le e^\varepsilon P(S) + \delta$.
        
        Then there exist distributions $P',Q',P'',Q''$ over $\mathcal{Y}$ with the following properties.
        We can express $P$ and $Q$ as convex combinations of these distributions, namely $P = (1-\delta)P'+\delta P''$ and $Q = (1-\delta) Q' + \delta Q''$.
        And, for every measurable $S \subset \mathcal{Y}$, we have $e^{-\varepsilon} \cdot Q'(S) \le P'(S) \le e^\varepsilon \cdot Q'(S)$.
    \end{lemma}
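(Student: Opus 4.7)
The plan is to split $P$ and $Q$ into the form $(1-\delta)P' + \delta P''$ and $(1-\delta)Q' + \delta Q''$ by first constructing the ``pure-DP'' parts as sub-probability measures $(1-\delta)P'$ and $(1-\delta)Q'$ that lie pointwise within a factor $e^\varepsilon$ of each other, and then letting $P''$ and $Q''$ absorb the remaining mass.

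First I would define the three regions $A = \{y : P(y) > e^\varepsilon Q(y)\}$, $B = \{y : Q(y) > e^\varepsilon P(y)\}$, and $C = \mathcal{Y} \setminus (A \cup B)$ (which are disjoint since $\varepsilon \ge 0$), together with the clipped measures $P_1 = \min\{P, e^\varepsilon Q\}$ and $Q_1 = \min\{Q, e^\varepsilon P\}$. A short case analysis on which of $A$, $B$, $C$ contains $y$ shows that $e^{-\varepsilon} Q_1(y) \le P_1(y) \le e^\varepsilon Q_1(y)$ for every $y$. Writing $p_A := P(A) - e^\varepsilon Q(A)$ and $p_B := Q(B) - e^\varepsilon P(B)$, direct integration gives $\int P_1 = 1 - p_A$ and $\int Q_1 = 1 - p_B$, and applying the hypothesis with $S = A$ and $S = B$ yields $p_A, p_B \le \delta$.

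The main obstacle is that $\int P_1 \ne \int Q_1$ in general, so one cannot simply rescale $P_1$ and $Q_1$ into probability distributions without breaking the ratio bound. To fix this, I assume without loss of generality that $p_A \ge p_B$ (otherwise swap the roles of $P$ and $Q$) and choose a measurable $R$ with $0 \le R \le Q_1 - e^{-\varepsilon} P_1$ pointwise and $\int R = p_A - p_B$. Such an $R$ exists because a short computation shows $\int (Q_1 - e^{-\varepsilon} P_1) = (1 - p_B) - e^{-\varepsilon}(1 - p_A) \ge p_A - p_B$, which rearranges to $p_A \le 1$. Setting $\tilde\alpha = P_1$ and $\tilde\beta = Q_1 - R$, I get $\int \tilde\alpha = \int \tilde\beta = 1 - p_A$, and the pointwise bounds $e^{-\varepsilon}\tilde\beta \le \tilde\alpha \le e^\varepsilon \tilde\beta$ follow: the upper inequality is exactly what the constraint $R \le Q_1 - e^{-\varepsilon} P_1$ guarantees, while the lower inequality is inherited from $Q_1 \le e^\varepsilon P_1$ together with $R \ge 0$.

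Finally, I scale down to total mass $1-\delta$ by the factor $c = (1-\delta)/(1-p_A)$, which lies in $[0,1]$ since $p_A \le \delta$, setting $\alpha = c\tilde\alpha$ and $\beta = c\tilde\beta$, and then define $P' = \alpha/(1-\delta)$, $Q' = \beta/(1-\delta)$, $P'' = (P-\alpha)/\delta$, and $Q'' = (Q-\beta)/\delta$. Uniform scaling preserves the pointwise ratio bound, so integrating over any measurable $S$ gives $e^{-\varepsilon} Q'(S) \le P'(S) \le e^\varepsilon Q'(S)$. Non-negativity of $P''$ and $Q''$ follows from $\alpha \le P_1 \le P$ and $\beta \le Q_1 \le Q$, and each of $P''$, $Q''$ integrates to $1$ by construction. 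The real crux is the second step: finding a correction $R$ that eliminates the mismatch between $\int P_1$ and $\int Q_1$ without ever pushing $\tilde\alpha/\tilde\beta$ above $e^\varepsilon$ is what makes the proof work beyond the symmetric case $p_A = p_B$.
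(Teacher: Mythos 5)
Your proof is correct, and while it starts from the same clipping idea as the paper ($P_1=\min\{P,e^{\varepsilon}Q\}$, $Q_1=\min\{Q,e^{\varepsilon}P\}$, whose deficiencies $p_A,p_B$ are at most $\delta$ by hypothesis), it resolves the key difficulty -- that $\int P_1 \neq \int Q_1$ in general -- by a genuinely different mechanism. The paper clips $P$ at $e^{\varepsilon_1}Q$ and $Q$ at $e^{\varepsilon_2}P$ with \emph{two} thresholds $\varepsilon_1,\varepsilon_2\in[0,\varepsilon]$ and invokes a continuity/intermediate-value argument to tune them so that both clipped-off masses equal $\min\{\delta,\mathrm{d}_{\mathrm{TV}}(P,Q)\}$; it then patches the case where this common mass is strictly less than $\delta$ by redistributing part of $P'$ into the $P''$ component. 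You instead keep the single threshold $\varepsilon$, subtract an explicit correction measure $R$ with $0\le R\le Q_1-e^{-\varepsilon}P_1$ and $\int R=p_A-p_B$ from the heavier clipped piece (the pointwise cap on $R$ is exactly what preserves $\tilde\alpha\le e^{\varepsilon}\tilde\beta$, and $R\ge 0$ preserves the other direction), and then rescale both pieces by the common factor $(1-\delta)/(1-p_A)\le 1$, which handles the ``mass less than $\delta$'' issue uniformly rather than as a separate case. Your route is more constructive -- no continuity argument is needed, and $R$ can be taken proportional to $Q_1-e^{-\varepsilon}P_1$ to make the existence step completely explicit -- at the mild cost that your $P',Q'$ are not the ``maximal'' pure-DP parts the paper's construction produces; for the lemma as stated this maximality is irrelevant, so the two proofs deliver the same conclusion. (As in the paper, the degenerate case $\delta=0$ should be read as $P=P'$, $Q=Q'$ with $P'',Q''$ arbitrary.)
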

    \begin{proof}
        Fix $\varepsilon_1, \varepsilon_2 \in [0,\varepsilon]$ to be determined later.
        Define distributions $P'$, $P''$, $Q'$, and $Q''$ as follows.\footnote{Formally, $P(y)$, $P'(y)$, $P''(y)$, $Q(y)$, $Q'(y)$, and $Q''(y)$ denote the Radon-Nikodym derivative of these distributions with respect to some base measure -- usually either the counting measure (in which case these quantities are probability mass functions) or Lebesgue measure (in which case these quantities are probability density functions) -- in any case, we can take $P+Q$ to be the base measure.}
        For all points $y \in \mathcal{Y}$,
        \begin{align*}
            P'(y) &= \frac{\min\{ P(y) , e^{\varepsilon_1} \cdot Q(y) \}}{1-\delta_1}, \\
            P''(y) &= \frac{P(y) - (1-\delta_1)P'(y)}{\delta_1} = \frac{\max\{0, P(y) - e^{\varepsilon_1} \cdot Q(y)\}}{\delta_1}, \\
            Q'(y) &= \frac{\min\{ Q(y), e^{\varepsilon_2} \cdot P(y) \}}{1-\delta_2}, \\
            Q''(y) &= \frac{Q(y) - (1-\delta_2) Q'(y)}{\delta_2} = \frac{\max\{0, Q(y) - e^{\varepsilon_2} \cdot P(y)\}}{\delta_2},
        \end{align*}
        where $\delta_1$ and $\delta_2$ are appropriate normalizing constants.
        
        By construction, $(1-\delta_1)P' + \delta_1 P'' = P$ and $(1-\delta_2)Q'+\delta_2 Q'' = Q$.
        
        If $\delta_1=\delta_2=\delta$, then we have the appropriate decomposition and, for all $y \in \mathcal{Y}$, we have \[e^{-\varepsilon} \le e^{-\varepsilon_2} \le \frac{P'(y)}{Q'(y)} = \frac{\min\{P(y), e^{\varepsilon_1} \cdot Q(y)\}}{\min\{Q(y), e^{\varepsilon_2} \cdot P(y)\}} \le e^{\varepsilon_1} \le e^\varepsilon ,\] as required.
        If $\delta_1 = \delta_2 < \delta$, we can change the decomposition to \[P = (1-\delta) P' + (\delta-\delta_1) P' + (1-\delta_1) P'' = (1-\delta) P' + \delta \cdot \left(\frac{\delta-\delta_1}{\delta} P' + \frac{\delta_1}{\delta} P''\right)\] and likewise for $Q$, which also yields the result.
        
        It only remains to show that we can ensure that $\delta_1 = \delta_2 \le \delta$ by appropriately setting $\varepsilon_1, \varepsilon_2 \in [0,\varepsilon]$.
        We have \[\delta_1 = \int_{\mathcal{Y}} \max\{0, P(y) - e^{\varepsilon_1} \cdot Q(y)\} \mathrm{d}y = \int_S P(y) - e^{\varepsilon_1} \cdot Q(y) \mathrm{d}y = P(S) - e^{\varepsilon_1} Q(S),\] where $S = \{y \in \mathcal{Y} : P(y) \ge e^{\varepsilon_1} \cdot Q(y)\}$. If $\varepsilon_1=\varepsilon$, then $\delta_1 \le \delta$ by the assumptions of the Lemma. If $\varepsilon_1 = 0$, then $\delta_1=\tvd{P}{Q}$.
         By decreasing $\varepsilon_1$, we continuously increase $\delta_1$. Thus we can pick $\varepsilon_1 \in [0,\varepsilon]$ such that $\delta_1 = \min\{ \delta, \tvd{P}{Q} \}$. Similarly, we can pick $\varepsilon_2 \in [0,\varepsilon]$, such that $\delta_2 = \min\{ \delta, \tvd{P}{Q} \}$.
    \end{proof}
    
    We can extend Lemma \ref{lem:decomposition} to show that any pair of distributions staisfying the $(\varepsilon,\delta)$-DP guarantee can be represented as a postprocessing of $(\varepsilon,\delta)$-DP randomized response:
    \begin{corollary}\label{cor:kov}
        Let $P$ and $Q$ be probability distributions over $\mathcal{Y}$. Fix $\varepsilon, \delta \ge 0$. Suppose that, for all measurable $S \subset \mathcal{Y}$, we have $P(S) \le e^\varepsilon \cdot Q(S) + \delta$ and $Q(S) \le e^\varepsilon P(S) + \delta$.
        
        Then there exist distributions $A$, $B$, $P''$, and $Q''$ over $\mathcal{Y}$ such that
        \begin{align*}
            P &= (1-\delta)\frac{e^\varepsilon}{e^\varepsilon+1} A + (1-\delta) \frac{1}{e^\varepsilon+1} B + \delta P'' ,\\
            Q &= (1-\delta)\frac{e^\varepsilon}{e^\varepsilon+1} B + (1-\delta) \frac{1}{e^\varepsilon+1} A + \delta Q'' .
        \end{align*}
    \end{corollary}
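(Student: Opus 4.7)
The plan is to apply Lemma~\ref{lem:decomposition} to peel off the $\delta$-layer, and then decompose the pure-DP remainder symmetrically. Specifically, Lemma~\ref{lem:decomposition} furnishes distributions $P', P'', Q', Q''$ with $P = (1-\delta)P' + \delta P''$, $Q = (1-\delta) Q' + \delta Q''$, and the pointwise bound $e^{-\varepsilon} Q'(y) \le P'(y) \le e^\varepsilon Q'(y)$. So it suffices to exhibit distributions $A, B$ for which $P' = \tfrac{e^\varepsilon}{e^\varepsilon+1} A + \tfrac{1}{e^\varepsilon+1} B$ and $Q' = \tfrac{e^\varepsilon}{e^\varepsilon+1} B + \tfrac{1}{e^\varepsilon+1} A$; plugging these into the outer decomposition then gives the stated representations of $P$ and $Q$.

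Next I would solve for $A$ and $B$ directly. Adding and subtracting the two identities yields $A + B = P' + Q'$ and $A - B = \tfrac{e^\varepsilon+1}{e^\varepsilon-1}(P' - Q')$, which suggests the definitions
\[
A := \tfrac{1}{2}(P' + Q') + \tfrac{e^\varepsilon+1}{2(e^\varepsilon-1)}(P'-Q'), \qquad B := \tfrac{1}{2}(P' + Q') - \tfrac{e^\varepsilon+1}{2(e^\varepsilon-1)}(P'-Q').
\]
Both expressions integrate to $1$ because $P' - Q'$ integrates to $0$, so once we verify pointwise non-negativity, $A$ and $B$ will be bona fide probability distributions, and a direct substitution confirms the symmetric mixture identities.

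The only real obstacle is the non-negativity check, but the pure-DP bound from Lemma~\ref{lem:decomposition} is exactly what is needed. Rearranging $A(y) \ge 0$ (and multiplying through by $e^\varepsilon - 1 > 0$) reduces to $(e^\varepsilon - 1)(P'(y) + Q'(y)) \ge (e^\varepsilon + 1)(Q'(y) - P'(y))$, which simplifies to $P'(y) \ge e^{-\varepsilon} Q'(y)$; similarly $B(y) \ge 0$ reduces to $Q'(y) \ge e^{-\varepsilon} P'(y)$. Both inequalities are guaranteed by the pure-DP conclusion of Lemma~\ref{lem:decomposition}. The degenerate case $\varepsilon = 0$ forces $P' = Q'$ pointwise, in which case one simply sets $A = B = P'$ and observes both mixture identities hold trivially.

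Putting the pieces together, substituting $P' = \tfrac{e^\varepsilon}{e^\varepsilon+1} A + \tfrac{1}{e^\varepsilon+1} B$ into $P = (1-\delta)P' + \delta P''$ yields the required expression for $P$, and the analogous substitution for $Q'$ yields the expression for $Q$. No further work is needed beyond these algebraic manipulations.
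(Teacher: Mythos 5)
Your proposal is correct and takes essentially the same route as the paper: apply Lemma~\ref{lem:decomposition}, then solve the two mixture identities for $A$ and $B$; indeed your formulas simplify algebraically to the paper's $A = \frac{e^\varepsilon P' - Q'}{e^\varepsilon-1}$ and $B = \frac{e^\varepsilon Q' - P'}{e^\varepsilon-1}$, with the same non-negativity check via the pure-DP bound and the same handling of $\varepsilon=0$.
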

    To interpret Corollary \ref{cor:kov}, imagine $P=M(x)$ and $Q=M(x')$ are the outputs of a $(\varepsilon,\delta)$-DP mechanism on neighbouring inputs. Define an $(\varepsilon,\delta)$-DP analog of randomized response $R : \{0,1\} \to \{0,1\} \times \{\top,\bot\}$ by $\pr{}{R(b)=(b,\bot)} = \delta$ and $\pr{}{R(b)=(b,\top)} = (1-\delta) \frac{e^\varepsilon}{e^\varepsilon+1}$ and $\pr{}{R(b)=(1-b,\top)} = (1-\delta) \frac{1}{e^\varepsilon+1}$ for both $b \in \{0,1\}$. Then Corollary \ref{cor:kov} tells us that we can simulate $M$ by mapping the pair of inputs $x \mapsto 0$ and $x' \mapsto 1$ and then postprocessing the outputs with the randomized function $F$ defined by $F(0,\top)=P''$, $F(1,\top)=Q''$, $F(0,\bot)=A$, and $F(1,\bot)=B$. That is, $M(x)=F(R(0))$ and $M(x')=F(R(1))$.
    \begin{proof}[Proof of Corollary \ref{cor:kov}.]
        By Lemma \ref{lem:decomposition}, there exist distributions $P',Q',P'',Q''$ over $\mathcal{Y}$ such that $P = (1-\delta)P'+\delta P''$ and $Q = (1-\delta) Q' + \delta Q''$
        and, for every measurable $S \subset \mathcal{Y}$, we have $e^{-\varepsilon} \cdot Q'(S) \le P'(S) \le e^\varepsilon \cdot Q'(S)$.
        
        If $\varepsilon=0$, let $A=B=P'=Q'$. Otherwise, let 
        \[
            A = \frac{e^\varepsilon P' - Q'}{e^\varepsilon-1}, ~~~\text{ and }~~~
            B = \frac{e^\varepsilon Q' - P'}{e^\varepsilon-1}.
        \]
        We can verify that $A$ and $B$ are probability distributions, since, for all $S$, we have $e^{-\varepsilon} \cdot Q'(S) \le P'(S) \le e^\varepsilon \cdot Q'(S)$, which implies $A(S) \ge 0$ and $B(S) \ge 0$. Also $A(\mathcal{Y})=B(\mathcal{Y})=1$. And we have $\frac{e^\varepsilon}{e^\varepsilon+1} A + \frac{1}{e^\varepsilon+1} B =P'$ and $\frac{e^\varepsilon}{e^\varepsilon+1} B + \frac{1}{e^\varepsilon+1} A =Q'$, as required
    \end{proof}
    
    The proof of Theorem \ref{thm:advancedcomposition_approx} is, unfortunately, quite technical. Most of the steps are the same as we have seen in the pure DP case. The only novelty is applying the decomposition of Lemma \ref{lem:decomposition} inductively; this requires cumbersome notation, but is otherwise straightforward.
    \begin{proof}[Proof of Theorem \ref{thm:advancedcomposition_approx}.]
        Fix neighbouring datasets $x,x'\in\mathcal{X}^n$.
        We inductively define distributions $P_j$ and $Q_j$ on $\mathcal{Y}_0 \times \mathcal{Y}_1 \times \cdots \times \mathcal{Y}_j$ as follows.
        For $j \in [k]$, $P_j = (Y_0, Y_1, \cdots, Y_{j-1}, M_j(x,Y_{j-1}))$, where $(Y_1, \cdots, Y_{j-1}) \gets P_{j-1}$, and $Q_j = (Y_0, Y_1, \cdots, Y_{j-1}, M_j(x', Y_{j-1}))$, where $(Y_1, \cdots, Y_{j-1}) \gets Q_{j-1}$. We define $P_0=Q_0$ to be the point mass on $y_0$.
        
        We will prove by induction that, for each $j \in [k]$, there exist distributions $P_j'$, $P_j''$, $Q_j'$, and $Q_j''$ on $\mathcal{Y}_0 \times \mathcal{Y}_1 \times \cdots \times \mathcal{Y}_j$ such that \[P_j = \prod_{\ell=1}^j(1-\delta_\ell) P_j' + \left( 1 - \prod_{\ell=1}^j(1-\delta_\ell) \right) P_j''\] and \[Q_j = \prod_{\ell=1}^j(1-\delta_\ell) Q_j' + \left( 1 - \prod_{\ell=1}^j(1-\delta_\ell) \right) Q_j''\]
        and, for all $t \ge 0$, \[\ex{Z_j' \gets \privloss{P_j'}{Q_j'}}{\exp(t Z_j')} \le \exp\left( \frac{t (t+1)}{2} \sum_{\ell=1}^j \varepsilon_\ell^2 \right)\] and, for all measurable $S \subset \mathcal{Y}_0 \times \mathcal{Y}_1 \times \cdots \times \mathcal{Y}_j$, $P_j'(S) \le \exp\left( \sum_{\ell=1}^j \varepsilon_\ell \right) \cdot Q_j'(S)$.
        
        Before proving the inductive claim, we show that it suffices to prove the result. Fix an arbitrary measurable $S \subset \mathcal{Y}_k$ and let $\tilde{S} = \mathcal{Y}_0 \times \mathcal{Y}_1 \times \cdots \times \mathcal{Y}_{k-1} \times S$. We have
        \begin{align*}
            \pr{}{M(x) \in S} &= P_k(\tilde{S}) \tag{Postprocessing} \\
            &= \prod_{\ell=1}^k(1-\delta_\ell) P_k'(\tilde{S}) + \left( 1 - \prod_{\ell=1}^k(1-\delta_\ell) \right) P_k''(\tilde{S}) \\
            &\le \prod_{\ell=1}^k(1-\delta_\ell) P_k'(\tilde{S}) + \sum_{j=1}^k\delta_j \tag{$P_k''(\tilde{S}) \le 1$ and $1 - \prod_{\ell=1}^k(1-\delta_\ell) \le \sum_{j=1}^k\delta_j$} \\
            &\le \prod_{\ell=1}^k(1-\delta_\ell) \left( e^{\varepsilon} \cdot Q_k'(\tilde{S}) + \delta' \right) + \sum_{j=1}^k\delta_j \tag{*}\\
            &\le e^{\varepsilon} \cdot \prod_{\ell=1}^k(1-\delta_\ell) \cdot Q_k'(\tilde{S}) + \delta \tag{$\delta = \delta' + \sum_{j=1}^k\delta_j$} \\
            &\le e^{\varepsilon} \cdot  Q_k(\tilde{S}) + \delta \tag{$Q_k = \prod_{\ell=1}^k(1-\delta_\ell) Q_k' + \left( 1 - \prod_{\ell=1}^k(1-\delta_\ell) \right) Q_k''$} \\
            &= e^\varepsilon \cdot \pr{}{M(x') \in S} + \delta.
        \end{align*}
        The inequality $P_k'(\tilde{S}) \le e^\varepsilon \cdot Q_k'(\tilde{S}) + \delta'$ (*) follows the proof we have seen before. Our inductive conclusion includes a pure DP result -- $P_j'(\tilde{S}) \le \exp\left( \sum_{\ell=1}^j \varepsilon_\ell \right) \cdot Q_j'(\tilde{S})$ -- and a concentrated DP result -- for all $t \ge 0$, we have $\ex{Z_j' \gets \privloss{P_j'}{Q_j'}}{\exp(t Z_j')} \le \exp\left( \frac{t (t+1)}{2} \sum_{\ell=1}^j \varepsilon_\ell^2 \right)$, which implies
        \begin{align*}
            P_k'(\tilde{S}) &\le e^\varepsilon \cdot Q_k'(\tilde{S}) + \pr{Z_k' \gets \privloss{P_k'}{Q_k'}}{Z_k' > \varepsilon} \tag{Proposition \ref{prop:privloss_adp}} \\ &\le e^\varepsilon \cdot Q_k'(\tilde{S}) + \ex{Z_k' \gets \privloss{P_k'}{Q_k'}}{\exp(t(Z_k' - \varepsilon))} \tag{$\mathbb{I}[Z_k'>\varepsilon] \le \exp(t(Z_k'-\varepsilon))$}\\
            &\le e^\varepsilon \cdot Q_k'(\tilde{S}) + \exp\left(\frac{t(t+1)}{2} \sum_{j=1}^k \varepsilon_j^2 \right) \cdot \exp(-t\varepsilon) \tag{Induction conclusion} \\
            &\le e^\varepsilon \cdot Q_k'(\tilde{S}) + \delta',
        \end{align*}
        where the final inequality holds for the case $\varepsilon = \frac12 \sum_{j=1}^k \varepsilon_j^2 + \sqrt{2\log(1/\delta') \sum_{j=1}^k \varepsilon_j^2}$ and requires setting $t = \frac{\varepsilon}{\sum_{j=1}^k \varepsilon_j^2} - \frac12 = \sqrt{\frac{2\log(1/\delta')}{\sum_{j=1}^k \varepsilon_j^2}}$.
        
        It only remains for us to perform the induction. The base case ($j=0$) is trivial.
        
        Fix $j \in [k]$ and assume the induction hypothesis holds for $j-1$. The distribution $P_j$ is defined as a mixture (i.e., convex combination) of $P_j|_Y$ for $Y \gets P_{j-1}$, where $P_j|_Y := (Y,M_j(x,Y_{j-1}))$.
        For every $y$, we apply Lemma \ref{lem:decomposition} to the conditional distribution $P_j|_y$ and then we take the convex combination of these decompositions to obtain a decomposition of $P_j$. Of course, we must also decompose $Q_j$ at the same time.
        
        For each $y \in \mathcal{Y}_0 \times \mathcal{Y}_1 \times \cdots \mathcal{Y}_{j-1}$, the conditional distributions satisfy $\forall S ~~ P_j|_y(S) \le e^{\varepsilon_j} Q_j|_y(S) + \delta_j$ and vice versa. Thus Lemma \ref{lem:decomposition} allows us to decompose the conditional distributions $P_j|_y$ and $Q_j|_y$ as $P_j|_y = (1-\delta_j) P_j'|_y + \delta_j P_j''|_y$ and $Q_j|_y = (1-\delta_j) Q_j'|_y + \delta_j Q_j''|_y$ where $e^{-\varepsilon_j}\cdot Q_j'|_y(S) \le P_j'|_y(S) \le e^{\varepsilon_j}\cdot Q_j'|_y(S)$ for all $S$. This gives us the desired decomposition:
        \begin{align*}
            P_j &= \ex{Y \gets P_{j-1}}{P_j|_Y}\\
            &= \ex{Y \gets P_{j-1}}{(1-\delta_j)P_j'|_Y+\delta_jP_j''|_Y}\\
            &= \prod_{\ell=1}^{j-1}(1\!-\!\delta_\ell) \! \ex{Y \gets P_{j-1}'}{(1\!-\!\delta_j)P_j'|_Y\!+\!\delta_jP_j''|_Y}  \!+\! \left( 1 \!- \!\prod_{\ell=1}^{j-1}(1\!-\!\delta_\ell)\right)\! \ex{Y \gets P_{j-1}''}{(1\!-\!\delta_j)P_j'|_Y\!+\!\delta_jP_j''|_Y}\\
            &= \prod_{\ell=1}^{j}(1-\delta_\ell)  \ex{Y \gets P_{j-1}'}{P_j'|_Y}  + \delta_j \prod_{\ell=1}^{j-1}(1-\delta_\ell)  \ex{Y \gets P_{j-1}'}{P_j''|_Y} \\&~~~~~+ \left( 1 - \prod_{\ell=1}^{j-1}(1-\delta_\ell)\right)(1-\delta_j) \ex{Y \gets P_{j-1}''}{P_j'|_Y} + \left( 1 - \prod_{\ell=1}^{j-1}(1-\delta_\ell)\right)\delta_j \ex{Y \gets P_{j-1}''}{P_j''|_Y}.
        \end{align*}
        Thus we define the new decomposition as $P_j' = P_j'|_Y$ for $Y \gets P_{j-1}'$ and $Q_j' = Q_j'|_Y$ for $Y \gets Q_{j-1}'$. The ``weight'' of $P_j'$ is the product of the weight of $P_{j-1}'$ (i.e., $\prod_{\ell=1}^{j-1} (1-\delta_\ell)$) and the weight of $P_j'|_Y$ (i.e., $1-\delta_j$), as required. The remaining parts of the decomposition are combined to define $P_j'' = $ and $Q_j''$; note that $P_j''$ includes both $P_j'|_Y$ for $Y \gets P_{j-1}''$ and $P_j''|_Y$ for $Y \gets P_{j-1}$.
        It is easy to verify that this decomposition satisfies the requirements of the induction:
        \begin{align*}
            &\ex{Z_j' \gets \privloss{P_j'}{Q_j'}}{\exp(tZ_j')} \\
            &~= \ex{Y_j' \gets P_j'}{\exp\left(t \cdot \llr{P_j'}{Q_j'}(Y_j')\right)} \\
            &~= \ex{Y_{j-1}' \gets P_{j-1}'}{\ex{Y_j' \gets P_j'|_{Y_{j-1}}}{\exp\left(t \cdot \left(\llr{P_{j-1}'}{Q_{j-1}'}(Y_{j-1}') + \llr{P_j'|_{Y_{j-1}}}{Q_j'|_{Y_{j-1}}}(Y_j')\right)\right)}}\\
            &~= \ex{Y_{j-1}' \gets P_{j-1}'}{\exp\left(t \cdot \llr{P_{j-1}'}{Q_{j-1}'}(Y_{j-1}')\right) \cdot \ex{Y_j' \gets P_j'|_{Y_{j-1}}}{\exp\left(t \cdot  \llr{P_j'|_{Y_{j-1}}}{Q_j'|_{Y_{j-1}}}(Y_j')\right)}}\\
            &~= \ex{Y_{j-1}' \gets P_{j-1}'}{\exp\left(t \cdot \llr{P_{j-1}'}{Q_{j-1}'}(Y_{j-1}')\right) \cdot \ex{Z_j' \gets \privloss{P_j'|_{Y_{j-1}}}{Q_j'|_{Y_{j-1}}}}{\exp\left(t \cdot  Z_j'\right)}}\\
            &~\le \ex{Y_{j-1}' \gets P_{j-1}'}{\exp\left(t \cdot \llr{P_{j-1}'}{Q_{j-1}'}(Y_{j-1}')\right) \cdot \exp\left(t (t+1) \frac12 \varepsilon_j^2\right)} \tag{Proposition \ref{prop:pdp2cdp} \& $|Z_j'|\le\varepsilon_j$}\\
            &~\le\exp\left(\frac{t(t+1)}{2} \sum_{\ell=1}^{j-1} \varepsilon_\ell^2 \right) \cdot \exp\left(t (t+1) \frac12 \varepsilon_j^2\right) \tag{Induction hypothesis}\\
            &~=\exp\left(\frac{t(t+1)}{2} \sum_{\ell=1}^{j} \varepsilon_\ell^2 \right).
        \end{align*}
        And, for pure DP, we have $P_j'(S) = \ex{Y_{j-1}' \gets P_{j-1}'}{P_j'|_{Y_{j-1}}(S)} \le \ex{Y_{j-1}' \gets P_{j-1}'}{e^{\varepsilon_j} Q_j'|_{Y_{j-1}}(S)} \le \exp\left(\sum_{\ell=1}^{j-1} \varepsilon_\ell\right) \cdot \ex{Y_{j-1}' \gets Q_{j-1}'}{e^{\varepsilon_j} Q_j'|_{Y_{j-1}}(S)} = \exp\left(\sum_{\ell=1}^{j} \varepsilon_\ell\right) \cdot Q_j'(S)$ for all measurable $S$.
    \end{proof}
    
\section{Asymptotic Optimality of Composition}
    Is the advanced composition theorem optimal? That is, could we prove a result that is stronger?
    This is an important question, but we first need to think about what optimality even means. Recall that, in Section \ref{sec:basic_optimal}, we proved that basic composition is optimal, but then we showed that we could do better by relaxing the requirement from pure DP to approximate DP or concentrated DP. To prove asymptotic optimality of advanced composition, we will show that no algorithm can provide better accuracy than advanced composition gives (except for constant factors) subject to approximate DP. Furthermore, we will see that the analysis is not specific to approximate DP.

    Combining advanced composition (Theorem \ref{thm:advancedcomposition_pure} or \ref{thm:advancedcomposition_approx}) with Laplace noise addition shows that we can answer $k$ bounded sensitivity queries (e.g., counting queries) with noise scale $\Theta(\sqrt{k/\rho})$ for each query, where $\rho$ only depends on the privacy parameters, e.g., $\rho = \Theta(\varepsilon^2/\log(1/\delta))$ for $(\varepsilon,\delta)$-DP. (Gaussian noise addition also gives the same asymptotics, per Corollary \ref{cor:gauss_adp_exact_multi}.)
    
    We can prove that this asymptotics -- average error per query $\Omega(\sqrt{k})$ -- is optimal. Formally, we have the following result.
    
    \begin{theorem}[Negative Result for Error of Private Mean Estimation]\label{thm:lowerbound}
        Let $\mathcal{X}=\{0,1\}^k$ and $\mathcal{Y} = [0,1]^k$.
        Let $M : \mathcal{X}^n \to \mathcal{Y}$ satisfy $(\varepsilon,\delta)$-DP. If $\delta \le 1/100n$ and $k \ge 200 (e^\varepsilon-1)^2 n$, then there exists some $x \in \mathcal{X}^n$ such that \[\sqrt{\ex{}{\frac1k\|M(x)-\overline{x}\|_2^2}} \ge \min\left\{ \frac{\sqrt{k}}{16 \cdot n \cdot (e^\varepsilon-1)} , \frac{1}{10}\right\},\] where $\overline{x} = \frac1n \sum_{i=1}^n x_i \in [0,1]^k$ is the mean of input dataset.
    \end{theorem}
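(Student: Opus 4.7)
The plan is a \emph{fingerprinting} argument: construct a random distribution over inputs together with a per-individual score function that must be large in aggregate expectation when $M$ is accurate, yet small because of $(\varepsilon,\delta)$-DP. Matching these two bounds under the hypotheses $\delta\le 1/(100n)$ and $k \ge 200(e^\varepsilon-1)^2 n$ yields the claimed lower bound on the root mean squared error.

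First, set up the random prior: draw $p\in[0,1]^k$ uniformly and, conditional on $p$, draw each coordinate $x_{i,j}\sim\mathsf{Bern}(p_j)$ independently. Since the theorem asserts existence of a bad input, it suffices to lower bound the expected squared error $E := \ex{}{\|M(x)-\overline{x}\|_2^2}$ under this prior, so I argue by contradiction, assuming $E < (\alpha^*)^2 k$ for $\alpha^* := \min\{\sqrt{k}/(16n(e^\varepsilon-1)),\,1/10\}$. Writing $y := M(x)$, define the fingerprinting score $S_i := \langle y-p,\,x_i-p\rangle$ for each $i\in[n]$. The identity $\sum_i S_i = n\|\overline{x}-p\|_2^2 + n\langle y-\overline{x},\overline{x}-p\rangle$, combined with $\ex{}{\|\overline{x}-p\|_2^2} = k/(6n)$ (which comes from $\ex{}{p_j(1-p_j)}=1/6$) and Cauchy--Schwarz on the cross term, gives the lower bound $\ex{}{\sum_i S_i} \ge k/6 - \sqrt{Enk/6}$.

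For the privacy-based upper bound, draw $x_i'$ as an independent copy from the conditional $\mathsf{Bern}(p)$ law and let $x^{(i)}$ be $x$ with its $i$-th record replaced by $x_i'$. Because $M(x^{(i)})$ is independent of $x_i$ given $p$ while $\ex{}{x_i-p\mid p}=0$, the ``independent-copy'' score $S_i^{\mathrm{ind}} := \langle M(x^{(i)})-p,\,x_i-p\rangle$ has mean zero. For every realization of $(p,x_i,x_i',x_{-i})$ the datasets $x$ and $x^{(i)}$ are neighbouring, so applying the $f = f^+ - f^-$ decomposition form of $(\varepsilon,\delta)$-DP to the test function $f(y') := \langle y'-p,\,x_i-p\rangle$ (satisfying $|f|\le k$) yields $|\ex{}{S_i}-\ex{}{S_i^{\mathrm{ind}}}| \le (e^\varepsilon-1)\ex{}{|S_i^{\mathrm{ind}}|} + 2\delta k$. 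To bound $\ex{}{|S_i^{\mathrm{ind}}|}$ sharply, condition on $(p,M(x^{(i)}))$: then $S_i^{\mathrm{ind}}$ is a sum of mean-zero coordinate-wise independent contributions with conditional second moment at most $\|M(x^{(i)})-p\|_2^2/4$, and the triangle inequality $\|M(x^{(i)})-p\|_2^2 \le 2\|M(x^{(i)})-\overline{x}^{(i)}\|_2^2 + 2\|\overline{x}^{(i)}-p\|_2^2$ followed by Jensen bounds this by $\sqrt{E/2 + k/(12n)}$.

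Summing the per-$i$ privacy inequality gives $\ex{}{\sum_i S_i} \le n(e^\varepsilon-1)\sqrt{E/2 + k/(12n)} + 2nk\delta$. Combining this with the fingerprinting lower bound and substituting $\delta \le 1/(100n)$ and $(e^\varepsilon-1)^2 n \le k/200$ makes every term on the right-hand side a small constant fraction of $k$, while the left side is $\Omega(k)$ whenever $E < (\alpha^*)^2 k$; this is the desired contradiction. The main obstacle is the sharp bound on $\ex{}{|S_i^{\mathrm{ind}}|}$: a naive bound like $|S_i^{\mathrm{ind}}|\le k$ would yield only the trivial rate $\alpha^* \gtrsim 1/n$, so it is essential to exploit both the conditional independence of $x_i$ from $M(x^{(i)})$ given $p$ (which controls the variance) and the accuracy hypothesis on $M$ (which controls the scale). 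Matching the exact constant $16$ in the theorem is delicate bookkeeping, but conceptually routine once these two core estimates are in hand.
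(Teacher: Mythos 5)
Your privacy-side bound is essentially the paper's: your inequality $|\ex{}{S_i}-\ex{}{S_i^{\mathrm{ind}}}| \le (e^\varepsilon-1)\ex{}{|S_i^{\mathrm{ind}}|} + 2\delta k$ is Lemma \ref{lem:indistinguishable_expectation}, and your variance bound $\ex{}{(S_i^{\mathrm{ind}})^2} \le \tfrac14\ex{}{\|M(x^{(i)})-p\|_2^2} \le E/2 + k/(12n)$ matches the paper's treatment of $\tilde{Z}_i$. The gap is in the other half, the claim that accuracy forces $\ex{}{\sum_i S_i} = \Omega(k)$. Your route via the identity $\sum_i S_i = n\|\overline{x}-p\|_2^2 + n\langle M(x)-\overline{x},\,\overline{x}-p\rangle$ and Cauchy--Schwarz gives only
\[
\ex{}{\sum_i S_i} \;\ge\; \frac{k}{6} - n\sqrt{E\cdot\frac{k}{6n}} \;=\; \frac{k}{6} - \sqrt{\frac{Enk}{6}},
\]
and this is positive only when $E \lesssim k/n$, i.e.\ $\alpha \lesssim 1/\sqrt{n}$. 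But the contradiction hypothesis only gives you $E < (\alpha^*)^2 k$ with $\alpha^* = \min\{\sqrt{k}/(16n(e^\varepsilon-1)),\,1/10\}$, and under the theorem's assumption $k \ge 200(e^\varepsilon-1)^2 n$ one always has $\alpha^* \ge \sqrt{200}/(16\sqrt{n}) > 1/\sqrt{6n}$; in the regime where $\alpha^*$ is the constant $1/10$ (large $k$), it is far larger. So for $\alpha$ in the range $(c/\sqrt{n},\,\alpha^*)$ your left-hand side is negative and no contradiction arises: your argument proves at best a bound whose second branch is $\Theta(1/\sqrt{n})$ (the sampling error) rather than the constant $1/10$, which is strictly weaker than the theorem.

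The missing idea is exactly the paper's fingerprinting lemma (Lemma \ref{lem:fingerprinting}): it exploits the uniform prior on $p$ via a derivative/integration-by-parts identity in $p$ (writing $g(p)=\ex{X\gets\mathcal{D}_p^n}{f(X)}$, showing $\ex{}{f(X)\sum_i(X_i-p)} = 2\ex{}{g(P)(P-\tfrac12)}$, and then completing the square) to conclude
\[
\ex{}{\sum_i \left\langle M(X)-P,\,X_i-P\right\rangle} \;\ge\; \frac{k}{12} - \ex{}{\|M(X)-\overline{X}\|_2^2},
\]
with no $\sqrt{n}$ degradation. The cross term $n\langle M(x)-\overline{x},\,\overline{x}-p\rangle$ cannot be controlled by Cauchy--Schwarz alone because that bound is oblivious to the fact that $M$ cannot anticipate $p$ beyond what $\overline{x}$ reveals; the averaging over the random prior is what rules out a large negative cross term, and that is precisely the content of the lemma you would need to add. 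With Lemma \ref{lem:fingerprinting} (or an equivalent statement proved from scratch) in place of your Cauchy--Schwarz step, the rest of your outline does go through and is the same argument as the paper's.
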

    
    Theorem \ref{thm:lowerbound} shows that any DP algorithm answering $k$ queries must have error per query scaling with $\Omega(\sqrt{k})$, which matches the guarantees of the advanced composition theorem.
    We briefly remark on some of the properties of this theorem:
    
    First, $M$ could just output $\frac12$ for each coordinate; this is trivially private and has root mean square error at most $\frac12$. The theorem must apply to such an algorithm too, which is the fundamental reason why the lower bound in the conclusion of Theorem \ref{thm:lowerbound} cannot be larger than a constant $\frac{1}{10}$.
    
    Second, the assumption $\delta \le 1/100n$ is also necessary, up to constant factors. If $\delta \gg 1/n$, then $M$ could sample $n\delta$ of the inputs and return the sample mean. This would be $(0,\delta)$-DP and would give accuracy $\sqrt{\ex{}{\frac1k\|M(x)-\overline{x}\|_2^2}} \le \frac{1}{\sqrt{n\delta}} \ll 1$.
    Note that the advanced composition theorem includes a $\sqrt{\log(1/\delta)}$ term. It is possible to extend the negative results to include such a term too \citep{steinke2015between} (see also Lemma 2.3.6 of \citet{bun2016new}), but we do not do this here for simplicity.
    
    Third, the assumption $k \ge 200 (e^\varepsilon-1)^2 n$ is not really necessary; it is an artifact of our analysis. If $k \ll \varepsilon^2 n$, then the privacy error is lower than the sampling error (if we think of $x$ as consisting of $n$ samples from some distribution). A different analysis is possible in this case.
    
    Fourth, Theorem \ref{thm:lowerbound} has $e^\varepsilon-1$ in the denominator, where our positive results have $\varepsilon$. For small $\varepsilon$, we have $e^\varepsilon-1\approx\varepsilon$. But, for large $\varepsilon$, there is an exponential difference. Surprisingly, this is inherent; by using discrete noise \citep{canonne2020discrete} in place of continuous Laplace noise it is possible to improve the positive results to yield this asymptotic behaviour. However, we are generally not interested in the large $\varepsilon$ setting.
    
    Finally, fifth, this theorem is not merely an esoteric impossibility result. It corresponds to realistic attacks, which are known as ``tracing attacks'' \citep{dwork2017exposed} or ``membership inference attacks'' \citep{shokri2017membership}.
    
    \begin{proof}[Proof of Theorem \ref{thm:lowerbound}.]
        The theorem guarantees that there exists a specific input $x$ on which $M$ has high error. In general, $x$ must depend on $M$.
        To prove this we show that, for a random input from a carefully chosen distribution, any $M$ must have high error. It follows that for each specific $M$ there must exist some fixed input with high error.
        
        For $p \in [0,1]^k$, let $\mathcal{D}_p$ be the product distribution over $\{0,1\}^k$ with mean $p$. Our random input $X \in \mathcal{X}^n$ will consist of $n$ independent draws from $\mathcal{D}_p$. Furthermore, we select the mean parameter randomly too. That is, $P \in [0,1]^d$ is uniformly random and $X$ consists of $n$ conditionally independent draws from $\mathcal{D}_P$.
        
        We analyze the quantity \[Z := \sum_{i=1}^n \left\langle M(X) - P , X_i - P \right\rangle .\]
        Applying Lemma \ref{lem:fingerprinting} with $f(x) = \ex{}{M(X)_j | X_j=x}$ and summing over the coordinates $j \in [k]$ shows that \[\ex{P \gets \mathsf{Uniform}([0,1]^k) \atop X \gets \mathcal{D}_P^n}{Z \!+\! \|M(X)\!-\!\overline{X}\|_2^2} \!=\! \sum_{j=1}^k \! \ex{P \gets \mathsf{Uniform}([0,1]^k) \atop X \gets \mathcal{D}_P^n}{\begin{array}{c}(M(X)_j-P_j) \cdot \sum_{i=1}^n (X_{i,j}-P_j) \\ + (M(X)_j-\overline{X}_j)^2 \end{array}} \ge \frac{k}{12}.\]
        Denoting $\alpha^2k = \ex{}{\|M(X)-\overline{X}\|_2^2}$, we have $\ex{}{Z} \ge \frac{k}{12} - \alpha^2 k$.
        Intuitively, $Z$ measures the total correlation between the output of $M$ and its inputs. What Lemma \ref{lem:fingerprinting} shows is that, if $M$ is accurate -- i.e., $\ex{}{\|M(X)-\overline{X}\|_2^2} \le o(k)$ -- then this correlation must be large. 
        
        The punchline of the proof is that we show that differential privacy means the correlation must be small, which conflicts with the fact that we have proven it must be large. Ergo, we will obtain the desired impossibility result.

        For $i \in [n]$, define \[Z_i := \langle M(X) - P , X_i - P \rangle,\] so that $Z = \sum_{i=1}^n Z_i$.
        Let $X_0$ be a fresh sample from $\mathcal{D}_P$ that is (conditionally) independent from $X_1, \cdots, X_n$.
        Let $M(X_0,X_{-i})$ denote running $M$ on the dataset $X$ where $X_i$ has been replaced by $X_0$ and define \[\tilde{Z}_i := \langle M(X_0,X_{-i}) - P , X_i - P \rangle.\]
        By differential privacy, $M(X_0,X_{-i})$ is indistinguishable from $M(X)$, even if we condition on $X_0,X_1, \cdots, X_n$. Thus the distributions of $\tilde{Z}_i$ and $Z_i$ are also indistinguishable.
        
        Since $M(X_0,X_{-i})$ and $X_i$ are independent (conditioned on $P$) and $\ex{}{X_i-P}=\vec{0}$ (conditioned on $P$), we have $\ex{}{\tilde{Z}_i} = 0$ and \[\ex{P,X,M}{\tilde{Z}_i^2} = \sum_{j=1}^k \ex{P}{ P_j(1-P_j) \cdot \ex{X,M}{(M(X_0,X_{-i})_j-P_j)^2}}\le \frac14 \ex{P,X,M}{\|M(X)-P\|_2^2} .\]
        Now $\ex{}{\|M(X)-P\|_2^2} \le 2 \ex{}{\|M(X)-\overline{X}\|_2^2} + 2 \ex{}{\|\overline{X}-P\|_2^2} \le 2\alpha^2 k + \frac{k}{3n}$.\footnote{ $\ex{}{\|\overline{X}-P\|_2^2} = \sum_{j=1}^k \ex{P_j \gets \mathsf{Uniform}([0,1])}{\ex{Y_j \gets \mathsf{Binomial}(n,P_j)}{(\frac1n Y_j - P_j)^2}} = k \cdot \int_0^1 \frac{p(1-p)}{n} \mathrm{d}p= \frac{k}{6n}$.}

        Lemma \ref{lem:indistinguishable_expectation}, $|Z_i| \le k$, $|\tilde{Z}_i| \le k$, and $\ex{}{|\tilde{Z}_i|} \le \sqrt{\ex{}{\tilde{Z}_i^2}}$ (i.e., Jensen's inequality) gives \[\ex{}{Z_i} \le \ex{}{\tilde{Z}_i} + (e^\varepsilon-1)\ex{}{|\tilde{Z}_i|} + 2\delta k \le \frac{e^\varepsilon -1}{2} \sqrt{2\alpha^2 k + \frac{k}{3n}} + 2\delta k.\]
        Putting things together, we have \[\frac{k}{12} - \alpha^2 k \le \ex{}{Z} = \sum_{i=1}^n \ex{}{Z_i} \le n \cdot \left( \frac{e^\varepsilon -1}{2} \sqrt{2\alpha^2 k + \frac{k}{3n}} + 2\delta k \right).\]
        Ignoring terms that are (hopefully) low order, this is $\Omega(k) \le O(n \cdot \varepsilon \sqrt{\alpha^2 k})$, which rearranges to $\alpha = \sqrt{\ex{}{\frac1k\|M(X)-\overline{X}\|_2^2}} \ge \Omega\left(\frac{\sqrt{k}}{n \varepsilon}\right)$, which is the desired asymptotic result.
        To be precise, this rearranges to
        \[\alpha \ge \sqrt{ \left( \frac{1}{6} - 2\alpha^2 - 4n\delta \right)^2 \cdot \frac{k}{2n^2 \cdot (e^\varepsilon-1)^2} - \frac{1}{6n}}.\]
        If $\alpha \le 1/10$ and $\delta \le 1/100n$, then $\frac{1}{6} - 2\alpha^2 - 4n\delta \ge \frac{1}{10}$. If $k \ge 200 (e^\varepsilon-1)^2 n$, then $\left(\frac{1}{10}\right)^2 \cdot \frac{k}{2n^2(e^\varepsilon-1)^2} \ge \frac{1}{n}$. If all three of these conditions hold, then \[ \sqrt{\ex{}{\frac1k\|M(X)-\overline{X}\|_2^2}} = \alpha \ge \sqrt{ \frac{k}{200 \cdot n^2 \cdot (e^\varepsilon-1)^2}\left( 1 - \frac{1}{6}\right)} \ge \frac{\sqrt{k}}{16 n (e^\varepsilon-1)} .\] Hence, if $\delta \le 1/100n$ and $k \ge 200(e^\varepsilon-1)^2n$, then either $\alpha > 1/10$ or $\alpha \ge \sqrt{k}/16n(e^\varepsilon-1)$, as required.
    \end{proof}
    
    Now we prove the two lemmata that were used to prove Theorem \ref{thm:lowerbound}. We begin with the lemma showing that the correlation $Z$ must be large if $M$ is accurate.
    
    The lemma only contemplates one coordinate and then we sum over the $k$ coordinates in the proof of  Theorem \ref{thm:lowerbound}. That is, the function $f$ in the theorem is simply one coordinate of $M$ and we average out the randomness of $M$ and the other coordinates.

    \begin{lemma}\label{lem:fingerprinting}
        Let $f : \{0,1\}^d \to [0,1]$ be an arbitrary function. Let $P \in [0,1]$ be uniformly random and, conditioned on $P$, let $X_1, \cdots, X_n \in \{0,1\}$ be independent with $\ex{}{X_i}=P$ for each $i \in [n]$. Then \[\ex{X,P}{(f(X)-P) \cdot \sum_{i=1}^n (X_i-P)} + \ex{P}{ \ex{X}{f(X) - \overline{X}}^2} \ge \frac{1}{12}.\] 
    \end{lemma}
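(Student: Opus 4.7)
My plan is to reduce the inequality to a one-variable calculus problem via the Bernoulli score-function identity. Let $g(p) := \ex{X}{f(X)}$ where the expectation is over $X = (X_1, \ldots, X_n)$ with $X_i$ i.i.d.\ $\mathrm{Bernoulli}(p)$. Since the joint mass function is $\prod_{i=1}^n p^{X_i}(1-p)^{1-X_i}$, differentiating in $p$ under the expectation yields the score-function identity
\[
g'(p) = \ex{X}{f(X) \cdot \sum_{i=1}^n \frac{X_i - p}{p(1-p)}}.
\]
Because $\ex{X}{X_i - p} = 0$ for each $i$, this rearranges to
\[
\ex{X}{(f(X) - p)\sum_{i=1}^n (X_i - p)} = p(1-p) \cdot g'(p),
\]
so, integrating in $P \sim \mathrm{Uniform}([0,1])$, the first term on the left of the lemma equals $\int_0^1 p(1-p)\, g'(p) \, dp$.

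Next, I integrate by parts. Since $p(1-p)$ vanishes at $p \in \{0,1\}$ and its derivative is $1 - 2p$, we get $\int_0^1 p(1-p) g'(p)\, dp = \int_0^1 (2p-1)\, g(p)\, dp$. Splitting $g(p) = p + (g(p) - p)$ and using the elementary identity $\int_0^1 (2p-1)\, p \, dp = 1/6$, this becomes $\tfrac{1}{6} + \int_0^1 (2p-1)(g(p)-p) \, dp$. Meanwhile the second term on the left of the lemma is precisely $\int_0^1 (g(p)-p)^2 \, dp$, since $\ex{X}{\overline{X}} = p$ implies $\ex{X}{f(X) - \overline{X}} = g(p) - p$.

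It therefore suffices to show $\int_0^1 \bigl[(2p-1)(g(p)-p) + (g(p)-p)^2 \bigr] dp \ge -\tfrac{1}{12}$. Completing the square, the integrand equals $\bigl(g(p) - p + \tfrac{2p-1}{2}\bigr)^2 - \tfrac{(2p-1)^2}{4}$, which is pointwise at least $-\tfrac{(2p-1)^2}{4}$, and a direct computation gives $\int_0^1 (2p-1)^2/4 \, dp = 1/12$. Adding back the $1/6$ from the previous paragraph yields the claimed bound of $\tfrac{1}{6} - \tfrac{1}{12} = \tfrac{1}{12}$.

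The whole argument is a short one-dimensional computation once the score-function identity is in hand, so there is no real obstacle; the only place to exercise care is tracking the constants through the integration by parts and the completion of the square, and ensuring that differentiation under the expectation is justified (which it is, since $g$ is a polynomial of degree at most $n$ in $p$).
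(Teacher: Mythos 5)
Your proof is correct and follows essentially the same route as the paper: the score-function identity for $g'(p)$, integration by parts against $p(1-p)$, and then dropping a square (your completed square $\bigl(g(p)-p+\tfrac{2p-1}{2}\bigr)^2$ is exactly the paper's discarded term $(g(P)-\tfrac12)^2$, just with the constant $1/6$ tracked separately). No gaps; the justification of differentiation under the expectation via polynomiality is fine.
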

    By Jensen's inequality $\ex{P}{ \ex{X}{f(X) - \overline{X}}^2} \le \ex{P,X}{ \left(f(X) - \overline{X}\right)^2}$. Thus  \[\ex{X,P}{(f(X)-P) \cdot \sum_{i=1}^n (X_i-P) + \left( f(X) - \overline{X}\right)^2} \ge \frac{1}{12}.\] 
    
    To gain some intuition for the lemma statement, suppose $f(x) = \overline{x} = \frac1n \sum_{i=1}^n x_i$. Then \[\ex{}{\left( f(X)-P \right) \cdot \sum_{i=1}^n \left( X_i - P \right)} = \ex{}{\left( \overline{X} - P \right)\cdot\left( \sum_{i=1}^n X_i -P \right)} = \frac1n \sum_{i=1}^n  \ex{}{(X_i-P)^2} = \frac{1}{6}.\] The constant $\frac{1}{6} = \int_0^1 p(1-p) \mathrm{d}p$ in this example is slightly better than the constant $\frac{1}{12}$ in the general result.
    However, if $f(x) = \frac12$ is a constant function, then the constant is tight, as $\ex{P}{ \ex{X}{f(X) - \overline{X}}^2} = \ex{P}{\left(\frac12-P\right)^2} = \frac{1}{12}$.
    
    \begin{proof}[Proof of Lemma \ref{lem:fingerprinting}.]
        Define $g : [0,1] \to [0,1]$ by $g(p) = \ex{X \gets \mathcal{D}_p^n}{f(X)}$, where $\mathcal{D}_p^n$ denotes the product distribution over $\{0,1\}^n$ with each coordinate having mean $p$. Then
        \begin{align*}
            g'(p) &= \frac{\mathrm{d}}{\mathrm{d}p} \ex{X \gets \mathcal{D}_p^n}{f(X)} \\
            &= \sum_{x \in \{0,1\}^n} f(x) \frac{\mathrm{d}}{\mathrm{d}p} \prod_{\ell=1}^n \big( x_\ell \cdot p + (1-x_\ell) \cdot (1-p) \big) \\
            &= \sum_{x \in \{0,1\}^n} f(x)  \prod_{\ell=1}^n \big( x_\ell \cdot p + (1-x_\ell) \cdot (1-p) \big) \sum_{i=1}^n \frac{\frac{\mathrm{d}}{\mathrm{d}p} (x_i \cdot p + (1-x_i) \cdot (1-p))}{x_i \cdot p + (1-x_i) \cdot (1-p)} \tag{Product rule}\\
            &= \sum_{x \in \{0,1\}^n} f(x)  \prod_{\ell=1}^n \big( x_\ell \cdot p + (1-x_\ell) \cdot (1-p) \big) \sum_{i=1}^n \frac{2x_i-1}{x_i \cdot p + (1-x_i) \cdot (1-p)} \\
            &= \sum_{x \in \{0,1\}^n} f(x)  \prod_{\ell=1}^n \big( x_\ell \cdot p + (1-x_\ell) \cdot (1-p) \big) \sum_{i=1}^n \frac{x_i-p}{p (1-p)} \tag{Case analysis for $x_i \in \{0,1\}$} \\
            &= \ex{X \gets \mathcal{D}_p^n}{f(X) \cdot \sum_{i=1}^n \frac{X_i-p}{p(1-p)}}.
        \end{align*}
        Now we apply integration by parts to this derivative:
        \begin{align*}
            \ex{P \gets [0,1]}{\ex{X \gets \mathcal{D}_P^n}{f(X) \cdot \sum_{i=1}^n (X_i-P)}} 
            &= \int_0^1 g'(p) \cdot p(1-p) \mathrm{d}p \\
            &= \int_0^1 \left( \frac{\mathrm{d}}{\mathrm{d}p} g(p) \cdot p(1-p) \right) - g(p) \cdot (1-2p) \mathrm{d}p \\
            &= g(1) \cdot 1(1-1) - g(0) \cdot 0(1-0) + \int_0^1 g(p) \cdot (2p-1) \mathrm{d}p \\
            &= 2 \ex{P \gets [0,1]}{g(P) \cdot \left(P-\frac12\right)}.
        \end{align*}
        Using the fact that $\ex{P \gets [0,1]}{P-\frac12}=0$ and $\ex{X \gets \mathcal{D}_p^n}{X_i-p}=0$, we can center these expressions:
        \begin{align*}
            \ex{P \gets [0,1] \atop X \gets \mathcal{D}_P^n}{(f(X)-P) \cdot \sum_{i=1}^n (X_i-P)} &= 2 \ex{P \gets [0,1]}{\left( g(P) - \frac12 \right) \cdot \left( P - \frac12 \right)} \\
            &=\! \ex{P \gets [0,1]}{\!\left(\! g(P) \!-\! \frac12 \!\right)^2 \!+\! \left(\! P \!-\! \frac12 \!\right)^2 \!-\! \left(\left( g(P) \!-\! \frac12 \! \right) \!-\! \left( \! P \!-\! \frac12 \!\right) \!\right)^2\!}\\
            &\ge \ex{P \gets [0,1]}{ 0 + \left( P - \frac12 \right)^2 - \left( g(P) - P \right)^2} \\
            &= \frac{1}{12} - \ex{P \gets [0,1]}{\left( g(P) - P \right)^2}\\
            &= \frac{1}{12} - \ex{P \gets [0,1]}{\ex{X \gets \mathcal{D}_P^n}{f(X)-\overline{X}}^2}.
        \end{align*}
    \end{proof}
    \begin{lemma}\label{lem:indistinguishable_expectation}
        Let $X$ and $Y$ be random variables supported on $[-\Delta,\Delta]$ satisfying $\pr{}{X \in S} \le e^\varepsilon \cdot \pr{}{Y \in S} + \delta$ and $\pr{}{Y \in S} \le e^\varepsilon \cdot \pr{}{X \in S} + \delta$ for all measurable $S$. Then \[ \ex{}{X} \le \ex{}{Y}  + (e^\varepsilon - 1) \ex{}{|Y|} + 2\delta\Delta.\]
    \end{lemma}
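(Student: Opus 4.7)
The plan is to use the layer-cake representation of expectation and apply the two indistinguishability bounds separately to the positive and negative tails of $X$. For any random variable $W$ supported on $[-\Delta,\Delta]$ we have the identity
\[
\ex{}{W} = \int_0^\Delta \pr{}{W \ge t}\, \mathrm{d}t - \int_0^\Delta \pr{}{W \le -t}\, \mathrm{d}t,
\]
so I will bound each of these two integrals for $W=X$ using the indistinguishability hypothesis in the appropriate direction.

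First I would bound the positive part. The hypothesis $\pr{}{X \in S} \le e^\varepsilon \pr{}{Y \in S} + \delta$ applied with $S=[t,\Delta]$ gives $\pr{}{X \ge t} \le e^\varepsilon \pr{}{Y \ge t} + \delta$, so integrating from $0$ to $\Delta$ yields
\[
\int_0^\Delta \pr{}{X \ge t}\, \mathrm{d}t \le e^\varepsilon \ex{}{\max(Y,0)} + \delta \Delta.
\]
For the negative part I need a lower bound on $\pr{}{X \le -t}$, which comes from the \emph{reverse} hypothesis: $\pr{}{Y \le -t} \le e^\varepsilon \pr{}{X \le -t} + \delta$ rearranges to $\pr{}{X \le -t} \ge e^{-\varepsilon}\pr{}{Y \le -t} - e^{-\varepsilon}\delta$, and hence
\[
-\int_0^\Delta \pr{}{X \le -t}\, \mathrm{d}t \le -e^{-\varepsilon}\ex{}{\max(-Y,0)} + e^{-\varepsilon}\delta \Delta.
\]

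Adding these two bounds and writing $Y_+ := \max(Y,0)$, $Y_- := \max(-Y,0)$ (so that $\ex{}{Y}=\ex{}{Y_+}-\ex{}{Y_-}$ and $\ex{}{|Y|}=\ex{}{Y_+}+\ex{}{Y_-}$) gives
\[
\ex{}{X} \le e^\varepsilon \ex{}{Y_+} - e^{-\varepsilon}\ex{}{Y_-} + \delta \Delta(1+e^{-\varepsilon}) = \ex{}{Y} + (e^\varepsilon-1)\ex{}{Y_+} + (1-e^{-\varepsilon})\ex{}{Y_-} + \delta\Delta(1+e^{-\varepsilon}).
\]
The proof then concludes by using $1-e^{-\varepsilon} \le e^\varepsilon - 1$ and $1+e^{-\varepsilon}\le 2$ to collapse to $\ex{}{X} \le \ex{}{Y} + (e^\varepsilon-1)\ex{}{|Y|} + 2\delta\Delta$. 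The only real subtlety is remembering to use the indistinguishability hypothesis in \emph{opposite directions} on the two tails: an upper bound on $X$'s expectation requires upper-bounding the positive tail of $X$ in terms of $Y$ (factor $e^\varepsilon$) but \emph{lower}-bounding the negative tail of $X$ in terms of $Y$ (factor $e^{-\varepsilon}$). That asymmetry is what produces the coefficient $(e^\varepsilon-1)$ on $\ex{}{|Y|}$ rather than a cleaner $e^\varepsilon$.
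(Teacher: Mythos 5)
Your proposal is correct and follows essentially the same argument as the paper: the layer-cake representation of $\ex{}{X}$ over $[-\Delta,\Delta]$, the forward hypothesis applied to the positive tail and the reverse hypothesis (rearranged, picking up the factor $e^{-\varepsilon}$) applied to the negative tail, then collecting terms and using $1-e^{-\varepsilon}\le e^\varepsilon-1$ and $1+e^{-\varepsilon}\le 2$. The details match the paper's proof step for step.
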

    \begin{proof}
        \begin{align*}
            \ex{}{X} &= \int_0^\Delta \pr{}{X > t} - \pr{}{X < -t} \mathrm{d}t\\
            &\le \int_0^\Delta e^\varepsilon \cdot \pr{}{Y > t} + \delta - e^{-\varepsilon} \cdot (\pr{}{Y < -t} - \delta) \mathrm{d}t\\
            &= \int_0^\Delta ( \pr{}{Y > t} - \pr{}{Y < -t} ) + (e^\varepsilon-1) \cdot \pr{}{Y > t} + (1 - e^{-\varepsilon}) \cdot \pr{}{Y < -t} + (1+e^{-\varepsilon})\delta \mathrm{d}t\\
            &= \ex{}{Y} + (e^\varepsilon -1) \ex{}{\max\{0,Y\}} + (1-e^{-\varepsilon})\ex{}{\max\{0,-Y\}} + (1+e^{-\varepsilon})\delta\Delta\\
            &\le \ex{}{Y} + (e^{\varepsilon}-1) \ex{}{|Y|} + 2\delta\Delta,
        \end{align*}
        as $1-e^{-\varepsilon} \le e^\varepsilon -1$ and $1+e^{-\varepsilon} \le 2$.
    \end{proof}
    
    \begin{remark}
        The only part of the proof of Theorem \ref{thm:lowerbound} that uses differential privacy is Lemma \ref{lem:indistinguishable_expectation}.
        Thus, if we were to consider a different definition of differential privacy, as long as an analog of Lemma \ref{lem:indistinguishable_expectation} holds for this alternative definition, an analog of Theorem \ref{thm:lowerbound} would still apply. That is to say, this negative result is robust to our choice of privacy definition (unlike the the negative result in Section \ref{sec:basic_optimal}).
    \end{remark}

\section{Privacy Amplification by Subsampling}
    Thus far we have considered the composition of Gaussian mechanisms, and generic mechanisms satisfying pure or approximate DP.
    We now turn our attention to subsampled privacy mechanisms. These mechanisms introduce some additional quirks into the picture, which will force us to develop new tools.
    
    The premise of privacy amplification by subsampling is that we run a DP algorithm on some random subset of the data.
    The subset introduces additional uncertainty, which benefits privacy.
    In particular, there is some probability that your data is not included in the analysis, which can only enhance your privacy.
    Furthermore, a potential attacker does not know whether or not your data was dropped; this uncertainty can benefit your privacy even when your data is included.
    Privacy amplification by subsampling theorems make this intuition precise.
    
    Subsampling arises naturally. We often would like to collect the data of the entire population, but this is impractical. Thus we collect the data of a subset of the population and use statistical methods to generalize from this sample to the entire population.
    In particular, in deep learning applications, we will use stochastic gradient descent. That is, we choose a random subset of our training data (called a minibatch) and compute the gradient of the loss function with respect to this subset, rather than the entire dataset. This method reduces the computational cost for training.
    If we want to make deep learning differentially private, then we will add noise to the gradients and we should exploit privacy amplification by subsampling to analyze the privacy properties of this algorithm.
    
    In this section we will analyze subsampling precisely and we will show how it interacts with composition.

    \subsection{Subsampling for Pure or Approximate DP}
    
    We begin by analyzing privacy amplification by subsampling under pure or approximate differential privacy. This is a relatively simple result, but it will be instructive as we later attempt to derive more precise bounds.

    \begin{theorem}[Privacy Amplification by Subsampling for Approximate DP]\label{thm:subsampling_adp} 
        Let $U \subset [n]$ be a random subset.         
        For a dataset $x \in \mathcal{X}^n$, let $x_U \in \mathcal{X}^n$ denote the entries of $x$ indexed by $U$. That is, $(x_U)_i=x_i$ if $i \in U$ and $(x_U)_i=\bot$ if $i \notin U$, where $\bot \in \mathcal{X}$ is some null value.
    
        Assume that, for all $i \in [n]$, we can define $U_{-i} \subset [n] \setminus \{i\}$ such that the following two conditions hold.
        \begin{itemize}
            \item For all $x \in \mathcal{X}^n$ and $i \in [d]$, $x_U$ and $x_{U_{-i}}$ are always neighbouring datasets.
            \item For all $i \in [n]$, the marginal distribution of $U_{-i}$ conditioned on $i \in U$ is equal to the marginal distribution of $U$ conditioned on $i \notin U$.
        \end{itemize}

        Let $M : \mathcal{X}^{n} \to \mathcal{Y}$ satisfy $(\varepsilon,\delta)$-DP. 
        Define $M^U : \mathcal{X}^n \to \mathcal{Y}$ by $M^U(x) = M(x_U)$.
        
        Let $p = \max_{i \in [n]} \pr{U}{i \in U}$.
        Then $M^U$ is $(\varepsilon',\delta')$-DP for $\varepsilon' = \log(1+p(e^\varepsilon-1))$ and $\delta'=p\cdot\delta$.
    \end{theorem}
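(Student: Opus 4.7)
The plan is to fix neighbouring inputs $x, x' \in \mathcal{X}^n$, let $i$ be the index at which they differ, set $p_i = \pr{U}{i \in U} \le p$, and decompose $M^U(x)$ and $M^U(x')$ as two-component mixtures conditioned on whether $i \in U$. Since $x$ and $x'$ agree outside index $i$, on the event $i \notin U$ we have $x_U = x'_U$, so the two mixtures share a common component. Writing $A(S) = \prc{}{M(x_U) \in S}{i \in U}$, $A'(S) = \prc{}{M(x'_U) \in S}{i \in U}$, and $B(S) = \prc{}{M(x_U) \in S}{i \notin U} = \prc{}{M(x'_U) \in S}{i \notin U}$, the task reduces to showing $p_i A(S) + (1-p_i) B(S) \le e^{\varepsilon'}\bigl[p_i A'(S) + (1-p_i) B(S)\bigr] + \delta'$ for every measurable $S$.

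The next step is to derive two $(\varepsilon,\delta)$-style inequalities between these distributions. The first is direct: for each fixed realization of $U$ containing $i$, the datasets $x_U$ and $x'_U$ are neighbouring, so $(\varepsilon,\delta)$-DP of $M$ plus integrating over $U \mid i \in U$ gives $A(S) \le e^\varepsilon A'(S) + \delta$. The second uses both stated hypotheses: condition~1 says $x_U$ and $x_{U_{-i}}$ are neighbouring for every realization of $U$, so $(\varepsilon,\delta)$-DP averaged over $U \mid i \in U$ yields $A(S) \le e^\varepsilon \cdot \prc{}{M(x_{U_{-i}}) \in S}{i \in U} + \delta$; and condition~2 tells us $U_{-i} \mid i \in U$ has the same law as $U \mid i \notin U$, which (combined with $x_{U_{-i}} = x'_{U_{-i}}$, since $i \notin U_{-i}$) identifies the right-hand conditional probability with $B(S)$. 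Thus also $A(S) \le e^\varepsilon B(S) + \delta$.

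To close the argument, I would convex-combine these two bounds into $A(S) \le \alpha (e^\varepsilon A'(S) + \delta) + (1-\alpha)(e^\varepsilon B(S) + \delta)$ for any $\alpha \in [0,1]$. Setting $\alpha := e^{\varepsilon_i' - \varepsilon}$ with $e^{\varepsilon_i'} := 1 + p_i(e^\varepsilon - 1)$ (which lies in $(0,1]$ since $\varepsilon_i' \le \varepsilon$) and substituting into the mixture makes the coefficient of $A'(S)$ equal $p_i e^{\varepsilon_i'}$ by construction; a short algebraic check reduces the coefficient of $B(S)$ to $(1-p_i) e^{\varepsilon_i'}$ via the identity $(1-p_i) + p_i (1-\alpha) e^\varepsilon = (1-p_i) e^{\varepsilon_i'}$. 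This yields $M^U(x)(S) \le e^{\varepsilon_i'} M^U(x')(S) + p_i \delta$, and monotonicity of $\log(1+p(e^\varepsilon-1))$ and of $p\delta$ in $p$ lets us enlarge $p_i$ to $p$ to recover the claimed bound. I expect the main obstacle to be getting the coupling step through $U_{-i}$ right --- once conditions~1 and~2 are combined correctly to produce $A(S) \le e^\varepsilon B(S) + \delta$, the remaining work is mechanical algebra.
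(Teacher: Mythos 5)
Your proposal is correct and follows essentially the same route as the paper: the same conditional decomposition on the event $i \in U$, the same two inequalities $A(S) \le e^\varepsilon A'(S) + \delta$ and $A(S) \le e^\varepsilon B(S) + \delta$ (the latter via the $U_{-i}$ coupling), and a convex combination whose weight $\alpha = e^{\varepsilon_i'-\varepsilon} = p_i + (1-p_i)e^{-\varepsilon}$ is exactly the paper's choice of $\lambda$. The algebra identifying the coefficients of $A'(S)$ and $B(S)$ with $p_i e^{\varepsilon_i'}$ and $(1-p_i)e^{\varepsilon_i'}$ checks out, and enlarging $p_i$ to $p$ at the end is valid by monotonicity.
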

    For small values of $\varepsilon$, we have $\varepsilon' = \log(1+p(e^\varepsilon-1)) \approx p \cdot \varepsilon$. {More precisely, $\varepsilon' = \log(1+p(e^\varepsilon-1)) \le p \cdot (e^\varepsilon - 1)$ and, for $\varepsilon \le 1$, we have $e^\varepsilon-1 \le \varepsilon + \varepsilon^2 \le 2\varepsilon$.}
    
    The technical assumption about $U$ in the theorem statement is satisfied by many natural subsampling distributions: If $U$ is a uniformly random subset of $[n]$ of a fixed size $m$, then $U_{-i}$ can be obtained by replacing $i$ with a a uniformly random element that is not in $U$. If $U$ is Poisson subsamplied -- i.e., each $i \in [n]$ is independently included in $U$ with probability $p$ -- then, by independence, we can simply remove $i$, namely $U_{-i} = U \setminus\{i\}$.
    
    The technical assumption should be thought of as an independence assumption. For example, it rules out distributions of the form $\pr{U}{U=[n]}=p$ and $\pr{}{U=\emptyset}=1-p$, which do not yield meaningful privacy amplification.

   \begin{proof}[Proof of Theorem \ref{thm:subsampling_adp}.]
    Fix neighbouring inputs $x,x'\in\mathcal{X}^n$ and some measurable $S \subset \mathcal{Y}$. Let $i \in [n]$ be the index on which they differ (i.e., $x_j=x'_j$ for all $j \in [n] \setminus\{i\}$) and let $p_i = \pr{U}{i \in U}$. We have
    \begin{align*}
        \pr{M^U}{M^U(x) \in S} &= \ex{U}{\pr{M}{M(x_U) \in S}}\\
        &= (1-p_i) \cdot \ex{U}{\pr{M}{M(x_U) \in S} \mid i \notin U} + p_i \cdot \ex{U}{\pr{M}{M(x_U) \in S} \mid i \in U}\\
        &= (1-p_i) \cdot \ex{U}{\pr{M}{M(x'_U) \in S} \mid i \notin U} + p_i \cdot \ex{U}{\pr{M}{M(x_U) \in S} \mid i \in U}\\
        &= (1-p_i) \cdot a + p_i \cdot b,\\
        \pr{M^U}{M^U(x') \in S} &= (1-p_i) \cdot \ex{U}{\pr{M}{M(x'_U) \in S} \mid i \notin U} + p_i \cdot \ex{U}{\pr{M}{M(x'_U) \in S} \mid i \in U}\\
        &= (1-p_i) \cdot a + p_i \cdot b',
    \end{align*}
    where $a=\ex{U}{\pr{M}{M(x_U) \in S} \mid i \notin U}=\ex{U}{\pr{M}{M(x'_U) \in S} \mid i \notin U}$, $b=\ex{U}{\pr{M}{M(x_U) \in S} \mid i \in U}$, and $b'=\ex{U}{\pr{M}{M(x'_U) \in S} \mid i \in U}$.
    
    Note that $x_U$ and $x'_U$ are always neighbouring datasets. And, if $i \notin U$, then $x_U = x'_U$.
    Since $M$ is $(\varepsilon,\delta)$-DP, we have $\pr{M}{M(x_U) \in S} \le e^\varepsilon \cdot \pr{M}{M(x'_U) \in S} + \delta$ for all values of $U$; thus \[b = \ex{U}{\pr{M}{M(x_U) \in S} \mid i \in U} \le \ex{U}{e^\varepsilon\cdot\pr{M}{M(x'_U) \in S} + \delta \mid i \in U} = e^\varepsilon \cdot b' + \delta.\]
    However, this inequality alone is not sufficient to prove the claim. We also need to show that $b \le e^\varepsilon \cdot a + \delta$.
    Using our technical assumption, we have
    \begin{align*}
        b &= \ex{U}{\pr{M}{M(x_U) \in S} \mid i \in U}\\
        &\le \ex{U}{e^\varepsilon \cdot \pr{M}{M(x_{U_{-i}}) \in S} + \delta \mid i \in U} \tag{$x_{U_{-i}}$ is a neighbour of $x_U$} \\
        &= \ex{U}{e^\varepsilon \cdot \pr{M}{M(x_U) \in S} + \delta \mid i \notin U} \tag{$U_{-i}|_{i \in U}$ has the same distribution as $U|_{i \notin U}$}\\
        &= e^\varepsilon \cdot a + \delta.
    \end{align*}
    Now we can complete the proof: For any $\lambda \in [0,1]$,
    \begin{align*}
        \pr{M^U}{M^U(x) \in S} 
        &= (1-p_i) \cdot a + p_i \cdot b\\
        &\le (1-p_i) \cdot a + p_i \cdot ((1-\lambda) \cdot (e^\varepsilon \cdot a + \delta) + \lambda \cdot (e^\varepsilon \cdot b' + \delta)) \\
        &= (1-p_i+e^\varepsilon \cdot (1-\lambda)\cdot p_i) \cdot a + p_i \cdot e^\varepsilon \cdot \lambda \cdot b' + p_i \cdot \delta.
    \end{align*} 
    Set $\lambda = p_i + (1-p_i) \cdot e^{-\varepsilon}$ to obtain
    \begin{align*}
        \pr{M^U}{M^U(x) \in S} 
        &\le  (1-p_i+e^\varepsilon \cdot (1-\lambda)\cdot p_i) \cdot a + p_i \cdot e^\varepsilon \cdot \lambda \cdot b' + p_i\cdot\delta\\
        &= \left( 1 + p_i \cdot (e^\varepsilon -1) \right) \cdot \left( (1-p_i) \cdot a + p_i \cdot b' \right) + p_i \cdot \delta\\
        &= \left( 1 + p_i \cdot (e^\varepsilon -1) \right) \cdot \pr{M^U}{M_U(x') \in S} + p_i \cdot \delta\\
        &\le e^{\varepsilon'} \cdot \pr{M^U}{M_U(x') \in S} + \delta'.
    \end{align*} 
    \end{proof}
    
    Theorem \ref{thm:subsampling_adp} is tight:  Consider an algorithm $M : \{0,1,\bot\}^{n} \to \{0,1\}$ that sums its input (excluding $\bot$ values) and performs randomized response on whether or not the sum is 0.\footnote{Alternatively (and equivalently), consider an algorithm that adds discrete Laplace noise to the sum of its non-null inputs.} That is, if $y \in \{0,1,\bot\}^{n}$ satisfies $\sum_{i : y_i \ne \bot} y_i = 0$, then $\pr{}{M(y)=0}=\frac{e^\varepsilon}{e^\varepsilon=1}$ and $\pr{}{M(y)=1}=\frac{1}{e^\varepsilon=1}$ and, if $y \in \{0,1, \bot\}^{ n}$ satisfies $\sum_{i : y_i \ne \bot} y_i > 0$, then $\pr{}{M(y)=1}=\frac{e^\varepsilon}{e^\varepsilon=1}$ and $\pr{}{M(y)=0}=\frac{1}{e^\varepsilon-1}$.
    This algorithm satisfies $\varepsilon$-DP.
    
    Let $U \subset [n]$ be random and let $M^U : \{0,1\}^n \to \{0,1\}$ be as in Theorem \ref{thm:subsampling_adp}.
    Consider neighbouring datasets $x=(0,0,\cdots,0)$ and $x'=(1,0,0,\cdots,0)$.
    We have 
    \begin{align*}
        \pr{}{M^U(x)=0} &= \frac{e^\varepsilon}{e^\varepsilon+1},\\
        \pr{}{M^U(x)=1} &= \frac{1}{e^\varepsilon+1},\\
        \pr{}{M^U(x')=1} &= \frac{\pr{}{1 \in U} \cdot e^\varepsilon + \pr{}{1 \notin U}}{e^\varepsilon+1},\\
        \pr{}{M^U(x')=0} &= \frac{\pr{}{1 \in U} + \pr{}{1 \notin U} \cdot e^\varepsilon}{e^\varepsilon+1},\\
        e^{\varepsilon'}\ge\frac{\pr{}{M^U(x')=1}}{\pr{}{M^U(x)=1}} &= 1 + \pr{}{1 \in U} \cdot (e^\varepsilon -1),
    \end{align*}
    where $\varepsilon'$ is the pure DP parameter satisfied by $M^U$. We can assume without loss of generality that $p = \max_i \pr{}{i \in U} = \pr{}{1 \in U}$. Thus this bound matches the guarantee of Theorem \ref{thm:subsampling_adp}. (This example can be extended to approximate DP too.)
    
    \subsection{Addition or Removal versus Replacement for Neighbouring Datasets}\label{sec:addremovereplace}
    
    For this discussion of subsampling, we need to be careful about what it means for datasets to be neighbouring.
    There are three common definitions of what qualifies as neighbouring datasets: (i) addition or removal of one person's data, (ii) replacement of one person's data, or (iii) both.
    Each of these three options is a reasonable choice. Work on differential privacy often glosses over this choice -- often the choice is irrelevant. But it becomes relevant if we want sharp analyses of privacy amplification by subsampling.
    
    For the discussion of composition so far in this chapter, it does not matter at all how we define neighbouring datasets, as long as we are consistent.
    In general, it only matters slightly which we choose: A replacement can be accomplished by a combination of a removal and an addition. Thus, by group privacy, if the algorithm is $(\varepsilon,\delta)$-DP with respect to addition or removal, then it is $(2\varepsilon,(1+e^\varepsilon)\delta)$-DP with respect to replacement. Conversely, we can simulate a removal or addition by replacing the record with a ``null'' value ($\bot$ in the formalism of Theorem \ref{thm:subsampling_adp}). Thus DP with respect to replacement entails DP with respect to addition or removal with the same parameters, unless the semantics of the algorithm forbids null values.
    
    This subtlety already arises in Theorem \ref{thm:subsampling_adp}. Let's take a close look at the technical assumption: Theorem \ref{thm:subsampling_adp} assumes that, for all $i \in [n]$, we can define $U_{-i} \subset [n] \setminus \{i\}$ such that the following two conditions hold.
    \begin{itemize}
        \item For all $x \in \mathcal{X}^n$ and $i \in [d]$, $x_U$ and $x_{U_{-i}}$ are always neighbouring datasets.
        \item For all $i \in [n]$, the marginal distribution of $U_{-i}$ conditioned on $i \in U$ is equal to the marginal distribution of $U$ conditioned on $i \notin U$.
    \end{itemize}
    Suppose $U \subset [n]$ is a uniformly random subset of a fixed size $|U|=m$. Then we would define $U_{-i}$ to be $U$ with $i$ replaced by a uniformly random element that is not already in $U$. Thus, for $x_U$ and $x_{U_{-i}}$ to be neighbouring datasets, our neighbouring relation must allow replacement, not just addition or removal.
    
    However, if $U$ corresponds to Poisson subsampling (i.e., each $i \in [n]$ is included in $U$ independently with probability $p$), then $U_{-i}$ would just correspond to removing $i$. In that case, for $x_U$ and $x_{U_{-i}}$ to be neighbouring datasets, our neighbouring relation must allow addition and removal.
    
    It turns out to be easier to work with Poisson subsampling and assuming the neighbouring relation is addition or removal. In this case, the proof of Theorem \ref{thm:subsampling_adp} simplifies to the following. 
    
    \begin{proof}[Proof of Theorem \ref{thm:subsampling_adp} for the special case of Poisson sampling and addition or removal.]
            Let $U \subset [n]$ independently include each element with probability $p$.
            Let $x,x'\in\mathcal{X}^n$ be neighbouring datasets in terms of addition or removal. Without loss of generality, assume $x'$ is $x$ with $x_i$ removed (or, rather, replaced by $x'_i=\bot$).\footnote{To be formal, we assume $\bot \in \mathcal{X}$ is a null value that is equivalent to removing the item. In particular, for $x \in \mathcal{X}^n$ and $U \subset [n]$ we can define $x_U \in \mathcal{X}^n$ such that $(x_U)_i = x_i$ if $i \in U$ and $(x_U)_i = \bot$ if $i \in [n] \setminus U$.} For any measurable $S \subset \mathcal{Y}$,
            \begin{align*}
                \pr{M^U}{M^U(x) \in S} &= \ex{U}{\pr{M}{M(x_U) \in S}}\\
                &= (1-p) \cdot \ex{U}{\pr{M}{M(x_U) \in S} \mid i \notin U} + p \cdot \ex{U}{\pr{M}{M(x_U) \in S} \mid i \in U}\\
                &= (1-p) \cdot \ex{U}{\pr{M}{M(x'_U) \in S} \mid i \notin U} + p \cdot \ex{U}{\pr{M}{M(x_U) \in S} \mid i \in U}\\
                &= (1-p) \cdot \ex{U}{\pr{M}{M(x'_U) \in S}} + p \cdot \ex{U}{\pr{M}{M(x_U) \in S} \mid i \in U}\\
                &\le (1-p) \cdot \ex{U}{\pr{M}{M(x'_U) \in S}} + p \cdot \ex{U}{e^\varepsilon \cdot \pr{M}{M(x'_U) \in S} + \delta \mid i \in U}\\
                &= (1-p) \cdot \ex{U}{\pr{M}{M(x'_U) \in S}} + p \cdot \ex{U}{e^\varepsilon \cdot \pr{M}{M(x'_U) \in S} + \delta}\\
                &= (1-p+p\cdot e^\varepsilon) \cdot  \pr{M^U}{M^U(x') \in S} + p\cdot\delta
            \end{align*}
            and, by the same calculation,
            \begin{align*}
                \pr{M^U}{M^U(x) \in S} 
                &= (1-p) \cdot \ex{U}{\pr{M}{M(x'_U) \in S}} + p \cdot \ex{U}{\pr{M}{M(x_U) \in S} \mid i \in U}\\
                &\ge (1-p) \cdot \ex{U}{\pr{M}{M(x'_U) \in S}} + p \cdot \ex{U}{e^{-\varepsilon} \cdot (\pr{M}{M(x'_U) \in S}-\delta) \mid i \in U}\\
                &= (1-p+p\cdot e^{-\varepsilon}) \cdot  \pr{M^U}{M^U(x') \in S} - p \cdot e^{-\varepsilon} \cdot \delta\\
                &\ge \frac{1}{1-p+p \cdot e^\varepsilon} \cdot(\pr{M^U}{M^U(x')\in S}-p\cdot\delta).\tag{Lemma \ref{lem:frac}}
            \end{align*}
            The key step in the proof is the equality $\ex{U}{\pr{M}{M(x'_U) \in S} \mid i \notin U} = \ex{U}{\pr{M}{M(x'_U) \in S} \mid i \in U} = \ex{U}{\pr{M}{M(x'_U) \in S} }$. This holds because $x'_i=\bot$, so whether or not $i \in U$ is irrelevant for $x'_U$, and because the event $i \in U$ is independent from $U \setminus \{i\}$. 
    \end{proof}
    
    For the rest of this section, we will restrict our attention to Poisson subsampling and assume that the neighbouring relation corresponds to addition or removal of one individual's data.

\subsection{Subsampling \& Composition}\label{sec:subsamp_composition}

    How does composition work with subsampling? Of course, we can combine the advanced composition theorem (Theorem \ref{thm:advancedcomposition_approx}) with our privacy amplification by subsampling result (Theorem \ref{thm:subsampling_adp}). However, it turns out this is not the best way to analyze many realistic systems.
    
    Consider the following algorithm (which arises in differentially private deep learning applications).
    Let $x \in \mathcal{X}^n$ be the private input.
    Iteratively, for $t=1,\cdots,T$, we pick some function $q_t : \mathcal{X}^n \to \mathbb{R}^d$ and randomly sample a subset $U_t \subset [n]$; then we reveal $\mathcal{N}(q_t(x_{U_t}),\sigma^2 I_d)$.
    
    This algorithm interleaves composition with privacy amplification by subsampling. That is, we combine multivariate Gaussian noise addition (which is a form of composition over the $d$ coordinates) with subsampling and then we compose over the $T$ iterations.
    
    We can use Corollary \ref{cor:gauss_adp_exact_multi} to show that releasing $\mathcal{N}(q_t(x),\sigma^2 I_d)$ satisfies $(\varepsilon_0,\delta_0)$-DP for $\varepsilon_0 = O\left(\sqrt{\frac{\Delta_2^2}{\sigma^2} \log(1/\delta_0)}\right)$, where $\Delta_2 = \sup_{x,x'\in\mathcal{X}^n \atop \text{neighbouring}} \|q_t(x)-q_t(x')\|_2$ is the sensitivity of $q_t$.
    Then we can use Theorem \ref{thm:subsampling_adp} to show that, if $U_t$ is a Poisson sample which contains each element with probability $p$, then $\mathcal{N}(q_t(x_{U_t}),\sigma^2 I_d)$ is $(\varepsilon_1,\delta_1)$-DP where $\varepsilon_1 = \log(1+p\cdot(e^{\varepsilon_0}-1)) = O(p \cdot \varepsilon_0)$ and $\delta_1 = p\cdot\delta_0$.
    Finally, Theorem \ref{thm:advancedcomposition_approx} tells us that the composition over $T$ iterations satisfies $(\varepsilon,\delta)$-DP with $\varepsilon = O\left(\varepsilon_1 \cdot \sqrt{T \log(1/\delta_2)})\right)$ and $\delta= \delta_2 + T \cdot \delta_1$. Overall, we have \[\varepsilon = O\left(\frac{\Delta_2}{\sigma} \cdot p \cdot \sqrt{T} \cdot \log(T/\delta)\right).\]
    This result is asymptotically suboptimal because we have picked up two $\sqrt{\log(1/\delta)}$ terms. We obtained one from the Gaussian noise addition (Corollary \ref{cor:gauss_adp_exact_multi}) and another from the composition (Theorem \ref{thm:advancedcomposition_approx}). Both arise from bounding the tails of the privacy loss distribution. This is redundant; we should only need to bound the tails of the privacy loss distribution once.
    
    Intuitively, we started with a Gaussian privacy loss; then we applied a tail bound to obtain a $(\varepsilon_0,\delta_0)$-DP guarantee to which we applied the subsampling theorem; and then we converted this back into a concentrated DP guarantee to apply advanced composition and finally we applied a tail bound to convert this back to $(\varepsilon,\delta)$-DP.
    
    We are going to avoid this redundancy by analyzing privacy amplification by subsampling directly in terms of the privacy loss distribution, rather than needing to go via approximate DP. To do so, we need to introduce a new tool.
    
\subsection{R\'enyi Differential Privacy}

    R\'enyi differential privacy was introduced by \citet{mironov2017renyi} and was motivated by analyzing privacy amplification by subsampling interleaved with composition, which arises in differentially private deep learning \citep{abadi2016deep}.
    
    \begin{definition}[R\'enyi Differential Privacy]\label{defn:rdp}
        Let $M : \mathcal{X}^n \to \mathcal{Y}$ be a randomized algorithm. We say that $M$ satisfies $(\alpha,\varepsilon)$-R\'enyi differential privacy ($(\alpha,\varepsilon)$-RDP) if, for all neighbouring inputs $x,x'\in\mathcal{X}^n$, the privacy loss distribution $\privloss{M(x)}{M(x')}$ is well-defined (see Definition \ref{defn:priv_loss}) and \[\ex{Z \gets \privloss{M(x)}{M(x')}}{\exp((\alpha-1)Z)} \le \exp((\alpha-1)\cdot\varepsilon).\] 
    \end{definition}
    
    R\'enyi DP is closely related to concentrated DP (Definition \ref{defn:cdp}).
    Specifically, $\rho$-zCDP is equivalent to satisfying $(\alpha,\alpha\cdot\rho)$-RDP for all $\alpha \in (1,\infty)$.
    R\'enyi DP inherits the nice composition properties of concentrated DP:
    
    \begin{lemma}\label{lem:rdp_composition}
        Let $M_1 : \mathcal{X}^n \to \mathcal{Y}_1$ be $(\alpha,\varepsilon_1)$-RDP. Let $M_2 : \mathcal{X}^n \times \mathcal{Y}_1 \to \mathcal{Y}_2$ be such that, for all $y_1 \in \mathcal{Y}_1$, the algorithm $x \mapsto M(x,y_1)$ is $(\alpha,\varepsilon_2)$-RDP.
        Define $M : \mathcal{X}^n \to \mathcal{Y}_2$ by $M(x) = M_2(x,M_1(x))$. Then $M$ is $(\alpha,\varepsilon_1+\varepsilon_2)$-RDP.
    \end{lemma}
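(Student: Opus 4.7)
The plan is to follow the proof of Proposition~\ref{prop:adaptivecomp} almost verbatim, exploiting the observation that $(\alpha,\varepsilon)$-RDP is precisely the bound $\ex{Z}{\exp((\alpha-1)Z)} \le \exp((\alpha-1)\varepsilon)$ -- that is, the same moment generating function bound that defines $\rho$-zCDP, but demanded only at the single value $t = \alpha - 1$. Every inequality used in Proposition~\ref{prop:adaptivecomp} requires only the bound at one value of $t$, so the same chain of manipulations will suffice.

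First I would reduce to the augmented mechanism $\hat{M} : \mathcal{X}^n \to \mathcal{Y}_1 \times \mathcal{Y}_2$ defined by $\hat{M}(x) = (M_1(x), M_2(x, M_1(x)))$. The original $M$ is a postprocessing of $\hat{M}$ (discard the first coordinate), so it suffices to show $\hat{M}$ is $(\alpha, \varepsilon_1 + \varepsilon_2)$-RDP. The required postprocessing property for RDP is immediate from Lemma~\ref{lem:postprocessing}: its proof uses only Jensen's inequality applied to the convex function $v \mapsto v^{t+1}$ for $t \ge 0$, and specializing to $t = \alpha - 1$ gives postprocessing for $(\alpha, \varepsilon)$-RDP.

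For $\hat{M}$, the key identity is the decomposition
\[\llr{\hat{M}(x)}{\hat{M}(x')}(y_1, y_2) = \llr{M_1(x)}{M_1(x')}(y_1) + \llr{M_2(x,y_1)}{M_2(x',y_1)}(y_2),\]
which holds because the first coordinate pins down $y_1$ and the joint density therefore factors accordingly (this is where the augmentation is needed). Fixing neighbouring $x, x'$ and letting $Y_1 \gets M_1(x)$ and then $Y_2 \gets M_2(x, Y_1)$, I would compute
\begin{align*}
    \ex{}{e^{(\alpha-1) Z}}
    &= \ex{Y_1}{e^{(\alpha-1)\llr{M_1(x)}{M_1(x')}(Y_1)} \cdot \ex{Y_2 \mid Y_1}{e^{(\alpha-1)\llr{M_2(x,Y_1)}{M_2(x',Y_1)}(Y_2)}}} \\
    &\le \ex{Y_1}{e^{(\alpha-1)\llr{M_1(x)}{M_1(x')}(Y_1)}} \cdot \exp((\alpha-1)\varepsilon_2) \\
    &\le \exp((\alpha-1)(\varepsilon_1+\varepsilon_2)),
\end{align*}
where the first inequality uses the RDP assumption on $M_2$ uniformly in $y_1$ to bound the inner conditional expectation, and the second uses the RDP assumption on $M_1$.

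The main obstacle -- really the only nontrivial step -- is the augmentation, exactly as in Proposition~\ref{prop:adaptivecomp}: the decomposition of the log likelihood ratio above requires that the output determine $y_1$, which is not automatic for $M$ itself but holds trivially for $\hat{M}$. Once this postprocessing reduction is in place, the rest is a routine application of the tower property together with the two RDP hypotheses, and no new technical machinery beyond what was developed for concentrated DP is required.
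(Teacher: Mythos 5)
Your proposal is correct and matches the paper's approach exactly: the paper's proof of Lemma \ref{lem:rdp_composition} is simply the observation that the argument of Proposition \ref{prop:adaptivecomp} goes through verbatim, and your write-up is a faithful instantiation of that argument at the single moment $t=\alpha-1$, including the augmentation/postprocessing reduction via Lemma \ref{lem:postprocessing}.
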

    The proof of Lemma \ref{lem:rdp_composition} is identical to that of Proposition \ref{prop:adaptivecomp}. Note that, while the $\varepsilon$ parameter adds up, the $\alpha$ parameter does not change. More generally, composing an $(\alpha_1,\varepsilon_1)$-RDP algorithm with an $(\alpha_2,\varepsilon_2)$-RDP algorithm yields $(\min\{\alpha_1,\alpha_2\},\varepsilon_1+\varepsilon_2)$-RDP.
    
    It is helpful to think of $\varepsilon$ in $(\alpha,\varepsilon)$-RDP as a function of $\alpha$, rather than a single number.
    This function can encode a rich variety of privacy guarantees. (Concentrated DP corresponds to a linear function.) In particular, it allows us to more precisely represent the kinds of guarantees obtained by subsampling.
    
    Concentrated DP corresponds to the privacy loss being subgaussian (i.e., $\rho$-zCDP implies $\pr{Z \gets \privloss{M(x)}{M(x')}}{Z>\tilde\varepsilon} \le \exp(-(\tilde\varepsilon-\rho)^2/4\rho)$ for all $\tilde\varepsilon \ge \rho$ and all neighbouring inputs $x$ and $x'$), whereas R\'enyi DP corresponds to the privacy loss being subexponential (i.e., $(\alpha,\varepsilon)$-RDP implies $\pr{Z \gets \privloss{M(x)}{M(x')}}{Z>\tilde\varepsilon} \le \exp(-(\alpha-1)(\tilde\varepsilon-\varepsilon))$). That is, R\'enyi DP is more appropriate for analyzing privacy loss distributions with slightly heavier tails than Gaussian.
    In contrast, pure DP corresponds to the privacy loss being bounded (i.e., $\varepsilon$-DP implies $\pr{Z \gets \privloss{M(x)}{M(x')}}{Z>\varepsilon}=0$). So we can view concentrated DP as a relaxation of pure DP and, in turn, R\'enyi DP is a relaxation of concentrated DP.
    
    R\'enyi DP is typically formulated in terms of R\'enyi divergences \citep{renyi1961measures}, which were studied in the information theory literature long before differential privacy was discovered.

    \begin{definition}[R\'enyi Divergences]
        Let $P$ and $Q$ be distributions over $\mathcal{Y}$.\footnote{We make the usual measure theoretic disclaimers: We assume the $P$ and $Q$ have the same sigma-algebra. We assume $P$ is absolutely continuous with respect to $Q$ -- i.e., $\forall S ~~ Q(S)=0 \implies P(S)=0$ -- so that the Radon-Nikodym derivative is well-defined; we denote the Radon-Nikodym derivative of $P$ with respect to $Q$ evaluated at $y$ by $P(y)/Q(y)$. More generally, if the absolute continuity assumption does not hold, then we define $\dr{\alpha}{P}{Q}=\infty$ for all $\alpha \in [1,\infty]$.}
        For $\alpha \in (1,\infty)$, define
        \begin{align*}
            \dr{\alpha}{P}{Q} &= \frac{1}{\alpha-1}\log\ex{Z \gets \privloss{P}{Q}}{\exp((\alpha-1)Z)}\\
            &= \frac{1}{\alpha-1} \log \ex{Y \gets P}{\left(\frac{P(Y)}{Q(Y)}\right)^{\alpha-1}}\\
            &= \frac{1}{\alpha-1} \log \ex{Y \gets Q}{\left(\frac{P(Y)}{Q(Y)}\right)^{\alpha}}.
        \end{align*}
        Also, define
        \begin{align*}
            \dr{1}{P}{Q} &= \lim_{\alpha \to 1} \dr{\alpha}{P}{Q} \\
            &= \ex{Z \gets \privloss{P}{Q}}{Z}\\
            &= \ex{Y \gets P}{\log\left(\frac{P(Y)}{Q(Y)}\right)},\\
            \dr{\infty}{P}{Q} &= \lim_{\alpha \to \infty} \dr{\alpha}{P}{Q}\\
            &= \sup\left\{\log\left(\frac{P(S)}{Q(S)}\right) : S \subset \mathcal{Y} , Q(S) > 0\right\}.
        \end{align*}
    \end{definition}
    
    Thus an equivalent definition of $M$ satisfying $(\alpha,\varepsilon)$-RDP is that $\dr{\alpha}{M(x)}{M(x')} \le \varepsilon$ for all neighbouring $x,x'$.
    
    We now state several key properties of R\'enyi divergences; most of these are properties we have proved earlier, but we now restate them in a new language.
    
    \begin{lemma}\label{lem:rdp_properties}
        Let $P,Q$ be probability distributions over $\mathcal{Y}$ with a common sigma-algebra such that $P$ is absolutely continuous with respect to $Q$.
        \begin{enumerate}
        
            \item \textbf{Postprocessing (a.k.a.~data processing inequality) \& non-negativity:} Let $f : \mathcal{Y} \to \mathcal{Z}$ be a measurable function. Let $f(P)$ denote the distribution on $\mathcal{Z}$ obtained by applying $f$ to a sample from $P$; define $f(Q)$ similarly.
            Then $0 \le \dr{\alpha}{f(P)}{f(Q)} \le \dr{\alpha}{P}{Q}$ for all $\alpha \in [1,\infty]$.
        
            \item \textbf{Composition:} If $P=P' \times P''$ and $Q=Q' \times Q''$ are product distributions, then $\dr{\alpha}{P}{Q} = \dr{\alpha}{P'}{Q'} + \dr{\alpha}{P''}{Q''}$ for all $\alpha \in [1,\infty]$.
            
            More generally, suppose $P$ and $Q$ are distributions on $\mathcal{Y} = \mathcal{Y}' \times \mathcal{Y}''$. Let $P'$ and $Q'$ be the marginal distributions on $\mathcal{Y}'$ induced by $P$ and $Q$ respectively. For $y' \in \mathcal{Y}'$, let $P''_{y'}$ and $Q''_{y'}$ be the conditional distributions on $\mathcal{Y}''$ induced by $P$ and $Q$ respectively. That is, we can generate a sample $Y=(Y',Y'') \gets P$ by first sampling $Y' \gets P'$ and then sampling $Y'' \gets P''_{Y'}$ and similarly for $Q$. Then $\dr{\alpha}{P}{Q} \le \dr{\alpha}{P'}{Q'} + \sup_{y' \in \mathcal{Y}'} \dr{\alpha}{P''_{y'}}{Q''_{y'}}$ for all $\alpha \in [1,\infty]$.

            \item \textbf{Monotonicity:} For all $1 \le \alpha \le \alpha' \le \infty$, $\dr{\alpha}{P}{Q} \le \dr{\alpha'}{P}{Q}$.

            \item \textbf{Gaussian divergence:} For all $\mu,\mu',\sigma\in\mathbb{R}$ with $\sigma>0$ and all $\alpha \in [1,\infty)$, \[\dr{\alpha}{\mathcal{N}(\mu,\sigma^2)}{\mathcal{N}(\mu',\sigma^2)} = \alpha \cdot \frac{(\mu-\mu')^2}{2\sigma^2}.\]
            
            \item \textbf{Pure DP to Concentrated DP:} For all $\alpha \in [1,\infty)$, \[\dr{\alpha}{P}{Q} \le \frac{\alpha}{8}\cdot\left(\dr{\infty}{P}{Q} + \dr{\infty}{Q}{P}\right)^2.\]
            
            \item \textbf{Quasi-convexity:} Let $P'$ and $Q'$ be probability distributions over $\mathcal{Y}$ such that $P'$ is absolutely continuous with respect to $Q'$. For $s \in [0,1]$, let $(1-s) \cdot P + s \cdot P'$ denote the convex combination of the distributions $P$ and $P'$ with weighting $s$.
            For all $\alpha \in (1,\infty)$ and all $s \in [0,1]$,
            \begin{align*}
                &\dr{\alpha}{(1-s) \cdot P + s \cdot P'}{(1-s) \cdot Q + s \cdot Q'} \\
                &~~\le \frac{1}{\alpha-1} \log \big( (1-s) \cdot \exp\left((\alpha-1) \dr{\alpha}{P}{Q} \right) + s \cdot \exp\left((\alpha-1) \dr{\alpha}{P'}{Q'} \right) \big) \\
                &~~\le \max\big\{ \dr{\alpha}{P}{Q} , \dr{\alpha}{P'}{Q'} \big\}
            \end{align*}
            and $\dr{1}{(1-s) \cdot P + s \cdot P'}{(1-s) \cdot Q + s \cdot Q'} \le (1-s) \cdot \dr{1}{P}{Q} + s \cdot \dr{1}{P'}{Q'}$.
            
            \item \textbf{Triangle-like inequality (a.k.a.~group privacy):} Let $R$ be a distribution on $\mathcal{Y}$ and assume that $Q$ is absolutely continuous with respect to $R$. For all $1 < \alpha < \alpha' < \infty$, \[\dr{\alpha}{P}{R} \le \frac{\alpha'}{\alpha'-1}\cdot \dr{\alpha \cdot \frac{\alpha'-1}{\alpha'-\alpha}}{P}{Q} + \dr{\alpha'}{Q}{R}.\]
            In particular, if $\dr{\alpha}{P}{Q} \le \rho_1 \cdot \alpha$ and $\dr{\alpha}{Q}{R} \le \rho_2 \cdot \alpha$ for all $\alpha \in (1,\infty)$, then $\dr{\alpha}{P}{R} \le (\sqrt{\rho_1}+\sqrt{\rho_2})^2 \cdot \alpha$ for all $\alpha \in (1,\infty)$.
            
            \item \textbf{Conversion to approximate DP:} For all measurable $S \subset \mathcal{Y}$, all $\alpha \in (1,\infty)$, and all $\tilde{\varepsilon}\ge\dr{\alpha}{P}{Q}$, \[P(S) \le e^{\tilde{\varepsilon}} \cdot Q(S) + e^{-(\alpha-1)(\tilde\varepsilon-\dr{\alpha}{P}{Q})} \cdot \frac1\alpha \cdot \left( 1- \frac1\alpha\right)^{\alpha-1} \le e^{\tilde{\varepsilon}} \cdot Q(S) + e^{-(\alpha-1)(\tilde\varepsilon-\dr{\alpha}{P}{Q})}.\]
        \end{enumerate}
    \end{lemma}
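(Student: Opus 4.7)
The lemma collects eight classical properties of Rényi divergence, and my plan is to reduce each to one of three workhorses --- Jensen, Hölder, or Hoeffding --- applied to the defining expectation $\ex{Y\gets Q}{(P(Y)/Q(Y))^\alpha}$. I would tackle them in roughly increasing order of difficulty, leaving the group-privacy inequality (item~7) for last as the only step that requires real care.

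First the elementary items (1--4). Non-negativity is Jensen on the convex map $v\mapsto v^\alpha$ applied to $v=P(Y)/Q(Y)$ with $Y\gets Q$, using $\ex{}{v}=1$. Postprocessing reuses the argument from the proof of Lemma~\ref{lem:postprocessing} verbatim: the identity $P(y)/Q(y)=\ex{\hat Y\gets\hat Q_y}{\hat P(\hat Y)/\hat Q(\hat Y)}$ followed by Jensen on $v^\alpha$ is precisely the calculation carried out there with $\alpha=t+1$. Composition for product distributions is immediate from Theorem~\ref{thm:privloss_composition} --- the privacy loss is a sum of independent terms whose MGFs multiply --- and the chain-rule version follows by conditioning on the first coordinate and taking a supremum. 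Monotonicity is the standard fact that $\alpha\mapsto\ex{Y\gets P}{(P(Y)/Q(Y))^{\alpha-1}}^{1/(\alpha-1)}$ is an $L^{\alpha-1}(P)$-norm of the likelihood ratio and hence nondecreasing in $\alpha$. Gaussian divergence is then a one-line corollary of Proposition~\ref{prop:gauss_privloss}: evaluating the Gaussian MGF at $t=\alpha-1$ on $\privloss{P}{Q}=\mathcal{N}(\rho,2\rho)$ gives $\ex{}{e^{(\alpha-1)Z}}=\exp(\alpha(\alpha-1)\rho)$, so $\dr{\alpha}{P}{Q}=\alpha\rho$ exactly.

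Next the tail-bound items (5 and 8). For pure DP to concentrated DP, the privacy loss $Z$ lies in $[-\dr{\infty}{Q}{P},+\dr{\infty}{P}{Q}]$ and satisfies $\ex{}{e^{-Z}}=1$ by Lemma~\ref{lem:dual_privloss}. A translated version of Hoeffding's lemma on this interval of total length $L=\dr{\infty}{P}{Q}+\dr{\infty}{Q}{P}$, applied at $t=-1$, gives $\ex{}{Z}\le L^2/8$, and then at $t=\alpha-1$ yields $\ex{}{e^{(\alpha-1)Z}}\le\exp(\alpha(\alpha-1)L^2/8)$, so $\dr{\alpha}{P}{Q}\le\alpha L^2/8$, matching the stated constant $\alpha/8$. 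Conversion to approximate DP mimics Proposition~\ref{prop:cdp2adp} line for line: bound $\max\{0,1-e^{\tilde\varepsilon-z}\}\le c\cdot e^{(\alpha-1)(z-\tilde\varepsilon)}$ pointwise with the optimal constant $c=\alpha^{-1}(1-\alpha^{-1})^{\alpha-1}$ (calculus in $z$), take expectations, and substitute the RDP bound on the MGF.

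Finally the two new items. Quasi-convexity follows from the joint convexity of $(p,q)\mapsto p^\alpha q^{1-\alpha}$ (itself a two-point Hölder inequality); applying it pointwise to $(1-s)P(y)+sP'(y)$ and $(1-s)Q(y)+sQ'(y)$ and integrating gives the $\log$-of-convex-combination inequality, and the max bound is then monotonicity of $\log$. The $\alpha=1$ case is the classical joint convexity of KL. The triangle-like inequality is the only step requiring real thought: I would write $P/R=(P/Q)\cdot(Q/R)$, change measure in $\ex{Y\gets R}{(P/R)^\alpha}$ from $R$ to $Q$ (picking up a factor $(Q/R)^{\alpha-1}$), and apply Hölder's inequality with conjugate exponents $s=(\alpha'-1)/(\alpha'-\alpha)$ and $t=(\alpha'-1)/(\alpha-1)$. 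These exponents are chosen so that the two resulting factors produce $\dr{\alpha(\alpha'-1)/(\alpha'-\alpha)}{P}{Q}$ and $\dr{\alpha'}{Q}{R}$ (the latter after a second change of measure back to $R$), with the prefactor $\alpha'/(\alpha'-1)$ arising from the ratio $(s\alpha-1)/(s(\alpha-1))$. The $(\sqrt{\rho_1}+\sqrt{\rho_2})^2\alpha$ corollary then falls out by substituting $\alpha'=\alpha+\alpha\sqrt{\rho_1/\rho_2}$ and simplifying. The main obstacle is the bookkeeping of exponents in this Hölder step; everything else is a guided tour through standard convex-analytic techniques.
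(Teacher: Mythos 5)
Your plan matches the paper's proof essentially item for item: the paper likewise disposes of items 1--5 and 8 by pointing back to Lemma \ref{lem:postprocessing}, Proposition \ref{prop:adaptivecomp}, a Jensen argument for monotonicity, Lemma \ref{lem:gauss_cdp}, Proposition \ref{prop:pdp2cdp}, and Proposition \ref{prop:cdp2adp}, and it proves the triangle-like inequality by exactly your H\"older argument with exponents $p=\frac{\alpha'-1}{\alpha'-\alpha}$ and $q=\frac{\alpha'-1}{\alpha-1}$, followed by the same optimization $\alpha'=(1+\sqrt{\rho_1/\rho_2})\cdot\alpha$. The only divergence is that where the paper cites Bun--Steinke for quasi-convexity and the symmetric-interval Proposition \ref{prop:pdp2cdp} for item 5, you supply correct self-contained arguments (joint convexity of $(p,q)\mapsto p^\alpha q^{1-\alpha}$, and Hoeffding's lemma on the asymmetric interval of length $\dr{\infty}{P}{Q}+\dr{\infty}{Q}{P}$), which is, if anything, slightly more complete.
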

    \begin{proof}
        \begin{enumerate}
        
            \item \textit{Postprocessing (a.k.a.~data processing inequality) \& non-negativity:} See Lemma \ref{lem:postprocessing}. Non-negativity follows from setting $f$ to be a constant function and noting that the divergence between two point masses is zero.
        
            \item \textit{Composition:} See Proposition \ref{prop:adaptivecomp}.
            
            \item \textit{Monotonicity:} Let $1 < \alpha \le \alpha' < \infty$. (The cases where $\alpha=1$ and $\alpha'=\infty$ follow from continuity.) Let $f(x) = x^{\frac{\alpha'-1}{\alpha-1}}$. Then $f$ is convex and, by Jensen's inequality, \[e^{(\alpha'-1)\dr{\alpha}{P}{Q}} = f\left(\ex{Y \gets P}{\left(\frac{P(Y)}{Q(Y)}\right)^{\alpha-1}}\right) \le \ex{Y \gets P}{f\left(\left(\frac{P(Y)}{Q(Y)}\right)^{\alpha-1}\right)} = e^{(\alpha'-1)\dr{\alpha'}{P}{Q}},\] which implies $\dr{\alpha}{P}{Q} \le \dr{\alpha'}{P}{Q}$.

            \item \textit{Gaussian divergence:} See Lemma \ref{lem:gauss_cdp}.
            
            \item \textit{Pure DP to Concentrated DP:} See Proposition \ref{prop:pdp2cdp}.
            
            \item \textit{Quasi-convexity:} See Lemma B.6 of \citet{bun2016concentrated}.
            
            \item \textit{Triangle-like inequality (a.k.a.~group privacy):} 
            Let $\alpha \in (1,\infty)$.
            Let $p,q\in(1,\infty)$ satisfy $\frac1p+\frac1q=1$. By H\"older's inequality,
            \begin{align*}
                e^{(\alpha-1)\dr{\alpha}{P}{R}} &= \ex{Y \gets P}{\left(\frac{P(Y)}{R(Y)}\right)^{\alpha-1}} = \ex{Y \gets R}{\left(\frac{P(Y)}{R(Y)}\right)^\alpha}\\
                 &= \ex{Y \gets Q}{\frac{P(Y)}{Q(Y)} \cdot \left(\frac{P(Y)}{Q(Y)} \cdot \frac{Q(Y)}{R(Y)}\right)^{\alpha-1}}\\
                 &= \ex{Y \gets Q}{\left(\frac{P(Y)}{Q(Y)}\right)^\alpha \cdot \left(\frac{Q(Y)}{R(Y)}\right)^{\alpha-1}}\\
                 &\le \ex{Y \gets Q}{\left(\frac{P(Y)}{Q(Y)}\right)^{\alpha p}}^{1/p} \cdot \ex{Y \gets Q}{\left(\frac{Q(Y)}{R(Y)}\right)^{(\alpha-1)q}}^{1/q}\\
                 &= e^{\frac{\alpha p - 1}{p}\dr{\alpha p}{P}{Q}} \cdot e^{\frac{(\alpha-1)q}{q}\dr{(\alpha-1)q+1}{Q}{R}}.
            \end{align*}
            This rearranges to 
            \begin{align*}
                \dr{\alpha}{P}{R} &\le \frac{\alpha p - 1}{(\alpha-1)p} \dr{\alpha p}{P}{Q} + \dr{(\alpha-1)q+1}{Q}{R}\\
                &= \left( 1 + \frac{1}{(\alpha-1)q} \right) \cdot \dr{\alpha p}{P}{Q} + \dr{(\alpha-1)q+1}{Q}{R}\\
                &= \frac{\alpha'}{\alpha'-1}\cdot \dr{\alpha \cdot \frac{\alpha'-1}{\alpha'-\alpha}}{P}{Q} + \dr{\alpha'}{Q}{R},
            \end{align*}
            where the final equality sets $p = \frac{\alpha'-1}{\alpha'-\alpha}$ and $q=\frac{\alpha'-1}{\alpha-1}$
            
            Now assume $\dr{\alpha}{P}{Q} \le \rho_1 \cdot \alpha$ and $\dr{\alpha}{Q}{R} \le \rho_2 \cdot \alpha$ for all $\alpha \in (1,\infty)$.
            Then
            \begin{align*}
                \dr{\alpha}{P}{R} &\le \inf_{\alpha'>\alpha} \frac{\alpha'}{\alpha'-1}\cdot \dr{\alpha \cdot \frac{\alpha'-1}{\alpha'-\alpha}}{P}{Q} + \dr{\alpha'}{Q}{R}\\
                &\le \inf_{\alpha'>\alpha} \frac{\alpha'}{\alpha'-1}\cdot \alpha \cdot \frac{\alpha'-1}{\alpha'-\alpha} \cdot \rho_1 + \alpha' \cdot \rho_2\\
                &= \inf_{x>0} \alpha \cdot \frac{x+1}{x} \cdot \rho_1 + \alpha \cdot (x+1) \cdot \rho_2 \tag{Reparameterize $\alpha'=(x+1)\cdot\alpha$}\\
                &= \alpha \cdot \inf_{x>0} \rho_1 +\rho_2 + \frac1x \rho_1 + x \cdot \rho_2 \\
                &= \alpha \cdot (\rho_1 + \rho_2 + 2\sqrt{\rho_1 \cdot \rho_2}) \tag{$x = \sqrt{\rho_1/\rho_2}$}\\
                &= \alpha \cdot (\sqrt{\rho_1} + \sqrt{\rho_2})^2.
            \end{align*}
            
            \item \textit{Conversion to approximate DP:} See Proposition \ref{prop:cdp2adp}.
        \end{enumerate}
    \end{proof}
    
\subsection{Sharp Privacy Amplification by Poisson Subsampling for R\'enyi DP}\label{sec:sharp_rdp_subsampling}
    
    Now we analyze privacy amplification by subsampling under R\'enyi DP. We start with a R\'enyi DP guarantee and we obtain an amplified R\'enyi DP guarantee. The goal is to obtain a sharp analysis that avoids converting to approximate DP and back.
    
    For mathematical simplicity, we restrict our attention to Poisson subsampling and assume that neighbouring datasets correspond to addition or removal of one person's data.
    
    \begin{theorem}[Tight Privacy Amplification by Subsampling for R\'enyi DP]\label{thm:rdp_subsampling}
        Let $U \subset [n]$ be a random set that contains each element independently with probability $p$.
        For $x \in \mathcal{X}^n$ let $x_U \in \mathcal{X}^n$ be given by $(x_U)_i = x_i$ if $i \in U$ and $(x_U)_i = \bot$ if $i \notin U$, where $\bot\in\mathcal{X}$ is some fixed value.
        
        Let $\varepsilon:\mathbb{N}_{\ge 2}\to\mathbb{R} \cup \{\infty\}$ be a function.
        Let $M : \mathcal{X}^n \to \mathcal{Y}$ satisfy $(\alpha,\varepsilon(\alpha))$-RDP for all $\alpha \in \mathbb{N}_{\ge 2}$ with respect to addition or removal -- i.e., $x,x'\in\mathcal{X}^n$ are neighbouring if, for some $i \in [n]$, we have $x_i=\bot$ or $x'_i=\bot$, and $\forall j \ne i ~~ x_j=x'_j$.
        
        Define $M^U : \mathcal{X}^n \to \mathcal{Y}$ by $M^U(x) = M(x_U)$. Then $M^U$ satisfies $(\alpha,\varepsilon'_p(\alpha))$-RDP for all $\alpha \in \mathbb{N}_{\ge 2}$ where \[\varepsilon'_p(\alpha) = \frac{1}{\alpha-1} \log \left( (1-p)^{\alpha-1}(1 + (\alpha-1) p) + \sum_{k=2}^\alpha {\alpha \choose k} (1-p)^{\alpha-k} p^k \cdot e^{(k-1)\varepsilon(k)} \right).\]
    \end{theorem}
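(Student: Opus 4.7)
The plan is to fix an arbitrary neighbouring pair $x, x' \in \mathcal{X}^n$ and reduce the problem to a R\'enyi divergence between a mixture and one of its components. Without loss of generality, I assume $x_i \in \mathcal{X}$ is a genuine data item while $x'_i = \bot$, with $x_j = x'_j$ for all $j \ne i$. Exploiting the independence of Poisson subsampling, I split according to whether $i \in U$: writing $V$ for the Poisson-$p$ subsample of $[n] \setminus \{i\}$ and setting $Q := \ex{V}{M(x_V)}$ and $R := \ex{V}{M(x_{V \cup \{i\}})}$, observe that $M^U(x') = Q$ (because $x'_i = \bot$ is inert regardless of whether $i$ is sampled), whereas $M^U(x) = (1-p)\,Q + p\,R$.

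The main computation is $\dr{\alpha}{M^U(x)}{M^U(x')} = \dr{\alpha}{(1-p)Q + pR}{Q}$. Because $\alpha$ is a positive integer, the ratio $P(y)/Q(y) = (1-p) + p\,R(y)/Q(y)$ admits a binomial expansion, giving
\[
\exp\!\left((\alpha-1)\,\dr{\alpha}{P}{Q}\right) = \ex{Y \gets Q}{\left((1-p) + p\,R(Y)/Q(Y)\right)^{\alpha}} = \sum_{k=0}^{\alpha} \binom{\alpha}{k}(1-p)^{\alpha-k} p^k \cdot \ex{Y \gets Q}{(R(Y)/Q(Y))^k}.
\]
The $k=0$ and $k=1$ terms contribute $(1-p)^{\alpha}$ and $\alpha p (1-p)^{\alpha-1}$ respectively (using $\ex{Y \gets Q}{R(Y)/Q(Y)} = 1$), which combine to $(1-p)^{\alpha-1}(1+(\alpha-1)p)$. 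For $k \ge 2$, the definition of R\'enyi divergence gives $\ex{Y \gets Q}{(R(Y)/Q(Y))^k} = \exp((k-1)\,\dr{k}{R}{Q})$, so all that is left is to bound $\dr{k}{R}{Q}$.

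The essential ingredient is therefore the bound $\dr{k}{R}{Q} \le \varepsilon(k)$, which I obtain via the composition and post-processing clauses of Lemma \ref{lem:rdp_properties}. Form the joint distributions $\tilde R := (V, M(x_{V \cup \{i\}}))$ and $\tilde Q := (V, M(x_V))$: they have the same marginal distribution on $V$, and for each fixed $V$ the datasets $x_{V \cup \{i\}}$ and $x_V$ are neighbours under addition/removal, so the RDP hypothesis yields $\dr{k}{M(x_{V \cup \{i\}})}{M(x_V)} \le \varepsilon(k)$. Conditional composition then gives $\dr{k}{\tilde R}{\tilde Q} \le \varepsilon(k)$, and forgetting $V$ by data processing gives $\dr{k}{R}{Q} \le \varepsilon(k)$. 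Substituting these into the binomial sum delivers precisely the claimed $\varepsilon'_p(\alpha)$.

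The main obstacle is twofold. First, the binomial expansion is exactly what forces the restriction $\alpha \in \mathbb{N}_{\ge 2}$: for non-integer orders the identity breaks down and one would need an interpolation or a separate convexity-based argument. Second, because the neighbouring relation is symmetric, one also needs the reverse-direction bound $\dr{\alpha}{Q}{(1-p)Q+pR} \le \varepsilon'_p(\alpha)$, and $(Q/P)^{\alpha}$ has no clean polynomial expansion. I would handle this by the same post-processing step (which also yields $\dr{k}{Q}{R} \le \varepsilon(k)$) combined with the quasi-convexity clause of Lemma \ref{lem:rdp_properties}, exploiting the fact that the mixture $(1-p)Q + pR$ in the second slot is always closer to $Q$ than $R$ itself is.
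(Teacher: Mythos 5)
Your forward direction matches the paper's argument exactly: the decomposition $M^U(x') = Q$, $M^U(x) = pR + (1-p)Q$ using the independence of Poisson sampling, the binomial expansion over integer $\alpha$, the $k=0,1$ terms collapsing via $\ex{Y \gets Q}{R(Y)/Q(Y)}=1$, and the bound $\dr{k}{R}{Q} \le \varepsilon(k)$ for the $k \ge 2$ terms. Your justification of that last bound via conditional composition on the joint distributions $(V, M(x_{V\cup\{i\}}))$, $(V, M(x_V))$ followed by postprocessing is fine (the paper asserts it more tersely), and your observation that the integer restriction comes from the binomial expansion is exactly right.

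The genuine gap is in the reverse direction $\dr{\alpha}{Q}{pR+(1-p)Q} \le \varepsilon'_p(\alpha)$. The quasi-convexity argument you sketch only yields
\[
\dr{\alpha}{Q}{pR+(1-p)Q} \;=\; \dr{\alpha}{pQ+(1-p)Q}{pR+(1-p)Q} \;\le\; \max\bigl\{\dr{\alpha}{Q}{R},\,\dr{\alpha}{Q}{Q}\bigr\} \;\le\; \varepsilon(\alpha),
\]
i.e.\ the \emph{unamplified} guarantee. Since $\varepsilon'_p(\alpha)$ is typically much smaller than $\varepsilon(\alpha)$ (that is the entire point of amplification), this does not prove the claimed $(\alpha,\varepsilon'_p(\alpha))$-RDP bound, and there is no binomial expansion available for $\bigl(Q/(pR+(1-p)Q)\bigr)^{\alpha}$ to rescue it directly. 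This reverse direction is precisely where the paper invokes its Theorem~\ref{thm:rdp_ss_flip}, which shows
\[
\dr{\alpha}{Q}{pR+(1-p)Q} \;\le\; \max\Bigl\{\dr{\alpha}{pR+(1-p)Q}{Q},\; \dr{\alpha}{pQ+(1-p)R}{R}\Bigr\},
\]
so that both quantities on the right admit the same binomial-expansion bound (the second one using $\dr{k}{Q}{R}\le\varepsilon(k)$, which you already have). That flipping inequality is not a soft convexity fact: its proof in the paper requires a pointwise convexity argument for a carefully constructed function with mixture weight $\lambda = \frac{(2\alpha-1)p}{(2\alpha-1)p+3(1-p)}$, supported by Lemmas~\ref{lem:frac} and~\ref{lem:wtf}, and the paper notes it is a novel ingredient. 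Without it (or an equivalent), your proof establishes only one of the two directions required by the symmetric add/remove neighbouring relation at the amplified level.
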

    Note that $(1-p)^{\alpha-1}(1 + (\alpha-1) p) \le  1$. It is easy to see from the proof that this analysis is tight. That is, if the assumption that $M$ satisfies $(\alpha,\varepsilon(\alpha))$-RDP for all $\alpha$ is tight for some fixed pair of neighbouring inputs, then the conclusion that $M^U$ satisfies $(\alpha,\varepsilon'_p(\alpha))$-RDP is also tight.

    Theorem \ref{thm:rdp_subsampling} only considers R\'enyi DP with orders $\alpha \in \mathbb{N}_{\ge 2} = \{2,3,4,\cdots\}$. This restriction arises because the proof uses a binomial expansion, which only works for integer exponents. In certain cases, it is possible to obtain an expression all $\alpha \in (1,\infty)$ using an infinite binomial series \citep{mironov2019r}.
    In general, we can use Monotonicity (part 3 of Lemma \ref{lem:rdp_properties}) to bound non-integer $\alpha$, namely for all $\alpha \in (1,\infty)$, $M^U$ satisfies $(\alpha,\varepsilon'_p(\lceil \alpha \rceil))$-RDP.
    
    \begin{proof}[Proof of Theorem \ref{thm:rdp_subsampling}.]
        Fix neighbouring datasets $x,x'\in\mathcal{X}^n$.
        Without loss of generality, assume that $x'$ is $x$ with one element removed -- i.e., $\exists i \in [n] ~~ (x'_i=\bot) \wedge (\forall j \in [n] \setminus\{i\} ~~ x_j=x'_j)$. Fix this $i$.

        Let $Q = M(x'_U) = M^U(x')$.
        Let $P = M(x_U)|_{i \in U}$ be the conditional distribution of $M(x_U)$ with $i \in U$.
        Note that $M(x_U)|_{i \notin U} = Q$ because $x_U=x'_U$ when $i \notin U$ and the event $i \in U$ is independent from $U \setminus \{i\}$. (This is where we use the Poisson subsampling assumption.)
        
        Thus we can express the distribution of $M^U(x)$ as a convex combination: $M(x_U) = p \cdot P + (1-p) \cdot Q$, since $p=\pr{}{i \in U}$. 
        
        For all $\alpha \in \mathbb{N}_{\ge 2}$, $M$ is assumed to be $(\alpha,\varepsilon(\alpha))$-RDP, so we have $\dr{\alpha}{P}{Q} \le \varepsilon(\alpha)$ and $\dr{\alpha}{Q}{P} \le \varepsilon(\alpha)$.
        
        To complete the proof we must show that \[\dr{\alpha}{p \cdot P + (1-p) \cdot Q}{Q} \le \varepsilon'_p(\alpha)\] and \[\dr{\alpha}{Q}{p \cdot P + (1-p) \cdot Q} \le \varepsilon'_p(\alpha)\] for all $\alpha \in \mathbb{N}_{\ge 2}$.
        
        Fix $\alpha \in \mathbb{N}_{\ge 2}$. We have
        \begin{align*}
            e^{(\alpha-1)\dr{\alpha}{p\cdot P + (1-p)\cdot Q}{Q}}
            &= \ex{Y \gets Q}{\left(\frac{p \cdot P(Y) + (1-p) \cdot Q(Y)}{Q(Y)}\right)^\alpha}\\
            &= \ex{Y \gets Q}{\left(1-p + p \cdot \frac{P(Y)}{Q(Y)}\right)^\alpha}\\
            &= \ex{Y \gets Q}{\sum_{k=0}^\alpha {\alpha \choose k} (1-p)^{\alpha-k} p^k\left(\frac{P(Y)}{Q(Y)}\right)^k}\tag{Binomial expansion}\\
            &= \sum_{k=0}^\alpha {\alpha \choose k} (1-p)^{\alpha-k} p^k\ex{Y \gets Q}{\left(\frac{P(Y)}{Q(Y)}\right)^k}\\
            &= (1-p)^\alpha + \alpha (1-p)^{\alpha-1} p + \sum_{k=2}^\alpha {\alpha \choose k} (1-p)^{\alpha-k} p^k \cdot e^{(k-1)\dr{k}{P}{Q}} \tag{$\ex{Y \gets Q}{\frac{P(Y)}{Q(Y)}}=1$}\\
            &\le (1-p)^{\alpha-1}(1 + (\alpha-1) p) + \sum_{k=2}^\alpha {\alpha \choose k} (1-p)^{\alpha-k} p^k \cdot e^{(k-1)\varepsilon(k)} \tag{$\dr{k}{P}{Q}\le \varepsilon(k)$}\\
            &= e^{(\alpha-1)\varepsilon'_p(\alpha)}.
        \end{align*}
        Note that $(1-p)^{\alpha-1}(1 + (\alpha-1) p) \le (e^{-p})^{\alpha-1} e^{(\alpha-1)p} = 1$.
        
        An identical calculation shows that \[\dr{\alpha}{p \cdot Q + (1-p) \cdot P}{P} \le \varepsilon'_p(\alpha).\]
        
        Finally, Theorem \ref{thm:rdp_ss_flip} gives \[\dr{\alpha}{Q}{pP+(1-p)Q} \le \max\left\{\begin{array}{c} \dr{\alpha}{pP+(1-p)Q}{Q}, \\ \dr{\alpha}{pQ+(1-p)P}{P} \end{array} \right\} \le \varepsilon'_p(\alpha).\]
    \end{proof}

    The following result shows that, in terms of subsampling for R\'enyi DP, it suffices to analyze one side of the add/remove neighbouring relation.
    \begin{theorem}\label{thm:rdp_ss_flip}
        Let $P$ and $Q$ be probability distributions that are absolutely continuous with respect to each other. Let $p \in [0,1]$ and $\alpha \in (1,\infty)$. Set $\lambda = \frac{(2\alpha-1)p}{(2\alpha-1)p+3(1-p)}$. Then \[e^{(\alpha-1)\dr{\alpha}{Q}{pP+(1-p)Q}} \le (1-\lambda) \cdot e^{(\alpha-1)\dr{\alpha}{pP+(1-p)Q}{Q}} + \lambda \cdot e^{(\alpha-1)\dr{\alpha}{pQ+(1-p)P}{P}}.\]        
    \end{theorem}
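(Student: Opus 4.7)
The plan is to reduce the three-term R\'enyi inequality to an expectation inequality in the single likelihood ratio $R = P(Y)/Q(Y)$ under $Y \sim Q$, and then to a pointwise tangent-line bound. Let $M = pP+(1-p)Q$, $N = pQ+(1-p)P$, and set $A = 1-p+pR$, $B = p+(1-p)R$; note $R \ge 0$ and $\ex{Y \gets Q}{R}=1$. Each R\'enyi quantity becomes an expectation over $R$ (using Lemma \ref{lem:dual_privloss} for the first, and a change of base measure from $P$ to $Q$ for the third):
\[e^{(\alpha-1)\dr{\alpha}{Q}{M}}=\ex{Y \gets Q}{A^{1-\alpha}},\qquad e^{(\alpha-1)\dr{\alpha}{M}{Q}}=\ex{Y \gets Q}{A^{\alpha}},\qquad e^{(\alpha-1)\dr{\alpha}{N}{P}}=\ex{Y \gets Q}{B^{\alpha}R^{1-\alpha}}.\]
Thus the target collapses to $\ex{Y \gets Q}{F(R)}\le 0$ for the scalar function $F(r) := (1-p+pr)^{1-\alpha}-(1-\lambda)(1-p+pr)^{\alpha}-\lambda(p+(1-p)r)^{\alpha}r^{1-\alpha}$.

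The next step exploits $\ex{}{R}=1$: any pointwise bound $F(r)\le \mu(r-1)$ on $r>0$ immediately yields $\ex{}{F(R)}\le \mu\,(\ex{}{R}-1)=0$. Since $F(1)=0$, the natural choice is $\mu=F'(1)$; a direct computation gives $F'(1)=\lambda(2\alpha p-1)-p(2\alpha-1)$, which for the theorem's specific $\lambda$ simplifies to $\mu = -\frac{4(2\alpha-1)p(1-p)}{(2\alpha-1)p+3(1-p)}$. One also checks that $F''(1)=0$, so $F(r)-\mu(r-1)$ vanishes at $r=1$ together with its first two derivatives, making $r=1$ a critical point of the residual of the expected type.

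The main obstacle is verifying the tangent inequality $F(r)\le\mu(r-1)$ pointwise for every $r>0$. I would multiply through by the positive quantity $r^{\alpha-1}(1-p+pr)^{\alpha-1}$ and by the denominator $D=(2\alpha-1)p+3(1-p)$, arriving at the equivalent statement
\[3(1-p)\,A^{2\alpha-1}r^{\alpha-1} + (2\alpha-1)p\,A^{\alpha-1}B^{\alpha} + 4(2\alpha-1)p(1-p)(1-r)r^{\alpha-1}A^{\alpha-1} \ge D\,r^{\alpha-1}.\]
For integer $\alpha$ this is a polynomial inequality in $r$ whose expansion in $s=r-1$ has vanishing coefficients through order $s^3$ by design of $\lambda$ and $\mu$; the hope is that the remaining tail factors as a non-negative polynomial multiple of a high even power of $s$. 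As a sanity check, for $\alpha=2$ (where $D=3$) the whole statement collapses to the clean identity $(1+ps)\bigl(1+(1-p)s-p(1-p)s^2+p^2(1-p)s^3\bigr)-(1+s)=p^3(1-p)\,s^4$, which proves the inequality with equality exactly at $r=1$. I expect an analogous non-negative factorization for larger integer $\alpha$ by careful expansion, and non-integer $\alpha$ can then be handled by continuity in $\alpha$ of both sides of the target inequality.
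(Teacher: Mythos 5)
Your reduction coincides with the paper's: with $X=P(Y)/Q(Y)$ for $Y\gets Q$, the paper writes the difference of the two sides of the claimed inequality as $\ex{}{f(X)}$ where $f=-F$, uses $f(1)=0$ and $\ex{}{X}=1$, and concludes via the tangent line at $1$. Your identities for the three R\'enyi terms, your value $\mu=F'(1)=\lambda(2\alpha p-1)-p(2\alpha-1)=-\frac{4(2\alpha-1)p(1-p)}{(2\alpha-1)p+3(1-p)}$, your observation that the residual $F(r)-\mu(r-1)$ vanishes to third order at $r=1$ (this is exactly what the specific choice of $\lambda$ buys), and your explicit $\alpha=2$ factorization are all correct.

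The gap is that the heart of the theorem --- the pointwise bound $F(r)\le\mu(r-1)$ for \emph{all} $r>0$ and \emph{all real} $\alpha\in(1,\infty)$ --- is not proved. For integer $\alpha>2$ you only express the ``hope'' of a nonnegative factorization of the residual (and even there, nonnegativity is only needed on $r>0$, so the clean even-power-of-$s$ structure you found at $\alpha=2$ need not persist); and the proposed passage to non-integer $\alpha$ ``by continuity'' is not a valid bridge, since the integers are not dense in $(1,\infty)$, so knowing the inequality at integer orders plus continuity in $\alpha$ gives nothing in between --- and the theorem is in fact invoked at non-integer orders (e.g.\ in the proof of Theorem \ref{thm:rdp_subsampling_asymptotic}). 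The paper closes precisely this step by proving the stronger fact that $f=-F$ is convex on $(0,\infty)$ for every real $\alpha>1$: it computes $f''$ explicitly and lower-bounds it by $0$ using Lemma \ref{lem:frac} and the dedicated Lemma \ref{lem:wtf}; convexity then gives $f(x)\ge f'(1)(x-1)$ with no case analysis in $\alpha$. To complete your argument you would need either to carry out such a proof of the pointwise inequality for all real $\alpha$ (for instance by reproducing that convexity computation), or to supply a genuine interpolation argument in $\alpha$, neither of which your proposal contains.
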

    Since $\lambda \in [0,1]$, this implies \[\dr{\alpha}{Q}{pP+(1-p)Q} \le \max\left\{\begin{array}{c} \dr{\alpha}{pP+(1-p)Q}{Q}, \\ \dr{\alpha}{pQ+(1-p)P}{P} \end{array} \right\}.\]
    \begin{proof}
        Define $f :(0,\infty) \to \mathbb{R}$ by \[f(x) = (1-\lambda) (1-p+p\cdot x)^\alpha + \lambda \cdot x \cdot \left(1-p+\frac{p}{x}\right)^\alpha - \left(1-p+p\cdot x\right)^{1-\alpha}.\]
        We have
        \begin{align*}
            & (1-\lambda) \cdot e^{(\alpha-1)\dr{\alpha}{pP+(1-p)Q}{Q}} + \lambda \cdot e^{(\alpha-1)\dr{\alpha}{pQ+(1-p)P}{P}} - e^{(\alpha-1)\dr{\alpha}{Q}{pP+(1-p)Q}} \\
            &= (1-\lambda) \cdot \ex{Y \gets Q}{\left(\frac{pP(Y)+(1-p)Q(Y)}{Q(Y)}\right)^\alpha} + \lambda \cdot \ex{Y \gets P}{\left(\frac{pQ(Y)+(1-p)P(Y)}{P(Y)}\right)^\alpha} \\&~~~~~~~~~~- \ex{Y \gets Q}{\left(\frac{Q(Y)}{pP(Y)+(1-p)Q(Y)}\right)^{\alpha-1}}\\
            &= (1-\lambda) \cdot \ex{Y \gets Q}{\left(p\frac{P(Y)}{Q(Y)}+(1-p)\right)^\alpha} + \lambda \cdot \ex{Y \gets P}{\left(p \left(\frac{P(Y)}{Q(Y)}\right)^{-1} + 1-p\right)^\alpha} \\&~~~~~~~~~~- \ex{Y \gets Q}{\left(p\frac{P(Y)}{Q(Y)}+(1-p)\right)^{1-\alpha}}\\
            &= (1-\lambda) \cdot \ex{Y \gets Q}{\left(p\frac{P(Y)}{Q(Y)}+(1-p)\right)^\alpha} + \lambda \cdot \ex{Y \gets Q}{\frac{P(Y)}{Q(Y)} \cdot \left(p \left(\frac{P(Y)}{Q(Y)}\right)^{-1} + 1-p\right)^\alpha} \\&~~~~~~~~~~- \ex{Y \gets Q}{\left(p\frac{P(Y)}{Q(Y)}+(1-p)\right)^{1-\alpha}} \tag{For any $g$, $\ex{Y \gets Q}{\frac{P(Y)}{Q(Y)} g(Y)} = \ex{Y \gets P}{g(Y)}$}\\
            &= \ex{Y \gets Q}{(1-\lambda) \cdot \left( 1 - p + p \cdot \frac{P(Y)}{Q(Y)}\right)^\alpha + \lambda \cdot \frac{P(Y)}{Q(Y)} \cdot \left( 1 - p + \frac{p}{ ~\frac{P(Y)}{Q(Y)}~ } \right)^\alpha - \left(1-p+p \cdot \frac{P(Y)}{Q(Y)}\right)^{1-\alpha}}\\
            &= \ex{X}{f(X)},
        \end{align*}
        where $X=\frac{P(Y)}{Q(Y)}$ for $Y \gets Q$. Thus our objective is to show that $\ex{X}{f(X)} \ge 0$.\footnote{Note that we assume $P$ and $Q$ are absolutely continuous with respect to each other -- i.e., $\forall S ~ P(S)=0 \iff Q(S)=0$. This ensures that the Radon-Nikodym derivative $\frac{P(y)}{Q(y)}$ is well-defined and, further that $\pr{Y \gets Q}{\frac{P(Y)}{Q(Y)}\le 0}=0$. Thus the function $f$ need only be defined on $(0,\infty)$.}
        
        We claim that $f$ is convex. Convexity implies $f(x) \ge f(1) + f'(1) \cdot (x-1)$ for all $x \in (0,\infty)$. Since $f(1) = 0$ and $\ex{}{X} = \ex{Y \gets Q}{\frac{P(Y)}{Q(Y)}} = 1$, this implies $\ex{}{f(X)} \ge f(1) + f'(1) \ex{}{X-1} = 0$, as required.
        
        It only remains to prove that $f$ is convex.
        We have, for all $x>0$,
        \begin{align*}
            f(x) &= (1-\lambda) (1-p+p\cdot x)^\alpha + \lambda \cdot x \cdot \left(1-p+\frac{p}{x}\right)^\alpha - \left(1-p+p\cdot x\right)^{1-\alpha},\\
            f'(x) &= (1-\lambda) \alpha p (1-p+p\cdot x)^{\alpha-1} + \lambda \cdot \left(1-p+\frac{p}{x}\right)^\alpha \\&~~~~~ - \lambda  \alpha \frac{p}{x}  \left(1-p+\frac{p}{x}\right)^{\alpha-1} + (\alpha-1) p \left(1-p+p\cdot x\right)^{-\alpha}\\
            f''(x) &= (1-\lambda) \alpha(\alpha-1) p^2 (1-p+p\cdot x)^{\alpha-2} - \lambda \alpha \frac{p}{x^2} \left(1-p+\frac{p}{x}\right)^{\alpha-1} + \lambda \alpha \frac{p}{x^2}  \left(1-p+\frac{p}{x}\right)^{\alpha-1} \\&~~~~~  + \lambda  \alpha(\alpha-1) \frac{p^2}{x^3}  \left(1-p+\frac{p}{x}\right)^{\alpha-2} - \alpha(\alpha-1) p^2 \left(1-p+p\cdot x\right)^{-\alpha-1}\\
            &= \alpha(\alpha-1)p^2 \left( (1-\lambda) (1-p+px)^{\alpha-2} + \lambda \frac{1}{x^3}\left(1-p+\frac{p}{x}\right)^{\alpha-2} - (1-p+px)^{-\alpha-1} \right)\\
            &= \frac{\alpha(\alpha-1)p^2}{(1-p+px)^{\alpha+1}} \left( (1-\lambda) (1-p+px)^{2\alpha-1} + \lambda \frac{1}{x^3}\left(1-p+\frac{p}{x}\right)^{\alpha-2} \cdot (1-p+px)^{\alpha+1} - 1 \right)\\
            &= \frac{\alpha(\alpha\!-\!1)p^2}{(1\!-\!p\!+\!px)^{\alpha+1}}\!\left(\!(1\!-\!\lambda) (1\!-\!p\!+\!px)^{2\alpha-1}\!+\!\lambda\!\left(\!\frac{1\!-\!p\!+\!px}{x}\!\right)^3\!\cdot\!\left(\!1\!-\!p\!+\!\frac{p}{x}\right)^{\alpha-2}\!\cdot\!(1\!-\!p\!+\!px)^{\alpha-2}\!-\!1\!\right)\\
            &\ge \frac{\alpha(\alpha-1)p^2}{(1-p+px)^{\alpha+1}} \left( (1-\lambda) (1-p+px)^{2\alpha-1} + \lambda \left(\frac{1-p+px}{x}\right)^3 \cdot 1 - 1 \right) \tag{Lemma \ref{lem:frac}}\\
            &= \frac{\alpha(\alpha-1)p^2}{(1-p+px)^{\alpha+1}} \left( \frac{3(1-p) (1-p+px)^{2\alpha-1} + (2\alpha-1)p \left(\frac{1-p}{x} + p\right)^3 - 3(1-p)-(2\alpha-1)p}{3(1-p)+(2\alpha-1)p} \right) \tag{$\lambda = \frac{(2\alpha-1)p}{(2\alpha-1)p+3(1-p)}$}\\
            &\ge 0. \tag{Lemma \ref{lem:wtf}}
        \end{align*}
    \end{proof}
    
    We now give the auxiliary lemmata used in the proof of Theorem \ref{thm:rdp_ss_flip}.
    
    \begin{lemma}\label{lem:frac}
        For all $p \in [0,1]$ and $x \in (0,\infty)$, \[\frac{1}{1-p+p/x} \le 1-p+p \cdot x .\]
    \end{lemma}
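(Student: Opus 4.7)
The plan is to clear the denominator and reduce to the AM-GM inequality. Since $1-p+p/x > 0$ for all $p \in [0,1]$ and $x > 0$ (it is a convex combination of $1$ and $1/x$, both positive), the claimed inequality is equivalent to
\[ 1 \;\le\; (1-p+px)\,(1-p+p/x).\]
So the first step would be to cross-multiply and work with this product form.

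Next I would expand the product on the right-hand side. Writing $a = 1-p$ and $b = p$ (so $a,b \ge 0$ and $a+b=1$) for brevity, the expansion becomes
\[ (a + bx)(a + b/x) \;=\; a^2 + b^2 + ab\left(x + \tfrac{1}{x}\right).\]
The only nontrivial piece is the factor $x + 1/x$, which by AM-GM satisfies $x + 1/x \ge 2$ for all $x > 0$. Substituting this lower bound gives
\[ (a + bx)(a + b/x) \;\ge\; a^2 + 2ab + b^2 \;=\; (a+b)^2 \;=\; 1,\]
which is exactly what is needed.

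The main obstacle is essentially nonexistent here: the only real idea is recognizing that after clearing denominators the cross term collapses to $p(1-p)(x+1/x)$ and invoking $x + 1/x \ge 2$. The boundary cases $p \in \{0,1\}$ or $x=1$ produce equality and can be checked directly without needing the AM-GM bound, but they are already covered by the above argument. No further lemmata are required.
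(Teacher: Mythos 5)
Your proof is correct and follows essentially the same route as the paper: both clear the denominator, expand $(1-p+px)(1-p+p/x) = (1-p)^2 + p^2 + p(1-p)(x+1/x)$, and conclude via $x + 1/x \ge 2$ (which the paper establishes by calculus and you by AM-GM, an immaterial difference).
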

    \begin{proof}
        Let $f(t)=t+1/t$. Then $f'(t)=1-1/t^2$ and $f''(t) = 2/t^3>0$ for all $t>0$. Thus $f'(t)=0 \iff t=1$ and $f(x) \ge f(1) = 2$.
        Now
        \begin{align*}
            (1-p+p \cdot x)\cdot(1-p+p/x) &= p^2 + (1-p)^2 + p(1-p)(x+1/x)\\
            &\ge p^2 + (1-p)^2 + p(1-p) \cdot 2\\
            &= (p + (1-p))^2 = 1.
        \end{align*}
        Rearranging yields the result.
    \end{proof}
    
    \begin{lemma}\label{lem:wtf}
        For all $v \ge 1$, $p \in [0,1]$, and $x\in (0,\infty)$, \[3(1-p)(1-p+px)^v + vp\left(\frac{1-p}{x}+p\right)^3 \ge 3(1-p) + vp.\]
    \end{lemma}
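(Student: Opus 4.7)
The plan is to observe that both sides agree at $x=1$ (both equal $3(1-p)+vp$) and then to lower-bound each term on the left by its tangent line at $x=1$, i.e., via a Bernoulli-type inequality coming from convexity. The residual cross terms will combine into a perfect square divided by $x$, which is manifestly nonnegative.

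Concretely, the first step is to note that for $v \ge 1$ and $p \in [0,1]$ the map $t\mapsto(1-p+pt)^v$ is convex on $[0,\infty)$, so the tangent at $t=1$ yields
\[(1-p+pt)^v \ge 1+vp(t-1).\]
Specializing to $t=x$ lower-bounds the first term of the LHS. For the second term, I would change variables to $u = 1/x$ and write $\tfrac{1-p}{x}+p = p+(1-p)u$; convexity of $u\mapsto(p+(1-p)u)^3$ and its tangent at $u=1$ give
\[\left(\tfrac{1-p}{x}+p\right)^3 \ge 1+3(1-p)\bigl(\tfrac{1}{x}-1\bigr).\]

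Substituting both bounds into the LHS, the constant parts produce exactly $3(1-p)+vp$, and the linear remainders collapse as
\[3(1-p)vp\bigl[(x-1)+(\tfrac{1}{x}-1)\bigr]=3(1-p)vp\cdot\frac{(x-1)^2}{x}\ge 0,\]
completing the proof. I don't expect any real obstacle: the only nontrivial step is recognizing that $(x-1)+(1/x-1) = (x-1)^2/x$, which is the precise reason why the tangent-line bounds, although individually loose, combine into a tight inequality saturated at $x=1$. A minor sanity check is to confirm the boundary cases $p\in\{0,1\}$ and $v=1$, where both sides reduce to easy equalities or the tangent-line inequalities become sharp.
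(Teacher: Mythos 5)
Your proof is correct. The tangent-line bounds are valid: for $v\ge 1$ the map $t\mapsto(1-p+pt)^v$ is convex on $(0,\infty)$, giving $(1-p+px)^v\ge 1+vp(x-1)$, and convexity of $u\mapsto(p+(1-p)u)^3$ gives $\left(\frac{1-p}{x}+p\right)^3\ge 1+3(1-p)\left(\frac1x-1\right)$; the residuals indeed combine as $3vp(1-p)\left[(x-1)+\left(\frac1x-1\right)\right]=3vp(1-p)\frac{(x-1)^2}{x}\ge 0$, and the choice of coefficients $3$ and $v$ is exactly what makes the two remainders carry the common factor $3vp(1-p)$.

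Your route differs from the paper's in a useful way. The paper treats the whole left-hand side as a single function $f(x)$, computes $f'$ and $f''$ explicitly, checks $f''\ge 0$ and $f'(1)=0$, and concludes $f(x)\ge f(1)$; this requires a somewhat lengthy derivative computation but makes the ``convex with minimum at $x=1$'' structure explicit. You instead bound each summand separately by its tangent line, taken in the natural variable for that summand ($x$ for the first, $u=1/x$ for the second), and let the linear remainders cancel into a perfect square over $x$. This avoids second derivatives of the full expression entirely and reduces the lemma to two instances of Bernoulli's inequality plus the identity $x+\frac1x-2=\frac{(x-1)^2}{x}$; the price is that the argument looks like a lucky cancellation rather than an application of a general principle, whereas the paper's calculation makes transparent that $x=1$ is the global minimizer of the left-hand side. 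Either proof is complete and could be substituted for the other.
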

    \begin{proof}
        Define $f : (0,\infty) \to \mathbb{R}$ by \[f(x) = 3(1-p)(1-p+px)^v+vp\left(\frac{1-p}{x}+p\right)^3.\]
        Our goal is to prove that $f(x) \ge f(1) = 3(1-p) + vp$ for all $x \in (0,\infty)$. It suffices to prove that $f$ is convex and that $f'(1)=0$.
        We have
        \begin{align*}
            f'(x) %
            &= 3vp(1-p) \left( (1-p+px)^{v-1} - \frac{1}{x^2} \left( \frac{1-p}{x} + p \right)^2 \right),\\
            f''(x) &= 3vp(1-p)\left( (v-1) p (1-p+px)^{v-2} + \frac{2}{x^3} \cdot \left( \frac{1-p}{x} + p \right)^2 + \frac{2}{x^2} \cdot \frac{1-p}{x^2}\cdot \left( \frac{1-p}{x} + p \right)\right).
        \end{align*}
        From these expressions, it is easy to see that $f'(1)=0$ and $f''(x)\ge 0$ for all $x \in (0,\infty)$.
    \end{proof}

\subsection{Analytic R\'enyi DP Bound for Privacy Amplification by Poisson Subsampling}
    Theorem \ref{thm:rdp_subsampling} gives a tight RDP bound for privacy amplification by Poisson subsampling. 
    However, the bound is in the form of a series. This is adequate for numerical purposes, but it is helpful for our understanding to have a simpler closed-form expression.
    
    In this subsection we will provide a simpler expression and attempt to build some understanding of how privacy amplification by subsampling applies to R\'enyi DP.
    
    \begin{theorem}[Asymptotic Privacy Amplification by Subsampling for R\'enyi DP]\label{thm:rdp_subsampling_asymptotic}
        Let $p \in [0,1/2]$ and $\rho\in(0,1]$. Define $\omega = \min\left\{ \frac{\log(1/p)}{4\rho} , 1 + p^{-1/4}\right\}$. Assume $\omega \ge 3 + 2 \frac{\log(1/\rho)}{\log(1/p)}$.
        
        Let $M : \mathcal{X}^n \to \mathcal{Y}$ satisfy $\rho$-zCDP with respect to addition or removal.\footnote{I.e., $x,x'\in\mathcal{X}^n$ are neighbouring if, for some $i \in [n]$, we have $x_i=\bot$ or $x'_i=\bot$, and $\forall j \ne i ~~ x_j=x'_j$, where $\bot\in\mathcal{X}$ is some fixed value.}
        
        Define $M^U : \mathcal{X}^n \to \mathcal{Y}$ by $M^U(x) = M(x_U)$, where $U \subset [n]$ be a random set that contains each element independently with probability $p$ and, for $x \in \mathcal{X}^n$, $x_U \in \mathcal{X}^n$ is given by $(x_U)_i = x_i$ if $i \in U$ and $(x_U)_i = \bot$ if $i \notin U$.
        
        Then $M^U$ satisfies $(\alpha,10 p^2 \rho \alpha)$-RDP for all $\alpha \in (1,\omega)$.
    \end{theorem}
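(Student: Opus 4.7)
The strategy is to apply the tight R\'enyi subsampling bound of Theorem~\ref{thm:rdp_subsampling} with $\varepsilon(k)=k\rho$ and then simplify the resulting series. Since $M$ is $\rho$-zCDP, setting $t=k-1$ in Definition~\ref{defn:cdp} yields $\ex{}{e^{(k-1)Z}} \le e^{k(k-1)\rho}$, so $M$ is $(k,k\rho)$-RDP for every integer $k \ge 2$. Absorbing the $k=0,1$ terms via the binomial identity $\sum_{k=0}^\alpha \binom{\alpha}{k}(1-p)^{\alpha-k}p^k=1$, Theorem~\ref{thm:rdp_subsampling} reduces, for integer $\alpha \ge 2$, to
\[
e^{(\alpha-1)\varepsilon'_p(\alpha)} \;\le\; 1+\sum_{k=2}^{\alpha}\binom{\alpha}{k}(1-p)^{\alpha-k}p^k\bigl(e^{(k-1)k\rho}-1\bigr).
\]
By $\log(1+x)\le x$, it suffices to bound the sum on the right by $10\,\alpha(\alpha-1)\,p^2\rho$.

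I would split the sum at $k=2$. Using $e^{2\rho}-1 \le 2\rho e^{2\rho} \le 2\rho e^2$ (from $\rho \le 1$) and $(1-p)^{\alpha-2}\le 1$, the $k=2$ term contributes at most $e^2\,\alpha(\alpha-1)p^2\rho \approx 7.4\,\alpha(\alpha-1)p^2\rho$; this matches the leading-order Taylor expansion of $\dr{\alpha}{(1-p)Q+pP}{Q}$ in $p$ at $p=0$. The remaining slack of roughly $2.6\,\alpha(\alpha-1)p^2\rho$ must absorb the tail $k\ge 3$.

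For the tail I would apply $e^{(k-1)k\rho}-1 \le (k-1)k\rho \cdot e^{(k-1)k\rho}$ together with the two upper bounds defining $\omega$: $\omega \le \log(1/p)/(4\rho)$ gives $k\rho \le \log(1/p)/4$ and hence $e^{(k-1)k\rho} \le p^{-(k-1)/4}$, while $\omega \le 1+p^{-1/4}$ gives $\alpha p^{3/4} \le 2\sqrt{p}$. Combined with $\binom{\alpha}{k}\le \alpha^k/k!$, each tail term is at most $\rho\,(\alpha p^{3/4})^k/(k-2)! \cdot p^{1/4}$; setting $r:=\alpha p^{3/4}$ and summing gives $\rho\,p^{1/4}\,r^2(e^r-1) \le \rho\,\alpha^3 p^{5/2}\,e^{\alpha p^{3/4}} = O(\rho\,\alpha^3 p^{5/2})$. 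The lower-bound condition $\omega \ge 3 + 2\log(1/\rho)/\log(1/p)$ (equivalently $p^\omega \le p^3\rho^2$) is precisely what confines $(p,\rho,\alpha)$ to a regime where $\alpha^3 p^{5/2}$ fits inside the slack; in particular, combined with $\omega \le 1+p^{-1/4}$ it forces $p \le 1/16$, so $\alpha p^{3/4} \le 2\sqrt{p} \le 1/2$ and the tail is genuinely subleading. For non-integer $\alpha \in (1,\omega)$, monotonicity of R\'enyi divergence (Lemma~\ref{lem:rdp_properties}, part 3) applied at $\lceil\alpha\rceil$ transfers the bound, with the slack in the constant $10$ absorbing the rounding.

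The main obstacle is constant-tracking in the tail estimate and identifying the precise role of the three constraints on $\omega$: the two upper bounds enable the geometric-type convergence of the tail, while the lower bound on $\omega$ guarantees $\omega$ is large enough (and correspondingly $p$ small enough) that the tail is dominated by the $k=2$ leading term. Any one of these constraints alone is insufficient, so all three must be used simultaneously.
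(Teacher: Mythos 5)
Your route differs from the paper's: you invoke the tight series bound (Theorem \ref{thm:rdp_subsampling}) with $\varepsilon(k)=k\rho$ and estimate the binomial sum term by term, handling non-integer $\alpha$ by monotonicity at $\lceil\alpha\rceil$; the paper instead proves an analytic Taylor-remainder bound valid for \emph{real} orders (Proposition \ref{prop:divergence_subsampling_analytic}, via Lemma \ref{lem:taylor_high}), applies it directly at $\alpha$, and handles the reverse divergence $\dr{\alpha}{Q}{pP+(1-p)Q}$ via Theorem \ref{thm:rdp_ss_flip} (which you get for free by citing Theorem \ref{thm:rdp_subsampling} as a black box). Your tail estimate itself is sound: $\binom{\alpha}{k}\le\alpha^k/k!$, $e^{(k-1)k\rho}\le p^{-(k-1)/4}$ for $k\le\omega$, and $\alpha p^{3/4}\le 2\sqrt p\le 1/2$ do make the $k\ge 3$ terms subleading, and the use of the three constraints on $\omega$ is correctly identified.

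The gap is in the passage to non-integer $\alpha$, which is not mere constant-tracking. First, with your stated $k=2$ estimate $e^{2\rho}-1\le 2\rho e^{2}$, the integer-order bound is roughly $e^{2}\,\alpha(\alpha-1)p^2\rho$ plus tail, i.e.\ $\varepsilon'_p(\alpha)\gtrsim 7.4\,\alpha p^2\rho$; transferring to $\alpha\in(1,2)$ via $\lceil\alpha\rceil=2$ gives about $14.8\,p^2\rho$, which exceeds the target $10\alpha p^2\rho$ as $\alpha\to 1^+$. So the claim that ``the slack in the constant $10$ absorbs the rounding'' fails as computed; you need the sharper bound $e^{2\rho}-1\le(e^2-1)\rho$ (coefficient $\approx 3.2$ instead of $e^2$) to survive the worst-case factor $\lceil\alpha\rceil/\alpha\to 2$. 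Second, when $\omega$ is not an integer and $\alpha\in(\lfloor\omega\rfloor,\omega)$, you have $\lceil\alpha\rceil>\omega$, and then the step $e^{(k-1)k\rho}\le p^{-(k-1)/4}$ for $k=\lceil\alpha\rceil$ (which needs $k\rho\le\log(1/p)/4$, i.e.\ $k\le\omega$) breaks; patching it costs an extra factor (roughly $p^{-1/4}$ per term) that your budget does not account for. The paper avoids both problems precisely because Proposition \ref{prop:divergence_subsampling_analytic} holds for all real $\alpha\le\omega$, so no rounding is ever needed: its two terms are bounded by $\alpha\frac{e}{2}p^2(e^{2\rho}-1)\le 8.7\,\alpha p^2\rho$ and $p^{(1+\omega)/2}\le p^2\rho$, summing to under $10\alpha p^2\rho$ with no factor-of-two loss. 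Your argument is repairable (sharpen the $k=2$ constant and treat the $\lceil\alpha\rceil>\omega$ boundary separately, e.g.\ by extending the tail estimate to $k\le\omega+1$ with explicit extra slack), but as written these two steps do not close.
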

    
    \begin{figure}
        \centering
        \includegraphics[width=0.75\textwidth]{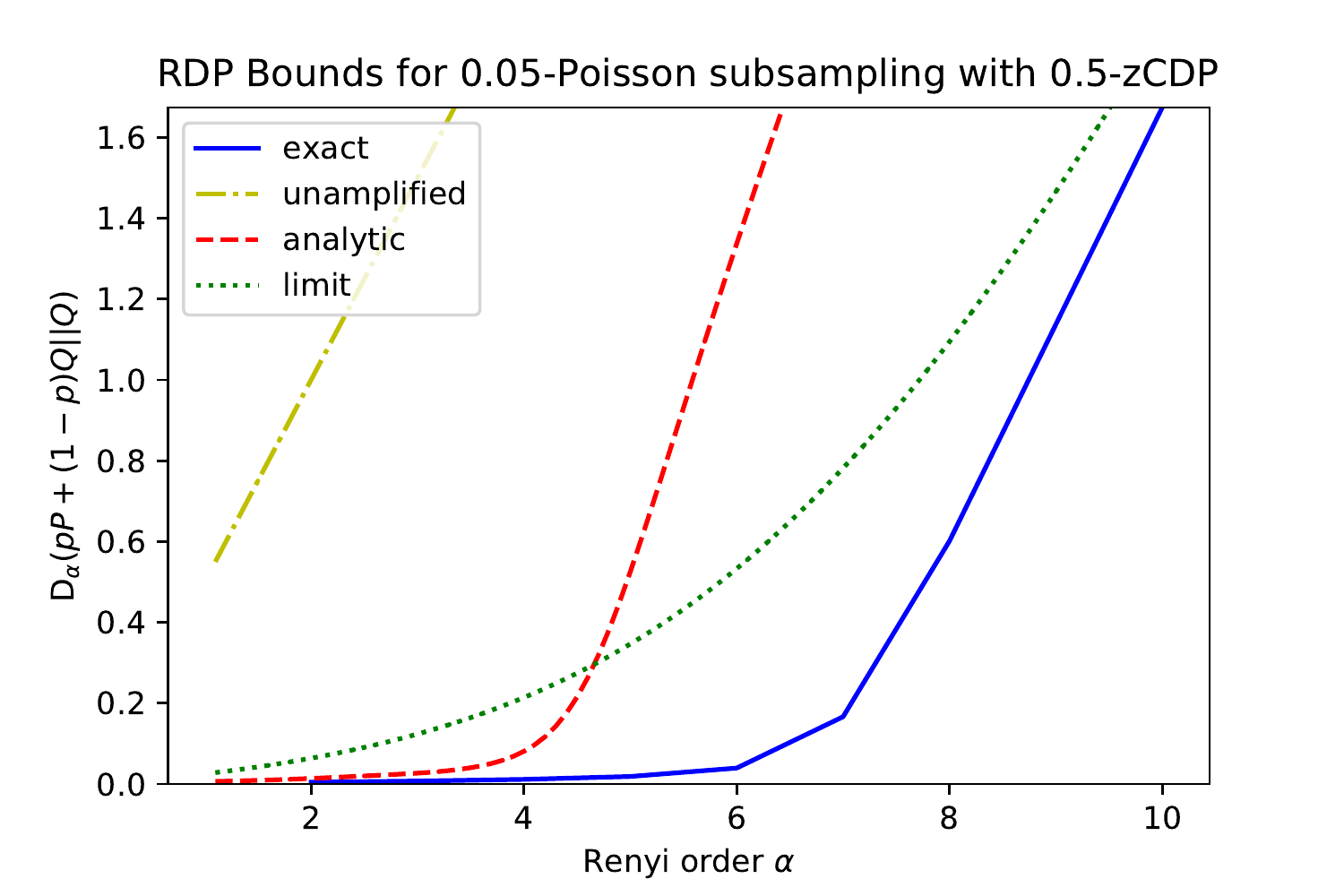}
        \caption{Comparison of R\'enyi divergence guarantees for Poisson subsampling -- i.e., including each person with probability $p=0.05$. The \texttt{unamplified} algorithm satisfies $0.5$-zCDP. The \texttt{exact} bound is given by Theorem \ref{thm:rdp_subsampling}. For comparison, we have the \texttt{analytic} upper bound from Proposition \ref{prop:divergence_subsampling_analytic} as well as the behaviour in the \texttt{limit} given by Proposition \ref{prop:rdp_subsampling_jensen}. %
        }
        \label{fig:rdp_subsampling}
    \end{figure}
    
    There are many caveats in the statement of Theorem \ref{thm:rdp_subsampling_asymptotic}, but the high level message is that Poisson subsampling a $p$ fraction of the dataset amplifies $\rho$-zCDP to something like $O(p^2 \cdot \rho)$-zCDP.
    We will discuss these caveats in a moment, but, ignoring these caveats and the constant factor in the guarantee, this is exactly the kind of guarantee we would hope for.
    
    Consider the following example, which illustrates what kind guarantee we would hope for. Suppose we have a query $q : \mathcal{X} \to [0,1]$ and a sensitive dataset $x \in \mathcal{X}^n$ and our goal is to estimate $q(x) := \frac{1}{n} \sum_i^n q(x_i)$.
    We could release a sample from $\mathcal{N}(q(x),\sigma^2)$, which satisfies $\frac{1}{2n^2\sigma^2}$-zCDP and has mean squared error $\sigma^2$.
    However, perhaps due to computational constraints, we might instead select a random $p$ fraction $U \subset [n]$ and instead release a sample from $M^U(x) = \mathcal{N}\left( \frac{1}{pn} \sum_{i \in U} q(x_i), \sigma^2 \right)$. We can calculate that the mean squared error of this algorithm is at most $\sigma^2 + \frac{1-p}{pn}$. Without amplification this satisfies $(\rho=\frac{1}{2p^2n^2\sigma^2})$-zCDP. With amplification, Theorem \ref{thm:rdp_subsampling_asymptotic} tells us that this satisfies $(\alpha,O(p^2 \cdot \rho))$-RDP for $\alpha$ not too large. Now $p^2 \cdot \rho = \frac{1}{2n^2\sigma^2}$ is exactly the guarantee that we obtained by simply evaluating $q$ on the entire dataset and avoiding subsampling. We cannot hope to do better than this.
    
    The constant factor of 10 in the theorem can be improved, but a constant factor loss is the price we pay for having a simpler expression; if we want tight constants we should apply Theorem \ref{thm:rdp_subsampling}.
    
    The main caveat in Theorem \ref{thm:rdp_subsampling_asymptotic} is the requirement that $\alpha \le \omega \le \frac{\log(1/p)}{4\rho}$.
    This is necessary, as the $(\alpha,\varepsilon(\alpha))$-RDP guarantee qualitatively changes when $\alpha \ge O(\log(1/p)/\rho)$. It changes from $\varepsilon(\alpha) = O(p^2 \rho \alpha)$ to $\varepsilon(\alpha)=\rho\alpha - O(\log(1/p))$.
    To see why this is inherent, consider the following lower bound. For all $p \in [0,1]$, all $\alpha \in (1,\infty)$, and all absolutely continuous probability distributions $P$ and $Q$, we have
    \[e^{(\alpha-1)\dr{\alpha}{pP+(1-p)Q}{Q}} = \ex{Y \gets Q}{\left(1-p+p\cdot\frac{P(Y)}{Q(Y)}\right)^\alpha} \ge \ex{Y \gets Q}{\left(p\cdot\frac{P(Y)}{Q(Y)}\right)^\alpha} = p^\alpha \cdot e^{(\alpha-1)\dr{\alpha}{P}{Q}}.\]
    Thus $\dr{\alpha}{pP+(1-p)Q}{Q} \ge \dr{\alpha}{P}{Q} - \frac{\alpha}{\alpha-1} \log(1/p)$.
    This tells us that, for large $\alpha$, we cannot have more than an additive improvement in the RDP guarantee, whereas for small $\alpha$ we have a multiplicative improvement. Proposition \ref{prop:rdp_subsampling_jensen} shows that this lower bound is tight.
    
    We now proceed to prove Theorem \ref{thm:rdp_subsampling_asymptotic}.
    
    \begin{lemma}\label{lem:taylor}
        Let $\alpha \in (1,\infty)$, $p \in \left[0,1-e^{-1}\right]$, and $x \in \left[0,\infty\right)$. If either $\alpha \le 2$ or $\alpha>2$ and $px \le \max \left\{ p , \frac{1-p}{\alpha-2} \right\}$, then \[(1-p+p\cdot x)^\alpha \le 1 + \alpha p (x-1) + \frac{e}{2}\alpha(\alpha-1)p^2(x-1)^2.\]
    \end{lemma}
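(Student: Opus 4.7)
The plan is to substitute $u := p(x-1)$, so that the inequality becomes the one-variable statement
\[ (1+u)^\alpha \le 1 + \alpha u + \tfrac{e}{2}\alpha(\alpha-1) u^2 \]
on an appropriate range of $u$. Since $x \ge 0$ and $x$ is otherwise unconstrained, $u$ ranges over $[-p, \infty)$ in principle, but the hypotheses on $px$ cut this down. Specifically, $px \le p$ translates to $u \le 0$, and $px \le (1-p)/(\alpha-2)$ translates to $u \le (1-p)/(\alpha-2) - p = \bigl((1-p)(\alpha-1)\bigr)/(\alpha-2) - 1$, i.e. $1+u \le (1-p)(\alpha-1)/(\alpha-2)$.

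Next I would define $g(u) := (1+u)^\alpha - 1 - \alpha u - \tfrac{e}{2}\alpha(\alpha-1)u^2$ and compute $g(0)=0$, $g'(0)=0$, and
\[ g''(u) = \alpha(\alpha-1)\bigl((1+u)^{\alpha-2} - e\bigr). \]
So the entire lemma reduces to showing $g''(u) \le 0$ on an interval of $u$-values containing $0$ and the given $u$; then concavity plus the two vanishing Taylor coefficients at $0$ forces $g(u) \le 0$. The condition $g''(u) \le 0$ is simply $(1+u)^{\alpha-2} \le e$.

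The verification of $(1+u)^{\alpha-2} \le e$ splits by the sign of $\alpha-2$. If $\alpha \in (1,2]$, then $u \mapsto (1+u)^{\alpha-2}$ is non-increasing, so its maximum over $u \in [-p,\infty)$ is at $u=-p$, giving $(1-p)^{\alpha-2} \le e^{(2-\alpha)} \le e$ using $1-p \ge e^{-1}$ (from $p \le 1-e^{-1}$) and $\alpha \ge 1$. If $\alpha > 2$, then $(1+u)^{\alpha-2}$ is non-decreasing, so I check the supremum of $1+u$ under the hypothesis. When $px \le p$, $u \le 0$ and $(1+u)^{\alpha-2} \le 1 \le e$. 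When $px \le (1-p)/(\alpha-2)$, the bound $1+u \le (1-p)(\alpha-1)/(\alpha-2)$ yields
\[ (1+u)^{\alpha-2} \le \left(\tfrac{(1-p)(\alpha-1)}{\alpha-2}\right)^{\alpha-2} \le \left(1 + \tfrac{1}{\alpha-2}\right)^{\alpha-2} \le e, \]
using $1-p \le 1$ and the classical inequality $(1+1/t)^t \le e$ for $t>0$.

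The main (minor) obstacle is simply carefully bookkeeping the three regimes --- $\alpha \in (1,2]$ (where the constraint $p \le 1-e^{-1}$ is what saves us, not any bound on $x$), $\alpha > 2$ with $x \le 1$, and $\alpha > 2$ with $x > 1$ --- and making sure the factor of $e$ in the quadratic coefficient is exactly what is needed to dominate both the ``left tail'' contribution $(1-p)^{\alpha-2}$ and the ``right tail'' contribution $(1+1/(\alpha-2))^{\alpha-2}$. The factor $e$ is essentially optimal for this second-order Taylor bound precisely because of the limit $(1+1/t)^t \to e$.
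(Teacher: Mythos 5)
Your proposal is correct and is essentially the same argument as the paper's: both reduce to a second-order Taylor bound at $x=1$ (your $u=0$), controlling the second derivative of $(1-p+px)^\alpha$ by $e\,\alpha(\alpha-1)p^2$ via the identical case split, with $(1-p)^{\alpha-2}\le e$ from $p\le 1-e^{-1}$ when $\alpha\le 2$ and $\left(1+\frac{1}{\alpha-2}\right)^{\alpha-2}\le e$ when $\alpha>2$. The substitution $u=p(x-1)$ and phrasing the conclusion via concavity of $g$ and its tangent line at $0$, rather than the Lagrange remainder $\frac12 f''(\xi_x)(x-1)^2$ used in the paper, are only cosmetic differences.
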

    \begin{proof}
        We assume $p>0$. Otherwise the result is trivial.
        
        Define $f : [0,\infty) \to \mathbb{R}$ by \[f(x) = (1-p+px)^\alpha.\]
        For all $x\in[0,\infty)$, we have
        \begin{align*}
            f'(x) &= \alpha p (1-p+px)^{\alpha-1},\\
            f''(x) &= \alpha(\alpha-1) p^2 (1-p+px)^{\alpha-2},\\
            f'''(x) &= \alpha(\alpha-1)(\alpha-2) p^3 (1-p+px)^{\alpha-3}.
        \end{align*}
        By Taylor's theorem, for all $x \in \left[0,\infty\right)$, there exists $\xi_x \in \left[\min\{1,x\},\max\{1,x\}\right]$ such that
        \begin{align*}
            f(x) &= f(1) + f'(1)(x-1) + \frac12 f''(\xi_x)(x-1)^2\\
            &= 1 + \alpha p (x-1) + \frac12 f''(\xi_x) (x-1)^2.
        \end{align*}
        To complete the proof is suffices to show that $f''(\xi) \le e \cdot \alpha(\alpha-1) p^2$ in two cases: First, for all $\xi \in [0,\infty)$ assuming $\alpha \le 2$. Second, for all $\xi \in  \left[0,\max\left\{1,\frac{1-p}{p}\frac{1}{\alpha-2}\right\}\right]$ assuming $\alpha>2$. (Note that, $\xi_x \in \left[0,\max\left\{1,\frac{1-p}{p}\frac{1}{\alpha-2}\right\}\right]$ is implied by the assumptions $px \le \max \left\{ p , \frac{1-p}{\alpha-2} \right\}$ and $p>0$.)
        
        First, suppose $\alpha\le 2$. Then $f'''(x)\le 0$ for all $x \in [0,\infty)$. Thus $f''$ is decreasing (or constant) and, for all $\xi \in  \left[0,\infty\right)$, we have
        \begin{align*}
            f''(\xi) &\le f''(0)\\
            &= \alpha(\alpha-1)p^2 (1-p)^{\alpha-2}\\
            &\le \alpha(\alpha-1)p^2 \frac{1}{1-p} \tag{$\alpha > 1$}\\
            &\le \alpha(\alpha-1)p^2 \cdot e. \tag{$p \le 1-e^{-1}$}
        \end{align*}
        
        Second, assume $\alpha > 2$ and $px \le \max \left\{ p , \frac{1-p}{\alpha-2} \right\}$, which implies $\xi_x \le \max \left\{ 1 , \frac{1-p}{p}\frac{1}{\alpha-2} \right\}$.

        We have $f'''(x) > 0$ for all $x \in [0,\infty)$. Thus $f''$ is increasing and, for all $\xi \in \left[ 0 ,  \max \left\{ 1 , \frac{1-p}{p}\frac{1}{\alpha-2} \right\} \right]$, we have
        \begin{align*}
            f''(\xi) &\le f''\left( \max \left\{ 1 , \frac{1-p}{p}\frac{1}{\alpha-2} \right\} \right)\\
            &= \alpha(\alpha-1)p^2 \left( 1-p + p \cdot \max \left\{ 1 , \frac{1-p}{p}\frac{1}{\alpha-2} \right\}\right)^{\alpha-2} \\
            &= \alpha(\alpha-1)p^2 \max \left\{ 1 , (1-p)^{\alpha-2} \cdot \left( 1 + \frac{1}{\alpha-2} \right)^{\alpha-2} \right\} \\
            &\le \alpha(\alpha-1)p^2 \max \left\{ 1 , (1-p)^0 \cdot \left( e^{\frac{1}{\alpha-2}} \right)^{\alpha-2} \right\} \\
            &= \alpha(\alpha-1)p^2 \cdot e.
        \end{align*}
    \end{proof}
    \begin{lemma}\label{lem:taylor_high}
        Let $\alpha,\omega \in (1,\infty)$ with $\alpha \le \omega$, $p \in \left[0,1-e^{-1}\right]$, and $x \in \left[0,\infty\right)$. Then \[(1-p+p\cdot x)^\alpha \le 1 + \alpha p (x-1) + \frac{e}{2}\alpha(\alpha-1)p^2(x-1)^2 + \big((\alpha-1) p x\big)^\omega.\]
    \end{lemma}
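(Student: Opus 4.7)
The plan is to reduce to Lemma \ref{lem:taylor} by splitting on whether its hypothesis is satisfied: in the ``small $x$'' regime the previous bound (without the $((\alpha-1)px)^\omega$ term) already holds, and since that extra term is non-negative we get the new bound for free; in the ``large $x$'' regime we will show that $((\alpha-1)px)^\omega$ alone dominates $(1-p+px)^\alpha$, and the remaining three terms on the right-hand side are then non-negative, so they only help.

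More concretely, first I would dispose of the trivial case $p=0$, which reduces both sides to $1$, and the case where Lemma \ref{lem:taylor} applies (i.e.\ $\alpha\le 2$, or $\alpha>2$ with $px\le\max\{p,(1-p)/(\alpha-2)\}$); in those cases Lemma \ref{lem:taylor} gives the bound without the extra term, and $((\alpha-1)px)^\omega\ge 0$ does no harm. So the only remaining case is $\alpha>2$ with both $px>p$ and $px>(1-p)/(\alpha-2)$. In this case $x>1$, which makes the three terms $1$, $\alpha p(x-1)$, and $\tfrac{e}{2}\alpha(\alpha-1)p^2(x-1)^2$ all non-negative, so it suffices to prove $(1-p+px)^\alpha\le((\alpha-1)px)^\omega$.

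The key observation is a clean two-line algebraic fact: since $(\alpha-2)px>1-p$ and $px>p$, adding these gives
\[
(\alpha-1)px \;=\; (\alpha-2)px + px \;>\; (1-p)+p \;=\; 1,
\]
and separately the first inequality alone gives
\[
(\alpha-1)px \;=\; (\alpha-2)px + px \;>\; (1-p)+px \;=\; 1-p+px.
\]
Combining these, $1-p+px < (\alpha-1)px$ and $(\alpha-1)px>1$, so raising to the $\alpha$-th power and then using $\omega\ge\alpha$ with base $\ge 1$,
\[
(1-p+px)^\alpha \;<\; ((\alpha-1)px)^\alpha \;\le\; ((\alpha-1)px)^\omega,
\]
as required.

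I do not expect a serious obstacle here: the lemma's role is to patch Lemma \ref{lem:taylor} at large $x$, where the Taylor remainder analysis breaks down because $f''$ is no longer controlled on $[1,\xi_x]$; the $((\alpha-1)px)^\omega$ tail term is chosen exactly so that a crude envelope bound $(1-p+px)\le (\alpha-1)px$ suffices to kill $(1-p+px)^\alpha$. The only mildly delicate point is verifying $(\alpha-1)px\ge 1$ in the large-$x$ regime so that passing from exponent $\alpha$ to $\omega$ is legitimate, and that follows immediately from summing the two defining inequalities of the case as shown above.
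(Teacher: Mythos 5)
Your proposal is correct and follows essentially the same route as the paper: reduce to Lemma \ref{lem:taylor} when its hypothesis holds, and otherwise (the large-$x$ case with $\alpha>2$) show $(1-p+px)^\alpha \le ((\alpha-1)px)^\omega$ using the case inequalities. The only cosmetic difference is the order of operations — you compare bases at exponent $\alpha$ and then raise the exponent using $(\alpha-1)px\ge 1$, whereas the paper first raises the exponent using $1-p+px\ge 1$ and then compares bases at exponent $\omega$ — but both hinge on exactly the same two facts.
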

    \begin{proof}
        We can assume $p>0$, as otherwise the result is trivial.
        
        If $\alpha \le 2$ or if $\alpha>2$ and $x \le \max \left\{ 1 , \frac{1-p}{p}\frac{1}{\alpha-2} \right\}$, then the result follows from Lemma \ref{lem:taylor}, as $\big((\alpha-1) p x\big)^\omega \ge 0$.
        
        Thus we assume $\alpha>2$ and $x \ge\max \left\{ 1 , \frac{1-p}{p}\frac{1}{\alpha-2} \right\}$.
        
        Since $x \ge 1$, we have $\alpha p (x-1) + \frac{e}{2}\alpha(\alpha-1)p^2(x-1)^2 \ge 0$. Therefore it suffices to prove that $(1-p+px)^\alpha \le ((\alpha-1) p x)^\omega$.
        
        The assumption $x \ge 1$ implies $1-p+px \ge 1$ and, hence, that $(1-p+px)^\alpha \le (1-p+px)^\omega$, as we have $\alpha \le \omega$.
        The assumption $x \ge \frac{1-p}{p}\frac{1}{\alpha-2}$ rearranges to $1-p \le px(\alpha-2)$, which implies $1-p+px \le (\alpha-1) p x$ and, hence, $(1-p+px)^\omega \le ((\alpha-1)px)^\omega$, as required.
    \end{proof}
    \begin{proposition}[Analytic Privacy Amplification by Subsampling for R\'enyi Divergence]\label{prop:divergence_subsampling_analytic}
        Let $P$ and $Q$ be probability distributions with $P$ absolutely continuous with respect to $Q$. Let $p \in \left[0,1-e^{-1}\right]$ and $\alpha,\omega \in (1,\infty)$ with $\alpha \le \omega$. Then
        \begin{align*}
            \dr{\alpha}{pP+(1-p)Q}{Q} &\le \frac{1}{\alpha-1} \log\left( 1 + \frac{e}{2} \alpha(\alpha-1) p^2 \left( e^{\dr{2}{P}{Q}}-1\right) + ((\alpha-1) p)^\omega \cdot e^{(\omega-1)\dr{\omega}{P}{Q}} \right)\\
            &\le \alpha \cdot \frac{e}{2} \cdot p^2 \cdot \left( e^{\dr{2}{P}{Q}}-1\right) + p \cdot \left( (\alpha-1) \cdot p \cdot e^{\dr{\omega}{P}{Q}}\right)^{\omega-1}.
        \end{align*}
    \end{proposition}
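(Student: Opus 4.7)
\textbf{Proof proposal for Proposition \ref{prop:divergence_subsampling_analytic}.} The plan is to apply Lemma \ref{lem:taylor_high} pointwise to the likelihood ratio $X := P(Y)/Q(Y)$ for $Y \gets Q$, then integrate. By the definition of R\'enyi divergence,
\[ e^{(\alpha-1)\dr{\alpha}{pP+(1-p)Q}{Q}} = \ex{Y \gets Q}{\left(\tfrac{pP(Y)+(1-p)Q(Y)}{Q(Y)}\right)^\alpha} = \ex{Y \gets Q}{(1-p+p X)^\alpha}. \]
Since $\alpha \le \omega$ and $p \in [0,1-e^{-1}]$, Lemma \ref{lem:taylor_high} applies pointwise: almost surely,
\[ (1-p + p X)^\alpha \le 1 + \alpha p (X-1) + \tfrac{e}{2}\alpha(\alpha-1) p^2 (X-1)^2 + ((\alpha-1) p X)^\omega. \]

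Now I would take expectations term-by-term under $Y \gets Q$. The first moment satisfies $\ex{Y \gets Q}{X} = \ex{Y \gets Q}{P(Y)/Q(Y)} = 1$, so the linear term drops out. For the quadratic term, expanding $(X-1)^2 = X^2 - 2X + 1$ gives $\ex{Y \gets Q}{(X-1)^2} = \ex{Y \gets Q}{X^2} - 1 = e^{\dr{2}{P}{Q}} - 1$ by the definition of R\'enyi-$2$ divergence. For the final term, $\ex{Y \gets Q}{X^\omega} = e^{(\omega-1)\dr{\omega}{P}{Q}}$. Combining these yields
\[ e^{(\alpha-1)\dr{\alpha}{pP+(1-p)Q}{Q}} \le 1 + \tfrac{e}{2}\alpha(\alpha-1) p^2\bigl(e^{\dr{2}{P}{Q}}-1\bigr) + \bigl((\alpha-1) p\bigr)^\omega e^{(\omega-1)\dr{\omega}{P}{Q}}, \]
and taking $\tfrac{1}{\alpha-1}\log(\cdot)$ on both sides gives the first displayed inequality.

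For the second inequality, I would apply $\log(1+t) \le t$ to the right-hand side and distribute the factor $\tfrac{1}{\alpha-1}$. The quadratic term becomes $\tfrac{e}{2}\alpha p^2(e^{\dr{2}{P}{Q}}-1)$, and the $\omega$-th order term becomes $\tfrac{((\alpha-1)p)^\omega}{\alpha-1} e^{(\omega-1)\dr{\omega}{P}{Q}} = p \cdot \bigl((\alpha-1) p \cdot e^{\dr{\omega}{P}{Q}}\bigr)^{\omega-1}$, which matches the claim exactly.

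There is no real obstacle here once Lemma \ref{lem:taylor_high} is in hand; the only thing to be careful about is that the expectations are well-defined. The moments $\ex{}{X^2}$ and $\ex{}{X^\omega}$ may be infinite in general, but in that case the corresponding R\'enyi divergences $\dr{2}{P}{Q}$ or $\dr{\omega}{P}{Q}$ are $+\infty$ and the bound holds trivially; absolute continuity of $P$ with respect to $Q$ ensures $X$ is almost surely finite so Lemma \ref{lem:taylor_high} applies pointwise. The real content of the proposition is packaged into the sharp Taylor-with-tail bound of Lemma \ref{lem:taylor_high} -- separating the near-$1$ behaviour of $(1-p+px)^\alpha$ (which gives the $p^2$ amplification) from the large-$x$ tail (which is controlled by a higher moment) -- and the present proposition is really just its integrated form.
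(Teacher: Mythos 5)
Your proposal is correct and follows essentially the same route as the paper's proof: write $e^{(\alpha-1)\dr{\alpha}{pP+(1-p)Q}{Q}}$ as $\ex{Y \gets Q}{(1-p+p\,P(Y)/Q(Y))^\alpha}$, apply Lemma \ref{lem:taylor_high} pointwise, use $\ex{Y\gets Q}{P(Y)/Q(Y)}=1$ together with the moment identities for $\dr{2}{P}{Q}$ and $\dr{\omega}{P}{Q}$, and then $\log(1+u)\le u$ for the second inequality. The algebra turning $\tfrac{((\alpha-1)p)^\omega}{\alpha-1}e^{(\omega-1)\dr{\omega}{P}{Q}}$ into $p\left((\alpha-1)p\,e^{\dr{\omega}{P}{Q}}\right)^{\omega-1}$ checks out, so no gaps.
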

    \begin{proof}
        We have
        \begin{align*}
            &e^{(\alpha-1) \dr{\alpha}{pP+(1-p)Q}{Q}} \\
            &~= \ex{Y \gets Q}{\left(\frac{p \cdot P(Y) + (1-p) \cdot Q(Y) }{Q(Y)}\right)^\alpha}\\
            &~= \ex{Y \gets Q}{\left(1-p+p \cdot \frac{P(Y)}{Q(Y)}\right)^\alpha}\\
            &~\le \ex{Y \gets Q}{1 + \alpha p \left(\frac{P(Y)}{Q(Y)}-1\right) + \frac{e}{2}\alpha(\alpha-1)p^2\left(\frac{P(Y)}{Q(Y)}-1\right)^2 + \left((\alpha-1) p \frac{P(Y)}{Q(Y)}\right)^\omega} \tag{Lemma \ref{lem:taylor_high}}\\
            &~= 1 + \alpha p \left(1-1\right) + \frac{e}{2}\alpha(\alpha-1)p^2\left(e^{\dr{2}{P}{Q}}-1\right) + \left((\alpha-1) p \right)^\omega \cdot e^{(\omega-1)\dr{\omega}{P}{Q}}.
        \end{align*}
        The second inequality in the result follows from the fact that $\log(1+u) \le u$ for all $u>-1$.
    \end{proof}
    We also have the following simpler result that provides better bounds when the R\'enyi order $\alpha$ is large.
    \begin{proposition}\label{prop:rdp_subsampling_jensen}
        Let $P$ and $Q$ be probability distributions with $P$ absolutely continuous with respect to $Q$. Let $p \in \left[0,1\right]$ and $\alpha \in (1,\infty)$. Then
        \begin{align*}
            \dr{\alpha}{pP+(1-p)Q}{Q} &\le \frac{\alpha}{\alpha-1} \log\left( 1-p + p \cdot e^{(1-1/\alpha)\dr{\alpha}{P}{Q}} \right)\\
            &= \dr{\alpha}{P}{Q} - \frac{\alpha}{\alpha-1} \log (1/p)  + \frac{\alpha}{\alpha-1} \log\left( 1 + \frac{1-p}{p} \cdot e^{-\frac{\alpha-1}{\alpha}\dr{\alpha}{P}{Q}} \right)\\
            &\le \dr{\alpha}{P}{Q} - \frac{\alpha}{\alpha-1} \log (1/p)  + \frac{\alpha}{\alpha-1} \cdot \frac{1-p}{p} \cdot e^{-\frac{\alpha-1}{\alpha}\dr{\alpha}{P}{Q}}\\
        \end{align*}
    \end{proposition}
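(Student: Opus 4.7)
My plan is to prove the first (and main) inequality via a single application of Minkowski's inequality in $L^\alpha(Q)$, and then derive the subsequent equalities/inequalities by algebraic rearrangement together with the elementary bound $\log(1+u) \le u$.

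Concretely, write $r(y) = P(y)/Q(y)$ for the Radon--Nikodym derivative, which is well-defined by absolute continuity. Then unfold the definition:
\[
e^{(\alpha-1)\dr{\alpha}{pP+(1-p)Q}{Q}} = \ex{Y \gets Q}{\left(1-p + p \cdot r(Y)\right)^\alpha} = \bigl\| (1-p) + p \cdot r \bigr\|_{L^\alpha(Q)}^\alpha.
\]
Since $\alpha > 1$, Minkowski's inequality applies to the constant function $(1-p)$ and the function $p \cdot r$:
\[
\bigl\| (1-p) + p \cdot r \bigr\|_{L^\alpha(Q)} \le (1-p) \cdot \|1\|_{L^\alpha(Q)} + p \cdot \|r\|_{L^\alpha(Q)} = (1-p) + p \cdot \|r\|_{L^\alpha(Q)}.
\]
The key identity is $\|r\|_{L^\alpha(Q)} = \ex{Y \gets Q}{r(Y)^\alpha}^{1/\alpha} = e^{(\alpha-1)/\alpha \cdot \dr{\alpha}{P}{Q}} = e^{(1-1/\alpha)\dr{\alpha}{P}{Q}}$, which is just unwinding the definition of R\'enyi divergence. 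Raising both sides of the Minkowski bound to the power $\alpha$ and taking $\log$ divided by $\alpha-1$ yields the first line of the proposition.

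For the equality in the second line, I would factor $p \cdot e^{(1-1/\alpha)\dr{\alpha}{P}{Q}}$ out of the sum inside the logarithm:
\[
(1-p) + p \cdot e^{(1-1/\alpha)\dr{\alpha}{P}{Q}} = p \cdot e^{\frac{\alpha-1}{\alpha}\dr{\alpha}{P}{Q}} \left(1 + \frac{1-p}{p} \cdot e^{-\frac{\alpha-1}{\alpha}\dr{\alpha}{P}{Q}}\right),
\]
then distribute $\frac{\alpha}{\alpha-1}\log(\cdot)$ across the product: the factor $p$ contributes $-\frac{\alpha}{\alpha-1}\log(1/p)$, the exponential factor contributes exactly $\dr{\alpha}{P}{Q}$ (since $\frac{\alpha}{\alpha-1} \cdot \frac{\alpha-1}{\alpha} = 1$), and the remaining parenthesis gives the third term. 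The final inequality is just $\log(1+u) \le u$ applied to $u = \frac{1-p}{p} e^{-\frac{\alpha-1}{\alpha}\dr{\alpha}{P}{Q}} \ge 0$.

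There is no real obstacle here; the main thing to be careful about is ensuring Minkowski is applied in the correct direction (which requires $\alpha \ge 1$, matching our hypothesis $\alpha \in (1,\infty)$) and that the constant function $1$ genuinely has $L^\alpha(Q)$-norm equal to $1$ (which it does since $Q$ is a probability measure). All the remaining manipulations are bookkeeping.
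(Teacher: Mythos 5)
Your proof is correct. The only point worth flagging is that the route differs in packaging from the paper's: you invoke Minkowski's inequality in $L^\alpha(Q)$ as a black box, writing $\|(1-p)+p\,r\|_{L^\alpha(Q)} \le (1-p) + p\,\|r\|_{L^\alpha(Q)}$ and identifying $\|r\|_{L^\alpha(Q)} = e^{(1-1/\alpha)\dr{\alpha}{P}{Q}}$, whereas the paper proves exactly this bound by hand: it writes $1-p+px$ as a convex combination $(1-\lambda)\frac{1-p}{1-\lambda} + \lambda\frac{px}{\lambda}$, applies Jensen's inequality to $v \mapsto v^\alpha$ to get $e^{(\alpha-1)\dr{\alpha}{pP+(1-p)Q}{Q}} \le (1-\lambda)^{1-\alpha}(1-p)^\alpha + \lambda^{1-\alpha} p^\alpha e^{(\alpha-1)\dr{\alpha}{P}{Q}}$, and then optimizes over $\lambda$ -- which is precisely the textbook derivation of the needed instance of Minkowski. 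So the two arguments rest on the same structural fact (subadditivity of the $L^\alpha(Q)$ norm applied to the mixture density); yours is shorter and names the mechanism, while the paper's is self-contained and its intermediate bound with a free parameter $\lambda$ makes the optimization explicit. Your subsequent factoring and the use of $\log(1+u)\le u$ coincide with the paper's algebra. One small bookkeeping item: for the second and third displayed lines you implicitly divide by $p$ and take $\log(1/p)$, so you should (as the paper does) dispose of the degenerate cases $p=0$ and $p=1$ separately, and note that if $\dr{\alpha}{P}{Q}=\infty$ the first bound is vacuous; these are trivial but worth a sentence.
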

    \begin{proof}
        We assume $0<p<1$, as the result is immediate otherwise.
        By Jensen's inequality and the convexity of $v \mapsto v^\alpha$, for all $x \in [0,\infty)$ and all $\lambda \in (0,1)$, \[(1-p+px)^\alpha = \left((1-\lambda)\cdot\frac{1-p}{1-\lambda} + \lambda \cdot \frac{px}{\lambda} \right)^\alpha \le (1-\lambda)\cdot\left(\frac{1-p}{1-\lambda}\right)^\alpha + \lambda \cdot \left(\frac{px}{\lambda} \right)^\alpha. \]
        Now, for all $\lambda \in (0,1)$, we have
        \begin{align*}
            e^{(\alpha-1)\dr{\alpha}{pP+(1-p)Q}{Q}} &= \ex{Y \gets Q}{\left( 1-p + p \frac{P(Y)}{Q(Y)} \right)^\alpha}\\
            &\le \ex{Y \gets Q}{(1-\lambda)\cdot\left(\frac{1-p}{1-\lambda}\right)^\alpha + \lambda \cdot \left(\frac{p}{\lambda} \cdot \frac{P(Y)}{Q(Y)} \right)^\alpha}\\
            &= (1-\lambda)^{1-\alpha} \cdot (1-p)^\alpha + \lambda^{1-\alpha} \cdot p^\alpha \cdot e^{(\alpha-1)\dr{\alpha}{P}{Q}}.
        \end{align*}
        We can choose $\lambda$ to minimize this expression. It turns out to be optimal to set $\lambda = \frac{1}{1+\frac{1-p}{p} \cdot e^{-(1-1/\alpha)\dr{\alpha}{P}{Q}}}$. Now we have
        \begin{align*}
            &e^{(\alpha-1)\dr{\alpha}{pP+(1-p)Q}{Q}}\\
            &\le (1-\lambda)^{1-\alpha} \cdot (1-p)^\alpha + \lambda^{1-\alpha} \cdot p^\alpha \cdot e^{(\alpha-1)\dr{\alpha}{P}{Q}}\\
            &= \left(1 + \frac{p}{1-p}e^{(1-1/\alpha)\dr{\alpha}{P}{Q}}\right)^{\alpha-1} \cdot (1-p)^\alpha \\&~~~+ \left(1+\frac{1-p}{p} \cdot e^{-(1-1/\alpha)\dr{\alpha}{P}{Q}}\right)^{\alpha-1} \cdot p^\alpha \cdot e^{(\alpha-1)\dr{\alpha}{P}{Q}}\\
            &= \left(1-p + p \cdot e^{(1-1/\alpha)\dr{\alpha}{P}{Q}}\right)^{\alpha-1} \cdot (1-p) \\&~~~+ \left(p+(1-p) \cdot e^{-(1-1/\alpha)\dr{\alpha}{P}{Q}}\right)^{\alpha-1} \cdot p \cdot e^{(\alpha-1)\dr{\alpha}{P}{Q}}\\
            &= \left(1-p + p \cdot e^{(1-1/\alpha)\dr{\alpha}{P}{Q}}\right)^{\alpha-1} \cdot (1-p) \\&~~~+ \left(p \cdot e^{(1-1/\alpha)\dr{\alpha}{P}{Q}}+(1-p)\right)^{\alpha-1} \cdot p \cdot e^{(1-1/\alpha)\dr{\alpha}{P}{Q}}\\
            &= \left(1-p + p \cdot e^{(1-1/\alpha)\dr{\alpha}{P}{Q}}\right)^{\alpha}.
        \end{align*}
        Rearranging yields the result.
    \end{proof}
    
    \begin{proof}[Proof of Theorem \ref{thm:rdp_subsampling_asymptotic}.]
        Fix neighbouring inputs $x,x'\in\mathcal{X}^n$. Fix some $\alpha \in (1,\omega)$ with $\omega = \min\left\{ \frac{\log(1/p)}{4\rho} , 1 + p^{-1/4}\right\} \ge 3 + 2 \frac{\log(1/\rho)}{\log(1/p)}$.
        
        Without loss of generality $x'$ is $x$ with some element removed. That is, we can fix some $i \in [n]$ such that $x'_i=\bot$ and $x'_j=x_j$ for all $j \ne i$.
        
        Let $P=M(x_U)|_{i \in U}$ and let $Q=M(x_U)|_{i \notin U}$. Then $M^U(x) = M(x_U) = p P + (1-p)Q$. Also $M(x')=Q$
        
        Thus we must prove that $\dr{\alpha}{pP+(1-p)Q}{Q} \le 10p^2\rho\alpha$ and $\dr{\alpha}{Q}{pP+(1-p)Q} \le 10p^2\rho\alpha$. Since $M$ is assumed to be $\rho$-zCDP, we have $\dr{\alpha'}{P}{Q} \le \rho\alpha'$ and $\dr{\alpha'}{Q}{P} \le \rho\alpha'$ for all $\alpha'\in(1,\infty)$.
        
        By Proposition \ref{prop:divergence_subsampling_analytic},
        \begin{align*}
            \dr{\alpha}{pP+(1-p)Q}{Q} &\le \alpha \cdot \frac{e}{2} \cdot p^2 \cdot \left( e^{\dr{2}{P}{Q}}-1\right) + p \cdot \left( (\alpha-1) \cdot p \cdot e^{\dr{\omega}{P}{Q}}\right)^{\omega-1}\\
            &\le \alpha \cdot \frac{e}{2} \cdot p^2 \cdot \left( e^{2\rho}-1\right) + p \cdot \left( (\alpha-1) \cdot p \cdot e^{\omega \rho}\right)^{\omega-1}\\
            &\le \alpha \cdot \frac{e}{2} \cdot p^2 \cdot \left( e^{2\rho}-1\right) + p \cdot \left( p^{-1/4} \cdot p \cdot p^{-1/4}\right)^{\omega-1} \tag{$\alpha\le\omega=\min\{1+p^{-1/4},\log(1/p)/4\rho\}$}\\
            &= \alpha \cdot \frac{e}{2} \cdot p^2 \cdot \left( e^{2\rho}-1\right) + p^{\frac{1+\omega}{2}}\\
            &\le \alpha \cdot \frac{e}{2} \cdot p^2 \cdot \left( e^{2\rho}-1\right) + p^2 \cdot \rho \tag{$\omega \ge 3+2 \log(1/\rho)/\log(1/p)$}\\
            &= \alpha \cdot p^2 \cdot \rho \cdot \left( \frac{e}{2} \cdot \frac{e^{2\rho}-1}{\rho} + \frac{1}{\alpha}\right)\\
            &\le \alpha \cdot p^2 \cdot \rho \cdot 10. \tag{$\rho \in (0,1)$ and $\alpha \in (1,\omega)$}
        \end{align*}
        Symmetrically, we have $\dr{\alpha}{pQ+(1-p)P}{P} \le \alpha \cdot p^2 \cdot \rho \cdot 10$.
        By Theorem \ref{thm:rdp_ss_flip}, \[\dr{\alpha}{Q}{pP+(1-p)Q} \le \max\left\{\begin{array}{c} \dr{\alpha}{pP+(1-p)Q}{Q}, \\ \dr{\alpha}{pQ+(1-p)P}{P} \end{array} \right\} \le \alpha \cdot p^2 \cdot \rho \cdot 10.\]
    \end{proof}
    
    \subsection{How to Use Privacy Amplification by Subsampling}\label{sec:use_subsampling}
    
    The most common use case for privacy amplification by subsampling is analyzing noisy stochastic gradient descent.
    That is, we repeatedly sample a small subset of the data, compute a function on this subset, and add Gaussian noise.
    To be precise, let $x \in \mathcal{X}^n$ be the private input.
    Iteratively, for $t=1,\cdots,T$, we pick some function $q_t : \mathcal{X}^n \to \mathbb{R}^d$ and randomly sample a subset $U_t \subset [n]$; then we reveal $\mathcal{N}(q_t(x_{U_t}),\sigma^2 I_d)$.
    
    The addition of Gaussian noise satisfies concentrated DP. Specifically, Lemma \ref{lem:gauss_cdp} shows that releasing $\mathcal{N}(q_t(x),\sigma^2 I_d)$ satisfies $\frac{\Delta_2^2}{2\sigma^2}$-zCDP, where $\Delta_2 = \sup_{x,x'\in\mathcal{X}^n \atop \text{neighbouring}} \|q_t(x)-q_t(x')\|_2$ is the sensitivity of $q_t$. We can thus apply Theorem \ref{thm:rdp_subsampling} to obtain a tight R\'enyi DP guarantee for $\mathcal{N}(q_t(x_{U_t}),\sigma^2 I_d)$, where $U_t$ is a Poisson sample. Finally, we can apply the composition property of R\'enyi DP (Lemma \ref{lem:rdp_composition}) over the $T$ rounds and we can convert this final R\'enyi DP guarantee to approximate DP using Proposition \ref{prop:cdp2adp}.
    This is how differentially private deep learning is analyzed in practice by libraries such as TensorFlow Privacy \cite{tfprivacy,mcmahan2018general}. 
    
    We can also obtain an asymptotic analysis: Theorem \ref{thm:rdp_subsampling_asymptotic} shows that $\mathcal{N}(q_t(x_{U_t}),\sigma^2 I_d)$ with $U_t \subset [n]$ including each element independently with probability $p$ satisfies $\left(\alpha,5\alpha p^2 \Delta_2^2/\sigma^2\right)$-RDP for all $\alpha \in (1,\omega)$.
    Composition over $T$ rounds yields $\left(\alpha,5\alpha T p^2 \Delta_2^2/\sigma^2\right)$-RDP for all $\alpha \in (1,\omega)$, which implies $(\varepsilon,\delta)$-DP for all $\delta>0$ and \[\varepsilon = O\left(\frac{\Delta_2}{\sigma} \cdot p \cdot \sqrt{T \cdot \log(1/\delta)}\right).\]
    This bound is directly comparable to the bound from Section \ref{sec:subsamp_composition}, which was derived by converting back and forth between concentrated DP and approximate DP. The difference is that here we have a $\sqrt{\log(1/\delta)}$ whereas there we had a $\log(T/\delta)$ term. This is the asymptotic improvement obtained by keeping the analysis within RDP.
    This asymptotic improvement also translates into a significant improvement in practice.
    
    We have only analyzed Poisson subsampling, where the size of the sample is random. (Specifically, it follows a binomial distribution.\footnote{A binomial distribution is often well approximated by a Poisson distribution, hence the name.}) Naturally, other subsampling schemes may arise in practice. In particular, a fixed size sample is common. As discussed in Section \ref{sec:addremovereplace}, this corresponds to neighbouring datasets allowing the replacement of one individual's data, rather than addition or removal. In terms of R\'enyi divergences, we must analyze $\dr{\alpha}{pP+(1-p)Q}{pP'+(1-p)Q}$, whereas addition and removal correspond to $\dr{\alpha}{pP+(1-p)Q}{Q}$ and $\dr{\alpha}{Q}{pP'+(1-p)Q}$. However, we can apply group privacy (part 7 of Lemma \ref{lem:rdp_properties}) to reduce the analysis to the case we have already analyzed: For all $\alpha'>\alpha$, we have
    \[\dr{\alpha}{pP\!+\!(1\!-\!p)Q}{pP'\!+\!(1\!-\!p)Q} \!\le\! \frac{\alpha'}{\alpha'\!-\!1}\cdot \dr{\alpha \cdot \frac{\alpha'\!-\!1}{\alpha'\!-\!\alpha}}{pP\!+\!(1\!-\!p)Q}{Q} + \dr{\alpha'}{Q}{pP'\!+\!(1\!-\!p)Q}.\]
    Using group privacy does not yield the tightest bounds, but it suffices to show that, up to small constant factors, sampling a fixed size subset is the same as Poisson subsampling.
    
\section{Historical Notes \& Further Reading}
    \paragraph{Composition.}
    Differential privacy (specifically, pure DP) was introduced by Dwork, McSherry, Nissim, and Smith \cite{dwork2006calibrating}.\footnote{The name ``differential privacy'' does not appear in the original paper. It is attributed to Michael Schroeder \cite{Dwork_McSherry_Nissim_Smith_2017} and first appeared in a talk by Dwork \cite{10.1007/11787006_1}.}
    The original paper gives a form of basic composition (Theorem \ref{thm:basic_composition}), but does not state it in full generality; rather it states a result specific to Laplace noise addition. Approximate DP was introduced by Dwork, Kenthapadi, McSherry, Mironov, and Naor \cite{dwork2006our} and this work gave a more general statement of the basic composition result, as well as an analysis of the Gaussian mechanism (although not as tight as Corollary \ref{cor:gauss_adp_exact}). The tight analysis of the Gaussian mechanism (Corollaries \ref{cor:gauss_adp_exact} \& \ref{cor:gauss_adp_exact_multi}) is due to Balle and Wang \cite{balle2018improving}.
    
    The advanced composition theorem (Theorem \ref{thm:advancedcomposition_approx}) was proved by Dwork, Rothblum, and Vadhan \cite{dwork2010boosting}.\footnote{The original proof showed that the $k$-fold composition of $(\varepsilon,\delta)$-DP algorithms satisfies $(\varepsilon',k\delta+\delta')$-DP with $\delta'>0$ arbitrary and $\varepsilon'=k\varepsilon(e^\varepsilon-1) + \varepsilon\cdot\sqrt{2k\log(1/\delta')}$. The first term $k\varepsilon(e^\varepsilon-1)$ is slightly worse than Theorem \ref{thm:advancedcomposition_approx}, which gives $\frac12 k \varepsilon^2$ instead.}
    The key concepts of privacy loss distributions and concentrated DP were implicit in this proof, but they were only made explicit in a separate paper by Dwork and Rothblum \cite{dwork2016concentrated}. Bun and Steinke \cite{bun2016concentrated} refined the notion of concentrated DP and presented the definition that we use here (Definition \ref{defn:cdp}).
    
    Kairouz, Oh, and Viswanath \cite{kairouz2015composition} proved an optimal composition theorem for approximate differential privacy.
    Specifically, the $k$-fold composition of $(\varepsilon,\delta)$-differential privacy satisfies $(\varepsilon',\delta')$-differential privacy if and only if
    \[
        \frac{1}{(1+e^\varepsilon)^k}\sum_{\ell=0}^k {k \choose \ell} \cdot e^{\ell\varepsilon} \cdot \max\left\{ 0 , 1 - e^{\varepsilon' - (2\ell-k)\varepsilon} \right\} \le 1 - \frac{1-\delta'}{(1-\delta)^k}.
    \]
    This expression is rather complex, but the proof is relatively intuitive. The key insight is that we can reduce the analysis to the $k$-fold composition of a specific worst-case $(\varepsilon,\delta)$-DP mechanism.\footnote{Specifically, Corollary \ref{cor:kov} shows how to construct such a worst-case mechanism.} With probability $\delta$, this mechanism has infinite privacy loss. With probability $(1-\delta) \cdot \frac{e^\varepsilon}{1+e^\varepsilon}$, it has privacy loss $\varepsilon$. And, with probability  $(1-\delta) \cdot \frac{1}{1+e^\varepsilon}$, it has privacy loss $-\varepsilon$. 
    The privacy loss of the $k$-fold composition is the convolution of $k$ of these privacy losses. Thus, with probability $1-(1-\delta)^k$ the privacy loss of the composition is infinite. Otherwise -- i.e., with probability $(1-\delta)^k$ -- the privacy loss has a shifted binomial distribution. Namely, for all $\ell \in [k]\cup\{0\}$, \[\pr{}{Z=\varepsilon\cdot\ell - \varepsilon \cdot (k-\ell)} = (1-\delta)^k \cdot {k \choose \ell} \cdot \left(\frac{e^\varepsilon}{e^\varepsilon+1}\right)^\ell \cdot \left(\frac{1}{e^\varepsilon+1}\right)^{k-\ell},\] where $Z$ is the privacy loss of the $k$-fold composition of the worst-case $(\varepsilon,\delta)$-DP mechanism.
    Applying Proposition \ref{prop:privloss_adp} to this distribution yields the expression for the optimal composition theorem.
    
    Kairouz, Oh, and Viswanath \cite{kairouz2015composition} also considered \emph{heterogeneous} optimal composition. That is, the composition of $k$ mechanisms where each mechanism $j \in [k]$ has a different $(\varepsilon_j,\delta_j)$-DP guarantee. However, the expression becomes more complicated. Intuitively, this is because the privacy loss distribution can be supported on $2^k$ points in the heterogeneous case, whereas, in the homogeneous case, it is supported on only $k+1$ points. Thus it takes exponential time to compute the privacy loss distribution. To be precise, Murtagh and Vadhan \cite{murtagh2016complexity} showed that exactly computing the optimal composition is \#P-complete, even if $\delta_j=0$ for each $j \in [k]$. However, Murtagh and Vadhan also showed that the optimal composition theorem could be approximated to arbitrary precision in polynomial time.
    
    Although these composition results \cite{kairouz2015composition,murtagh2016complexity} are optimal, they are limited in that they begin by assuming some $(\varepsilon_j,\delta_j)$-DP guarantees about the algorithms being composed. However, we usually know more about the algorithms being composed than simply these two parameters.
    For example, we may know that the algorithms being composed are Gaussian noise addition.
    Incorporating this additional information allows us to prove even better bounds than optimal composition. 
    This was the main impetus for the development of concentrated DP and R\'enyi DP, which we have discussed.
    
    A recent line of work \cite{meiserm18,pmlr-v108-koskela20b,pmlr-v130-koskela21a,koskela2021computing,NEURIPS2021_6097d8f3,dong2019gaussian,pmlr-v151-zhu22c,canonne2020discrete,ghazi2022faster,alghamdi2022saddle}
    has explored optimal composition guarantees whilst incorporating additional information about the mechanisms being composed.
    To make these computations efficient they consider the (discrete) Fourier transform of the privacy loss.\footnote{To apply a \emph{discrete} Fourier transform, we must first discretize the privacy loss distribution, e.g., by rounding it to a grid. The choice of discretization determines the tightness of the final guarantee, and the computational complexity of computing it.} That is, where concentrated DP and R\'enyi DP consider the moment generating function of the privacy loss $\ex{Z \gets \privloss{M(x)}{M(x')}}{\exp(tZ)}$, these works look at the characteristic function\\$\ex{Z \gets \privloss{M(x)}{M(x')}}{\exp(itZ)}$, where $i^2=-1$.
    These methods provide composition guarantees which are arbitrarily close to optimal, which are thus better than what is attainable via concentrated DP or R\'enyi DP.
    
    The optimality of advanced composition (Theorem \ref{thm:lowerbound}) is due to Bun, Ullman, and Vadhan \cite{BunUV14}. We present the analysis following Kamath and Ullman \cite{kamath2020primer}.
    
    The composition results we have presented all assume that the privacy parameters of the algorithms being composed (i.e., $(\varepsilon_j,\delta_j)$ for $j \in [k]$ in the language of Theorem \ref{thm:advancedcomposition_approx}) are fixed. It is natural to consider the setting where these parameters are chosen adaptively \cite{RogersRUV16} -- i.e., $(\varepsilon_j,\delta_j)$ could depend on the output of $M_{j-1}$. 
    For the most part, the composition results carry over seamlessly to the setting of adaptively-chosen privacy parameters. In particular, for Concentrated or R\'enyi DP, as long as the sum of the adaptively-chosen privacy parameters remains bounded, we attain privacy with that bound \cite{FeldmanZ21}.
    Another extension is ``concurrent composition'' \cite{vadhan2021concurrent}, which applies when an adversary may simultaneously access multiple interactive DP systems. Fortunately, the standard composition results readily extend to this setting \cite{vadhan2022concurrent,lyu2022composition}.

    \paragraph{Privacy Amplification by Subsampling.}
    The first explicit statement of differential privacy amplification by subsampling was in a blog post by Smith \cite{smith2009}, although it appeared implicitly earlier \cite{klnrs08} and the privacy effects of sampling on its own had also been studied \cite{10.1007/11818175_12}.
    
    For approximate DP, Balle, Barthe, and Gaborardi \cite{NEURIPS2018_3b5020bb} provide a thorough analysis of privacy amplification by subsampling (cf.~Theorem \ref{thm:subsampling_adp}). They present tight results for Poisson sampling (i.e., including each element independently), sampling a subset of a fixed size (without replacement), and also sampling with replacement, which means a person's data may appear \emph{multiple} times in the subsampled dataset.
    
    As discussed in Sections \ref{sec:subsamp_composition} and \ref{sec:use_subsampling}, subsampling arises in differentially private versions of stochastic gradient descent \cite{10.5555/1953048.2021036,bassily2014private}. Abadi, Chu, Goodfellow, McMahan, Mironov, Talwar, and Zhang \cite{abadi2016deep} applied this in the context of deep learning. To obtain better analyses, they developed the ``Moments Accountant'' -- i.e., R\'enyi DP (although the connection to R\'enyi divergences was only made later \cite{mironov2017renyi,bun2016concentrated}).
    
    Abadi et al.~\cite{abadi2016deep} obtained asymptotic R\'enyi DP bounds for the Poisson subsampled Gaussian mechanism, but they used numerical integration for their implementation. Mironov, Talwar, and Zhang \cite{mironov2019r} improved these asymptotic results and gave a better numerical method for exact computation (cf.~Theorem \ref{thm:rdp_subsampling}); our presentation in Section \ref{sec:sharp_rdp_subsampling} largely follows theirs.
    Bun, Dwork, Rothblum, and Steinke \cite{bdrs18} prove asymptotic R\'enyi DP bounds for Poisson subsampling applied to a concentrated DP mechanism (cf.~Theorem \ref{thm:rdp_subsampling_asymptotic}).
    Zhu and Wang \cite{pmlr-v97-zhu19c} gave generic R\'enyi DP bounds for Poisson subsampling.\footnote{Mironov, Talwar, and Zhang \cite{mironov2019r} and Zhu and Wang \cite{pmlr-v97-zhu19c} both provide analogs of Theorem \ref{thm:rdp_ss_flip}. However, to the best of our knowledge, Theorem \ref{thm:rdp_ss_flip} is novel.}
    
    Moving away from Poisson subsampling, Wang, Balle, and Kasiviswanathan \cite{pmlr-v89-wang19b} provide R\'enyi DP results for sampling a fixed-size set without replacement.
    
    Koskela, J\"alk\"o, and Honkela \cite{pmlr-v108-koskela20b} provide expressions for the privacy loss distribution of the subsampled Gaussian (under both Poisson subsampling and sampling a fixed size set with or without replacement) which can be numerically integrated to obtain optimal composition results.
    
    Closely related to privacy amplification by subsampling is privacy amplification by \emph{shuffling} \cite{bittau2017prochlo,ErlingssonFMRTT19,cheu2019distributed,BalleBGN19,FeldmanMT21,feldman2022stronger}.
    Privacy amplification by shuffling is usually presented in terms of local differential privacy \cite{klnrs08}. That is, there are $n$ individuals who independently generate random messages that satisfy local $\varepsilon$-DP. Those messages are then ``shuffled'' so that the potential adversary/attacker cannot identify which message originated from which individual. The additional randomness of the shuffling amplifies the privacy to $\left(O\left(\varepsilon \cdot \sqrt{\frac{\log(1/\delta)}{n}}\right), \delta\right)$-DP.
    
    Intuitively, shuffling is similar to subsampling with composition. Suppose we repeatedly sample one individual uniformly at random and perform an $\varepsilon$-DP computation on their data and the number of repetitions is equal to the number of individuals $n$. We can analyze this as subsampling a $1/n$ fraction (fixed size set) composed $n$ times. Privacy amplification by subsampling (Theorem \ref{thm:subsampling_adp}) says each repetition is $\varepsilon'$-DP for $\varepsilon'=\log(1+\frac1n(e^\varepsilon-1))=O(\varepsilon/n)$. Advanced composition (Theorem \ref{thm:advancedcomposition_pure}) over the $n$ repetitions yields $(\varepsilon'',\delta)$-DP for $\varepsilon'' = O(\sqrt{n\log(1/\delta)} \cdot \varepsilon') = O\left(\varepsilon \cdot \sqrt{\frac{\log(1/\delta)}{n}}\right)$.
    
    In contrast, for shuffling, we sample without replacement, so no individual is sampled more than once. This means the samples are not independent, so we cannot appeal to the subsampling plus composition analysis. Nevertheless, this intuition leads to the correct result.

\section*{Acknowledgements}
\addcontentsline{toc}{section}{Acknowledgements}

    We thank Ferdinando Fioretto for soliciting this chapter and we thank Cl\'ement Canonne, Sewoong Oh, Adam Sealfon, and Yu-Xiang Wang for comments on the draft.
    
\newpage

\addcontentsline{toc}{section}{References}
\printbibliography

\end{document}